\definecolor{ForestGreen}{rgb}{.13,.54,.13}
\theoremstyle{plain}
\newtheorem{theorem}{Theorem}%[section]
\newtheorem{lemma}{Lemma}%[theorem]
\newtheorem{assumption}{Assumption}%[theorem]
\newtheorem{observation}{Observation}%[theorem]
\newtheorem{proposition}{Proposition}%[theorem]
\newtheorem{obs}[theorem]{Observation}
\theoremstyle{definition}
\newtheorem{remark}{Remark}%[theorem]
\newtheorem{example}{Example}%[theorem]
\newtheorem{definition}{Definition}%[theorem]
\newcommand{\argmax}{\operatorname*{argmax}}
\newcommand{\argmin}{\operatorname*{argmin}}
\newcommand{\baseord}{base-orderable}
\newcommand{\best}{\textsc{Best}}
\newcommand{\matroid}[1][]{\mathcal{M}_{#1}}
\newcommand{\indSets}[1][]{\mathcal{I}_{#1}}
\newcommand{\feasibleAlloc}{\mathcal{F}}
\newcommand{\G}{\mathcal{G}}
\newcommand{\Category}[1]{C^{#1}}
\newcommand{\category}[2]{\Category{#2}_{#1}}
\newcommand{\capacity}[2]{k_{#1}^{#2}}
\newcommand{\Allocation}[1]{X_{#1}}
\newcommand{\allocation}[2]{\Allocation{#1}^{#2}}
\newcommand{\Valuation}[1]{v_{#1}}
\newcommand{\valuation}[2]{\Valuation{#1}({#2})}
\newcommand{\Fvaluation}[1]{\hat{v}_{#1}}
\newcommand{\fvaluation}[2]{\Fvaluation{#1}({#2})}
\newcommand{\vi}[1][]{\ifthenelse{\equal{#1}{}}{\Valuation{i}}{\valuation{i}{#1}}}
\newcommand{\fvi}[1][]{\ifthenelse{\equal{#1}{}}{\Fvaluation{i}}{\fvaluation{i}{#1}}}
\newcommand{\vone}[1][]{\ifthenelse{\equal{#1}{}}{\Valuation{1}}{\valuation{1}{#1}}}
\newcommand{\fvone}[1][]{\ifthenelse{\equal{#1}{}}{\Fvaluation{1}}{\fvaluation{1}{#1}}}
\newcommand{\vj}[1][]{\ifthenelse{\equal{#1}{}}{\Valuation{j}}{\valuation{j}{#1}}}
\newcommand{\fvj}[1][]{\ifthenelse{\equal{#1}{}}{\Fvaluation{j}}{\fvaluation{j}{#1}}}
\newcommand{\vtwo}[1][]{\ifthenelse{\equal{#1}{}}{\Valuation{2}}{\valuation{2}{#1}}}
\newcommand{\fvtwo}[1][]{\ifthenelse{\equal{#1}{}}{\Fvaluation{2}}{\fvaluation{2}{#1}}}
\newcommand{\vv}[1][]{\ifthenelse{\equal{#1}{}}{\Valuation{}}{\valuation{}{#1}}}
\newcommand{\xx}{\mathbf{\Allocation{}}}
\newcommand{\smartswap}{smart move}
\newcommand{\feasible}[2]{\best_{#1}(#2)}
\newcommand{\fefone}{F-EF1}
\newcommand{\Surplus}[2]{s_{#1}^{#2}}
\newcommand{\surplus}[3]{\Surplus{#1}{#2}(#3)}
\newcommand{\pe}[3][\xx]{\textsc{Envy}^+_{#1}(#2,#3)}
\newcommand{\crr}[4][]{\mathcal{R}({#2},{#3},{#4})_{#1}}
\renewcommand{\S}{{\normalfont Section }}
\newcommand{\newitem}{x^{\text{new}}}
\newcommand{\er}[1]{#1}
\newcommand{\citet}[1]{\citeA{#1}}
\newcommand{\citep}[1]{\cite{#1}}
\begin{document}

\title{On Fair Division under Heterogeneous Matroid Constraints}

\author{\name Amitay Dror 
\email amitaydr@gmail.com
\\
\addr Tel-Aviv University, Tel-Aviv, Israel
\\
\\
\name Michal Feldman 
\email mfeldman@tau.ac.il
\\
\addr Tel-Aviv University, Tel-Aviv, Israel
\\
\AND
\name Erel Segal-Halevi 
\email erelsgl@gmail.com
\\
\addr Ariel University, Ariel 40700, Israel
}

\maketitle
\footnotetext[1]{
A preliminary version appeared in the proceedings of AAAI 2021 \citep{dror2021fair}, without most of the proofs.
This version contains all omitted proofs, 
an uptodate literature survey,
a more general non-existence result in Subsection \ref{sub:matching-ef1},
a simpler proof of Theorem \ref{mnw_identical_vals},
and simpler algorithms and proofs in Section \ref{general_matroids}.
}

\begin{abstract}
We study fair allocation of indivisible goods among additive agents with feasibility constraints. 
In these settings, every agent is restricted to get a bundle among a specified set of feasible bundles.
Such scenarios have been of great interest to the AI community due to their applicability to real-world problems.
Following some impossibility results, we restrict attention to matroid feasibility constraints that capture natural scenarios, such as the allocation of shifts to medical doctors, and the allocation of conference papers to referees.

We focus on the common fairness notion of envy-freeness up to one good (EF1).
Previous algorithms for finding EF1 allocations are either restricted to agents with identical feasibility constraints, or allow free disposal of items.
An open problem is the existence of EF1 complete allocations among heterogeneous agents, where the heterogeneity is both in the agents' feasibility constraints and in their valuations.
In this work, we make progress on this problem by providing positive and negative results for different matroid and valuation types.
Among other results, we devise polynomial-time algorithms for finding EF1 allocations in the following settings: (i) $n$ agents with heterogeneous partition matroids and heterogeneous binary valuations, (ii) 2 agents with heterogeneous partition matroids and heterogeneous additive valuations, and (iii) at most 3 agents with heterogeneous binary valuations and identical base-orderable matroid constraints.
\end{abstract}

\section{Introduction}
\label{introduction}

Many real-life problems involve the fair allocation of indivisible 
items among agents with different preferences, and with constraints on the bundle that each agent may receive.
Examples include the allocation of course seats among students  
\citep{budish2017course} and the allocation of conference papers among referees
\citep{garg2010assigning}.

In general, different agents may have different constraints.
For example, consider the allocation of employees among departments of a company: one department has room for four project managers and two backend engineers, while another department may have room for three backend engineers and five data scientists.
Another example can be found in the way shifts are assigned among medical doctors, where every doctor has her own schedule limitations.   

Our goal is to devise algorithms that find fair allocations of indivisible items among agents with different preferences and different feasibility constraints.
Let us first explain what we mean by ``fair'' and what we mean by ``constraints''.

A classic notion of fairness is {\em envy freeness} (EF), which means that every agent (weakly) prefers  his or her bundle to that of any other agent. 
Since an EF allocation may not exist when items are indivisible, 
recent studies focus on its relaxation known as {\em EF1} --- envy free up to one item \citep{budish2011combinatorial} --- which means that every agent $i$ (weakly) prefers her bundle to any other agent $j$’s bundle, up to the removal of the best good (in $i$'s eyes) from agent $j$'s bundle (Section  \ref{Fairness_Notions}).
Without constraints, an EF1 allocation always exists and can be computed efficiently \citep{lipton2004approximately}. 

The constraints of an agent are represented by a set of bundles (subsets of items), that are considered \emph{feasible} for the agent. An allocation is feasible if it allocates to each agent a feasible bundle. We focus on the case when the feasible bundles are the \emph{independent sets of a matroid}. This means that (i) the set of feasible bundles is \emph{downward-closed} --- 
a subset of a feasible bundle is feasible; (ii) if a feasible bundle $S$ has fewer items than another feasible bundle $T$, then it is possible to extend $S$ to a larger feasible bundle by adding some item from $T$%
%(Section  \ref{sub:allocations} for formal definitions)
. This latter property of ``extension by one'' makes the notion of EF1 particularly appropriate for problems of allocation with matroid constraints. 
A special case of a matroid is a {\em partition matroid}.
With partition matroid constraints, the items are partitioned into a set of {\em categories}, each category has a \emph{capacity}, and the feasible bundles are the bundles in which the number of items from each category is at most the category capacity.

There are two approaches for handling feasibility constraints in fair allocation.
A first approach is to directly construct allocations that satisfy the constraints, i.e., guarantee that each agent receives a feasible bundle. 
This approach was taken recently by \citet{biswas2018fair,biswas2019matroid}, who study settings with \emph{additive valuations}, where every agent values each bundle at the sum of the values of its items.
%Assuming that at least one feasible allocation exists,  
They present efficient algorithms for computing EF1 allocations when agents have: (i) identical matroid constraints and identical valuations; or (ii) identical partition matroid constraints, even under heterogeneous valuations (Section  \ref{sub:common-tools}).
However, their algorithms do not handle different partition constraints, or identical matroid constraints with different valuations. 

A second approach is to capture the constraints within the valuation function. That is, the value of an agent for a bundle equals the value of the best feasible subset of that bundle.
This approach seamlessly addresses heterogeneity in both constraints and valuations.
The valuation functions constructed this way are no longer additive,
but are \emph{submodular} [\cite{oxley2006matroid}].
Recently, \citet{babaioff2020fair} and \citet{benabbou2020finding} have independently proved the existence of EF1 allocations in the special case in which agents have submodular valuations with \emph{binary marginals} (where adding an item to a bundle adds either $0$ or $1$ to its value). 
Such an allocation can be converted to a fair \emph{and feasible} allocation by giving each agent the best feasible subset of his/her allocated bundle, and disposing of the other items.

However, in some settings, such disposal of items may be impossible.
For example, when allocating shifts to medical doctors, 
if an allocation rule returns an infeasible allocation and shifts are disposed to make it feasible, the emergency-room might remain understaffed. 
A similar problem may occur when allocating papers to referees, where disposals may leave some papers without reviews.
The allocation rules developed in the above papers may not yield EF1 allocations when they are constrained to return feasible allocations (Section \ref{sub:partition-mnw}).
Thus, an open problem remains:

\vspace{0.05in} 
%\begin{quote}
\noindent \emph{\textbf{Open problem}: Given agents with \emph{different} additive valuations and \emph{different} matroid constraints, which settings admit a \emph{complete} and \emph{feasible} EF1 allocation?}
%\end{quote}

\subsection{Contribution and Techniques}
%This paper makes several steps towards solving this problem.

\vspace{0.05in}
\noindent {\bf Feasible envy.}
Before presenting our results, we shall discuss the EF1 notion in settings with heterogeneous constraints.
\iffalse % Another example
However, in the case of heterogeneous feasibility constraints, we need a more subtle notion.

Consider, for example, an allocation of 8 seats in advanced courses among two students. The first is a first-year student, who is allowed to take only 3 courses; the second is a second-year student, who is allowed to take 5 courses. allocating 3 courses to the first student and 5 courses to the second student would make the first one envious in a way that violates EF1, but this envy is not justified, as she could not get more than three courses. 

To capture this, we present the notion of {\em feasible EF1 (F-EF1)}, which is based on the notion of a best feasible subset.
\fi
Consider a setting with two agents, Alice and Bob, and 8 identical items of a single category, with capacities 3 and 5 for Alice and Bob, respectively. Every complete feasible allocation gives 3 items to Alice and 5 to Bob. 
Ignoring feasibility constraints, such an allocation is not EF1, since even after removing a single item from Bob's bundle, Alice values it at 4, which is greater than her value for her own bundle. 
However, a bundle of 4 items is infeasible for Alice, so her envy is not justified.

A more reasonable definition of envy in this setting is 
\emph{feasible envy}, where each agent compares her bundle to the best feasible subset of any other agent's bundle.
In the example above, the best feasible subset of Bob's bundle for Alice is worth 3. Thus, the allocation is \emph{feasibly-envy-free} (F-EF).

If Alice values one of Bob's items at 2, then the above allocation is not F-EF, since the best feasible subset of Bob's bundle for Alice is worth 4, but it is {\em \fefone}, as it becomes F-EF after removing this item from Bob's bundle (Section  \ref{Fairness_Notions} for a formal definition). Throughout the paper, we use the notion of F-EF1 under heterogeneous constraints. Note that \fefone{} is equivalent to EF1 when agents have identical constraints.

\vspace{0.05in}
\noindent {\bf Impossibilities.}
We present several impossibility results that direct us to the interesting domain of study.
First, if the partition of items into categories is different for different agents, an F-EF1 allocation may not exist, even for two agents with identical valuations (Section ~\ref{sub:different_categories}).
Second, going beyond matroid constraints to natural generalizations such as \emph{matroid intersection}, \emph{bipartite graph matching}, \emph{conflict graph} or \emph{budget constraints} is futile: even with two agents with identical binary valuations and identical non-matroid constraints, a complete F-EF1 allocation may not exist (Section ~\ref{sub:matching-ef1}).
Third, going beyond EF1 to the stronger notion of \emph{envy-free up to any good (EFX)} is also hopeless:  
even with two agents with identical valuations and identical uniform matroid constraints, an EFX allocation may not exist (Section  \ref{sub:uniform-efx}).

Based on these results, 
we focus on finding F-EF1 allocations
when the agents' constraints are represented by either:
(1) \emph{partition matroids} where all agents share the same partition of items into categories but may have different capacities;
or
(2) \emph{\baseord{} (BO) matroids} --- a wide class of matroids containing partition matroids ---  where all agents have identical matroid constraints but possibly different valuations.

\vspace{0.05in}
\noindent {\bf Algorithms (see Table~\ref{tab:results}).}
For {\em partition matroids}, the reason that the algorithms of \citet{lipton2004approximately} and \citet{biswas2018fair} fail for agents with different capacities is that they rely on \emph{cycle removal} in the envy graph. Informally (Section  \ref{sub:common-tools} for details), these algorithms
maintain a directed \emph{envy graph} in which each agent points to every agent he or she envies. The algorithm prioritizes the agents who are not envied, since giving an item to such agents keeps the allocation EF1. If there are no unenvied agents, the envy graph must contain a cycle, which is then removed by exchanging bundles along the cycle. 
However, when different agents in the cycle have different constraints, this exchange may not be feasible.
%(See Example \ref{exp:fail_biswas}).
Thus, our main challenge is to develop techniques that guarantee that no envy-cycles are created in the first place. We manage to do so in four settings of interest:
\begin{enumerate}
\item There are at most \emph{two categories} 
(Section  \ref{partition_warmup}).
\item All agents have \emph{identical} valuations (Section  \ref{sec:identical-valuations}).
\item All agents have \emph{binary valuations}  (Section  \ref{partition-binary}).
\item There are \emph{two agents} (Section  \ref{2_agents_sec}).
\end{enumerate}
Each setting is addressed by a different algorithm and using a different cycle-prevention technique.

Beyond partition matroids, we consider the much wider class of matroids, termed {\em base-orderable} (BO) matroids (see definition~\ref{def:bo}). This class contains partition matroids, laminar matroids (an extension of partition matroids
where the items in each category can be partitioned into sub-categories), transversal matroids, and other interesting matroid structures. In fact, it is conjectured by \citet{bonin2016infinite} that ``almost all matroids are base-orderable".
For this class we present algorithms for agents with identical constraints and different additive valuations in the following cases:
\begin{enumerate}
\setcounter{enumi}{4}
\item There are \emph{two agents} (Section \ref{general_matroids}).
\item There are \emph{three agents} with \emph{binary valuations} (Section \ref{general_matroids}).
\end{enumerate}
All our algorithms run in polynomial-time.

\newcommand{\gry}{\cellcolor[gray]{0.95}}
\newcommand{\parthline}{\cline{2-9}}
\begin{table*}
\centering
\footnotesize
 \begin{tabular}{|c|c|c|c|c|c|c|c|c|} 

\hline
\makecell{\footnotesize Matroid \\\footnotesize Type} &  \makecell{\footnotesize Complete\\ \footnotesize allocation}  & \makecell{\footnotesize Het.\\ \footnotesize constraints} & \makecell{\footnotesize Het.\\ \footnotesize valuations} &
PE &
 \makecell{\footnotesize Valuations} 
& \makecell{\footnotesize \# of \\\footnotesize Agents} & \footnotesize Remarks &\footnotesize  Reference \\
\hline\hline

% Section
%Uniform & V & V & V & General & $n$ & & Section \ref{uniform}\\
%\hline\hline

\multirow{5}{*}{\rotatebox[origin=c]{90}{Partition}}  
& \gry \checkmark  & \gry - & \gry \checkmark & \gry - & \gry General & \gry $n$ & \gry 
& \gry  \footnotesize  \color{ForestGreen} B\&B(2018) 
\\
\parthline
& \checkmark & \checkmark & \checkmark & - & General & $n$ & \makecell{\footnotesize $\leq 2$\\ \footnotesize categories} 
& \makecell{
\color{ForestGreen}
Section \ref{sec:2-categories}
} 
\\
\parthline
& \checkmark & \checkmark & - & \checkmark  & General & $n$ 
& 
& 
\color{ForestGreen}
Section \ref{sec:identical-valuations}
\\
\parthline
& \checkmark & \checkmark & \checkmark & \makecell{\checkmark Binary \\caps} & Binary & $n$ &
& 
\color{ForestGreen}
\S \ref{partition-binary} 
\\
\parthline
& \checkmark & \checkmark & \checkmark & - &General & $2$ & & 
\color{ForestGreen}  
\S \ref{2_agents_sec} 
\\
\parthline
& \checkmark & 
\makecell{\checkmark,   Het.  \\ categories }& 
\makecell{Even\\identical}
& -
& \makecell{Even\\binary} & $2$ &
& 
   \makecell{
   \color{red}
   No F-EF1;\\ 
   \color{red}
   Ex.
   \ref{exp:different_categories} 
   }
   \\
\parthline
& \checkmark & 
\makecell{Even\\identical}
& 
\makecell{Even\\identical}
& -
  & \makecell{Even\\binary} & $2$ &
 & 
   \makecell{
   \color{red}
   No EFX;\\ 
   \color{red}
   Ex.
   \ref{exp:efx} 
   }
   \\   
\hline
\hline

\multirow{4}{*}{\rotatebox[origin=c]{90}{Beyond Partition    }} 
 & \gry \checkmark & \gry - & \gry - & \gry - 
 & \gry General & \gry $n$ & \gry \makecell{Laminar\\matroid} &
 \color{ForestGreen} 
 \gry \footnotesize B\&B(2018)
 \\
%\parthline
\hhline{|~*7{>{\arrayrulecolor{black}}-}|}

%\rowcolor{lightgray}
 & \gry  - & \gry \checkmark & \gry \checkmark 
& \gry \checkmark  & \gry Binary & \gry $n$ &
%%%% If this next 2 lines won't compile, you can comment them out and un-comment the line after them. it will then compile but create a weird effect on the color of the cell
\multicolumn{1}{|>{\columncolor[gray]{0.95}}c|}{\makecell{General\\matroid}}
&
 \gry 
 \makecell{
 \color{ForestGreen}
  *** }  
  \\
\parthline

 &  \checkmark & - & \checkmark & \checkmark & Binary & 3 & \makecell{BO Matroid} &
\color{ForestGreen}
Section  \ref{sub:matroid-3agents-binary}
\\
\parthline
 &  \checkmark & - & \checkmark & - & General & 2 & \makecell{BO Matroid} &  
\color{ForestGreen} 
Section  \ref{sub:matroid-2agents-additive} 
\\
\parthline
&  \checkmark & 
\makecell{Even\\identical}
&
\makecell{Even\\identical}
& 
- &
\makecell{Even\\binary} & $2$ &
\makecell{\footnotesize Matroid \\ \footnotesize intersection;\\ \footnotesize Conflict graph;\\ \footnotesize Budget}
 &  
\makecell{
\color{red}
No F-EF1;\\ 
\color{red}
Ex.
\ref{exp:intersection},
\\
\color{red}
\ref{exp:conflict},
\ref{exp:budget}.
}
\\
 \hline
\end{tabular}
\caption{
\label{tab:results}
A summary of our results in the context of previous results. All results are for additive valuations. Gray lines represent previous results. PE refers to outcomes that are also Pareto-efficient. BO refers to \baseord{} matroids (Section  \ref{general_matroids}).
B\&B(2018) is \citet{biswas2018fair}.
The row marked by *** 
follows from 
\citet{benabbou2020finding,babaioff2020fair} (Section  1).
Green text denotes a positive result;
red text denotes an impossibility result.
}
\end{table*}

\subsection{Related Work}
\label{related_work}

A recent survey of constraints in fair division is given by 
\citet{suksompong2021constraints}.
Below we focus on constraints in allocation of indivisible items.

\paragraph{Capacity constraints.}
In many settings, there are lower bounds as well as upper bounds on the total number of items allocated to an agent. This is particularly relevant to the problem of \emph{assigning conference papers to referees} \citep{garg2010assigning,long2013good,lian2017conference}. 
The constraints may be different for each agent, but there is only one category of items. 
The same is true in the setting studied by
\citet{ferraioli2014regular}, where each agent must receive exactly $k$ items.
A \emph{balanced allocation} is an allocation in which all agents receive the same number of items up to at most a single item. The round-robin algorithm finds a balanced EF1 allocation for any number of agents with additive valuations. 
An algorithm by \citet{kyropoulou2020almost} finds a balanced EF1 allocation for two agents with general monotone valuations. It is open whether this result extends to three or more agents.
\citet{jojic2021splitting} prove the existence of \er{a balanced allocation in which all agents assign approximately the same value to all bundles (an ``almost consensus'' allocation).}
Note that 
\er{constraints imposing a lower bound on the number of items allocated to an agent, such as balancedness constraints, are not  matroid constraints, since they are not downward-closed. }

Capacity constraints are common also in matching markets such as doctors--hospitals and workers--firms; see \citet{klaus2016matching} for a recent survey. In these settings, the preferences are usually represented by ordinal rankings rather than by utility functions, and the common design goals are Pareto efficiency, stability and strategy-proofness rather than fairness.

\citet{gafni2021unified} study fair allocation in a related setting in which items may have multiple copies.

\paragraph{Partition constraints.}
Fair allocation of items of different categories is studied by
\citet{mackin2016allocating} and \citet{sikdar2017mechanism}. Each category contains $n$ items, and each agent must receive exactly one item of each category.
\citet{sikdar2019mechanism} consider an exchange market in which each agent holds multiple items of each category and should receive a bundle with exactly the same number of items of each category.
The above  works focus on designing strategyproof mechanisms.
\citet{nyman2020fair} study a similar setting (they call the categories ``houses'' and the objects ``rooms''), but with monetary transfers (which they call ``rent'').

Following the paper by \citet{biswas2018fair} focusing on EF1 fairness, 
\citet{hummel2021guaranteeing} study partition matroid constraints in combination with a different fairness notion --- the \emph{maximin-share}. Their algorithm attains a $1/2$-factor approximation to this fairness notion.

\paragraph{Matroid constraints.}
\citet{gourves2013protocol} study a setting with a single matroid, where the goal includes building a base of the matroid and providing worst case guarantees on the agents' utilities. 
\citet{gourves2014near} and \citet{gourves2019maximin} require the \emph{union} of bundles allocated to all agents to be an independent set of the matroid. This by design requires to leave some items unallocated, which is not allowed in our setting.

\paragraph{Budget constraints.}
\emph{Budget constraints} (also called \emph{knapsack constraints}) assume that each item has a cost, each agent has a budget, and the feasible bundles for an agent are the bundles with total cost at most the budget. 
In this setting, 
\citet{wu2021budget} show that a 1/4-factor EF1 allocation exists.
\citet{gan2021approximately} show that a 1/2-factor EF1 allocation exists when the valuations are identical,
and an EF1 allocation exists among two agents with the same budget.

\paragraph{Connectivity constraints.}
\citet{barrera2015discrete}, \citet{bilo2018almost}, and \citet{suksompong2019fairly}
%\citet{barrera2015discrete,bilo2018almost,suksompong2019fairly,segal2019democratic}
study another kind of constraint in fair allocation. 
The goods are arranged on a line, and each agent must receive a connected subset of the line.
%, as when each item is a house, and each agent should get a connected part of the street. 
\citet{bouveret2017fair} and \citet{bei2019connectivity} study a more general setting in which the goods are arranged on a general graph, and each agent must receive a connected subgraph. 
Note that these are not matroid constraints.

\paragraph{No-conflict constraints.}
\citet{lili2021fair} study fair allocation with \emph{scheduling constraints}, where each item is an interval, and the feasible bundles are the sets of non-overlapping intervals.
\citet{hummel2021fair} study fair allocation in a more general setting in which there is a \emph{conflict graph} $G$, and the feasible sets are the sets of non-adjacent vertices in $G$. 
%Note that these are \emph{not} matroid constraints, despite the similar name ``independent set''.

\paragraph{Downward-closed constraints.}
\citet{livetta2021fair} study fair allocation with  \emph{downward-closed constraints}, which include matroid, budget, and no-conflict constraints as special cases. For this very general setting, they present an algorithm that approximates the maximin share.

\paragraph{Non-additive valuations.}
As explained in the introduction, fair allocation with constraints is closely related (though not equivalent) to \emph{fair allocation with non-additive valuations}. 
This problem has attracted considerable attention recently. 
\citet{bei2017earning} and \citet{anari2018nash} study allocation of multi-unit item-types when the valuation is additive between types but concave (i.e., has decreasing marginal returns) between units of the same type. They give a 2-approximation to the maximum Nash welfare (the product of utilities).
\citet{garg2018approximating} study budget-additive valuations, where each agent has a utility-cap and values each bundle as the minimum between the sum of item values and the utility-cap. They give a 2.404-approximation to the MNW.

Particularly relevant to matroid constraints are the 
\emph{submodular valuations with binary marginals}, where adding an item to a bundle increases the bundle value by either $0$ or $1$. 
These valuations are equivalent to \emph{matroid rank functions} --- functions that evaluate a bundle by the size of the largest independent set of a certain matroid contained in that bundle. 
In this setting, \citet{benabbou2020finding} and \citet{babaioff2020fair} present allocation mechanisms that are Pareto-efficient, EF1, EFX and strategyproof.  \citet{barman2020existence} present a polynomial-time algorithm that finds an allocation satisfying maximin-share fairness.

\iffalse
\paragraph{Binary additive valuations.}
Fair allocation with \emph{binary additive valuations} (without constraints) has been studied recently, due to its practical applications \citep{aleksandrov2015online}. 
With binary valuations, better fairness guarantees 
\citep{bouveret2016characterizing,barman2017groupwise,amanatidis2020maximum}
%\citep{bouveret2016characterizing,barman2017groupwise,segal2019democratic}
and better strategic properties \citep{halpern2020fair} can be attained.
While, in general, the MNW solution is NP-hard, with binary valuations it can be computed efficiently \citep{darmann2015maximizing,barman2018greedy}.
\fi

%%%%%%%%%%%%%%%%%%%%%%%%%%%%%%%%%%%%%%%%%%%%%%%
\section{Model and Preliminaries} 
\label{model_and_preliminaries}
%%%%%%%%%%%%%%%%%%%%%%%%%%%%%%%%%%%%%%%%%%%%%%%

\subsection{Allocations and Constraints}
\label{sub:allocations}
We consider settings where a set $M$ of $m$ items should be allocated among a set $N$ of $n$ agents. 
An allocation is denoted by $\xx=(\Allocation{1},\dots,\Allocation{n})$, where $\Allocation{i} \subseteq M$ is the bundle given to agent $i$, and $\Allocation{i}\cap \Allocation{j} = \emptyset$ for all $i\neq j \in N$. 
An allocation is  \emph{complete} 
%\erel{Such terms should not be written in math environment, and not in upper-case, It should be \emph{complete}}
if $\biguplus_{i\in N}{\Allocation{i}} = M$.
Throughout, we use $[n]$ to denote the set $\{1,\ldots,n\}$.

We consider constrained settings, where every agent $i$ is associated with a matroid $\matroid[i]=(M,\indSets[i])$ that specifies the feasible bundles for $i$. 

\begin{definition}
\label{def:matroid}
A {\em matroid} is a pair $\matroid=(M,\indSets)$,
where $M$ is a set of items and $\indSets \subseteq 2^M$ is a nonempty set of {\em independent sets} satisfying the following  properties: 

(i) {\em Downward-closed}: 
$S \subset T$ and $T \in \indSets$ implies
$S \in \indSets$;

(ii) {\em Augmentation}: For every $S,T\in \indSets$, if $|S|<|T|$, then $S\cup\{g\}\in \indSets$ for some $g\in T$.

A \emph{base} of  $\matroid$ is a maximal independent set in $\matroid$.
\end{definition}

A special case of a matroid is a {\em partition matroid}:
\begin{definition} (partition matroid)
	A matroid $\matroid[i] = (M,\indSets[i])$ is a {\em partition matroid} if
	there exists a partition of $M$ into \emph{categories} $C_i=\{\category{i}{1},\ldots,\category{i}{\ell_i}\}$ for some $\ell_i \leq m$, 
	and a corresponding vector of {\em capacities} $\capacity{i}{1},\ldots,\capacity{i}{\ell_i}$,
	such that the collection of independent sets is
$$
\indSets[i] = \{ S\subseteq M : |S\cap \category{i}{h}| \leq \capacity{i}{h} \text{ for every } h\in[\ell_i]\}.
$$
\end{definition}

Given an allocation $\xx$, we denote by $\xx_i^{h}$ the items from category $\category{i}{h}$ given to agent $i$ in $\xx$.

A special case of a partition matroid is a {\em uniform matroid}, which is a partition matroid with a single category.

%That is, removing all independent sets larger than $t$ items.

%\begin{definition} (uniform matroid)
%	A matroid $\matroid[i] = (M,\indSets[i])$ is a {\em uniform matroid} if $\exists \capacity{i}{}$ s.t. 
%	$\indSets[i]=\{S\subseteq M : |S|\leq \capacity{i}{}\}$.% for some capacity $\capacity{i}{}$.
%\end{definition}

%where there exists some capacity $\capacity{i}{}$ such that $\indSets[i]=\{S\subseteq M : |S|\leq \capacity{i}{}\}$.
%\amitay{do we need both definition and explanation?}
\vspace{0.05in}
%\noindent {\bf Feasible allocations.} An allocation is feasible if every agent receives a feasible bundle according to her matroid, and all items are allocated: 
\begin{definition}(feasible allocation)
	An allocation $\xx{}$ is said to be {\em feasible} if:
	(i) it is individually feasible: $\Allocation{i} \in \indSets[i]$ for every agent $i$, and
	(ii) it is complete: $\biguplus_i{\Allocation{i}}=M$.
\end{definition}
Let $\feasibleAlloc$ denote the set of all feasible allocations. 
Throughout this paper we consider only instances that admit a feasible allocation:
\begin{assumption}
\label{asm:feasible}
All instances considered in this paper admit a feasible allocation; i.e., $\feasibleAlloc\neq\emptyset$.
For partition matroids, feasibility means that for every category $\Category{h}$, the sum of agent capacities for this category is at least $|\Category{h}|$.
\end{assumption}

An instance is said to have {\em identical matroids} if all agents have the same matroid feasibility constraints. I.e., $\indSets[i] = \indSets[j]$ for all $i,j \in N$. 

An instance with partition matroids is said to have {\em identical categories} if all the agents have the same partition into categories. I.e., $\ell_i = \ell_j = \ell$ for every $i,j \in N$, and $\category{i}{h}=\category{j}{h}=\Category{h}$ for every $h \in \ell$. The capacities, however, may be different. 

\subsection{Valuations and Fairness Notions} 
\label{Fairness_Notions}

Every agent $i$ is associated with an {\em additive} valuation function $v_i:2^M \rightarrow R^+$, which assigns a positive real value to every set $S \subseteq M$. Additivity means that there exist $m$ values $v_i(1),\ldots,v_i(m)$ such that $v_i(S)= \sum_{j\in S}v_i(j)$. 
%For ease of notation, we write $\vi[j]$ for the value of a singleton $j$. 
An additive valuation $\Valuation{i}$ is called {\em binary} if $\valuation{i}{j}\in\{0,1\}$ for every $i \in N ,j \in M$. An allocation $\xx$ is \emph{Social Welfare Maximazing} (SWM) if $\xx \in \argmax_{\xx'\in\feasibleAlloc}{\sum_{i\in[n]}\vi[\Allocation{i}']}$.

\begin{definition}[envy and envy freeness]
Given an allocation $\xx{}$, agent $i$ $\emph{envies}$ agent $j$ if $\valuation{i}{\Allocation{i}}<\valuation{i}{\Allocation{j}}$. 
$\xx{}$ is {\em envy free} if no agent envies another agent. 
%	\item $\xx{}$ is $\mathbf{EF1}$ iff: $\forall{i,j\in N} \ if \ \Allocation{j} \neq \emptyset, \ \exists{g\in \Allocation{j}} \ s.t. \ \valuation{i}{\Allocation{i}}. \geq \valuation{i}{\Allocation{j}\setminus\{g\}}$
%	\item $\xx{}$ is $\mathbf{EFX}$ iff $\forall{i,j\in N} \ \forall{g\in \Allocation{j}} \ \valuation{i}{\Allocation{i}}. \geq \valuation{i}{\Allocation{j}\setminus\{g\}}$
\end{definition}

\begin{definition}[EF1]\citep{budish2011combinatorial}
An allocation $\xx{}$ is \emph{envy free up to one good (EF1)} if for every $i,j \in N$, 
there exists a subset $Y \subseteq  \Allocation{j}$ with $|Y|\leq 1$, such that $\valuation{i}{\Allocation{i}} \geq \valuation{i}{\Allocation{j}\setminus Y}$.
% EREL: Changed the definition to not use 'if'.
\end{definition}

\begin{definition}%[best feasible subset]
A {\em best feasible subset} of a set $T$ for agent $i$, denoted $\feasible{i}{T}$ is any subset in 
$$
\argmax_{S\subseteq T, \ S\in \indSets[i]} \valuation{i}{S}.
$$
\end{definition}

%\michal{I commented out the decomposition, we may want to get it back later.}
%In the case of partition matroids, the best feasible set decomposes into the different categories; i.e.,
%\begin{equation}
%\label{eq:composability}
%\feasible{i}{S}=\biguplus_{h\in[\ell_i]}\feasible{i}{S^h}, \mbox{ where } S^h=S\cap \category{i}{h}.
%\end{equation} 

\begin{definition}[feasible valuation]
The {\em feasible valuation} of agent $i$ for a set $T$ is
$
\fvaluation{i}{T} := \valuation{i}{\feasible{i}{T}}
$.
\end{definition}
Note that $\fvaluation{i}{T}$ is well-defined even though $\feasible{i}{T}$ may not be uniquely determined.

%Now we need some new notions, more relevant for the case of different constraints for different agents:
\begin{definition}
Given a feasible allocation $\xx{}$:
\begin{itemize}
	\item Agent $i$ \emph{F-envies} agent $j$ iff $\fvaluation{i}{\Allocation{i}}<\fvaluation{i}{\Allocation{j}}$.
%Note that when $\xx$ is a feasible allocation this is equivalent to $\valuation{i}{\Allocation{i}}<\fvaluation{i}{\Allocation{j}}$
	\item $\xx{}$ is \emph{F-EF} (feasible-EF) if no agent F-envies another one.
	\item $\xx{}$ is \emph{\fefone} if for every $i,j \in N$:
there exists a subset $Y \subseteq  \Allocation{j}$ with $|Y|\leq 1$, such that $\fvaluation{i}{\Allocation{i}} \geq \fvaluation{i}{\Allocation{j}\setminus Y}$.
% EREL: Changed the definition to not use 'if'.
\end{itemize}
\end{definition}
For further discussion of the \fefone{} criterion, and an alternative (weaker) definition, see Appendix~\ref{sub:fefone}.
~~
Another useful notation is \emph{positive feasible envy}, which is the amount by which an agent F-envies another agent:
\begin{definition}
The \emph{positive feasible envy} of agent $i$ towards $j$ in allocation $\xx$ is:
$$
\pe{i}{j} := \max(0, \fvi[{\Allocation{j}}]-\fvi[{\Allocation{i}}]).
$$
\end{definition}

\begin{definition}
The \emph{envy graph} of an allocation $\xx$, $\G(\xx)$, is a directed graph where the nodes represent the agents, and there is an edge from agent $i$ to agent $j$ iff $\vi[\Allocation{i}] < \vi[\Allocation{j}]$.
The \emph{feasible envy graph} is defined analogously based on the feasible-envy. 
\end{definition}

%\erel{For symmetry, maybe it is better to use
%$\fvaluation{i}{\Allocation{i}}<\fvaluation{i}{\Allocation{j}}$,
%and explain that it is equivalent to 
%$\valuation{i}{\Allocation{i}}<\fvaluation{i}{\Allocation{j}}$ when the allocation $X$ is feasible.}

%\begin{definition}[Feasible Envy-Free]
%Allocation $\xx{}$ is  $\mathbf{F-EF}$ (Feasible Envy-Free) if no agent F-Envies another agent.
%\end{definition}

%\begin{definition}[$\fefone$] 
%Allocation $\xx{}$ is $\mathbf{\fefone}$ iff:
%\[
%\forall{i,j\in N} \ if \ \Allocation{j} \neq \emptyset, \ \exists{g\in \Allocation{j}} \ s.t. \ \valuation{i}{\Allocation{i}} \geq \fvaluation{i}{\Allocation{j}\setminus\{g\}}
%\]
%\end{definition}

%\amitay{TODO remove from appendix}
\subsection{Common Tools and Techniques}
\label{sub:common-tools}
Below we review the most common methods for finding an EF1 allocation.

\paragraph{Envy cycle elimination.}
The first method for attaining an EF1 allocation (in unconstrained setting, even with arbitrary monotone valuations) is due to \citet{lipton2004approximately}.

The \emph{envy cycles elimination} algorithm works as follows. Start with the empty allocation. 
Then, as long as there is an unallocated item: (i) choose an agent that is a source in the envy graph (not envied by another agent) and give her an arbitrary unallocated item, (ii) reconstruct the envy graph $\G$ corresponding to the new allocation, (iii) as long as $\G$ contains cycles, choose an arbitrary cycle, and shift the bundles along the cycle. This increases the total value, thus this process must end with a cycle-free graph.

%\michal{Move this elsewhere:}

\paragraph{Max Nash welfare.}
The \emph{Nash social welfare} (NW) of an allocation $\xx$ is the geometric mean of the agents' values: $NW=(\prod_{i\in[n]}\vi[\Allocation{i}])^\frac{1}{n}$.
An allocation is {\em max Nash welfare (MNW)} if it maximizes the NW among all feasible allocations.
\citet{caragiannis2019unreasonable} showed that in unconstrained settings with additive valuations, every MNW allocation is EF1.

\paragraph{Round robin (RR).}
RR works as follows. Given a fixed order $\sigma$ over the agents, as long as there is an unallocated item, the next agent according to $\sigma$ (where the next agent of agent $n$ is agent $1$) chooses an item she values most among the unallocated items.
Simple as it might be, this algorithm results in an EF1 allocation in unconstrained settings with additive valuations \citep{caragiannis2019unreasonable}
%\michal{is this the right citation? shouldn't it be budish?}
% EREL: It was probably known before, but not mentioned in previous papers.

\paragraph{Per category RR + envy cycle elimination.}
This algorithm (Algorithm \ref{alg:pc-rr}) was introduced by \citet{biswas2018fair} for finding an EF1 allocation in settings with homogeneous partition constraints. It resolves the categories sequentially, resolving each one by RR followed by envy cycle elimination, where the order over the agents is determined by a topological order in the obtained envy graph. 

\begin{algorithm}[h]
	\SetAlgoLined
	\textbf{initialize:} \\
	$\sigma \leftarrow$ an arbitrary order over the agents.\\
	$\forall i\in[n] \ \Allocation{i}\leftarrow \emptyset$\\
	\For{every category $h$}{
		Run round robin with $\Category{h}, \sigma$;\\
		Let $ \allocation{i}{h}$ be the resulting allocation for agent $i$;\\
		$\forall i\in[n] \ \Allocation{i}\leftarrow \Allocation{i}\cup \allocation{i}{h}$;\\
		Draw envy graph for current allocation;\\
		Remove cycles from the graph, switching bundles along the cycles;\\
		Set $\sigma$ to be a topological order of the graph;
	}
	\caption{Per-Category Round Robin, \cite{biswas2018fair}
		\label{alg:pc-rr}
	}
\end{algorithm}

\subsection{Pareto Efficiency}

\begin{definition}
An allocation $\xx$ is \emph{Pareto-efficient} if there is no other allocation $\xx'$ such that all agents weakly prefer $\xx'$ over $\xx$ and at least one agent strictly prefers $\xx'$.
\end{definition}

We observe that, with additive identical valuations,
%when considering the space of feasible allocations,
every complete feasible allocation is Pareto-efficient.

\begin{obs}
\label{obs:pareto}
For any (possibly constrained) setting with identical additive valuations, 
every complete feasible 
allocation is Pareto efficient
%(with respect to all feasible allocations)
.
\end{obs}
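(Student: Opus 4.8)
The plan is to prove the contrapositive: assume some complete feasible allocation $\xx$ is \emph{not} Pareto-efficient, and derive a contradiction. So suppose there is an allocation $\xx'$ that Pareto-dominates $\xx$, meaning every agent weakly prefers $\xx'$ and at least one agent strictly prefers it. Here the key simplification is that all agents share a single valuation function, which I will write simply as $\vv$ (dropping the agent subscript, since $\Valuation{i} = \Valuation{j}$ for all $i, j$). The Pareto-domination conditions then read $\valuation{}{\Allocation{i}'} \geq \valuation{}{\Allocation{i}}$ for every agent $i$, with strict inequality for at least one agent.

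First I would sum these inequalities over all agents to compare the total social welfare of the two allocations. Because at least one inequality is strict and all are weak, summing gives
\[
\sum_{i \in N} \valuation{}{\Allocation{i}'} \;>\; \sum_{i \in N} \valuation{}{\Allocation{i}}.
\]
Next I would invoke \emph{additivity} together with the fact that both $\xx$ and $\xx'$ are \emph{complete} feasible allocations. Completeness means $\biguplus_{i\in N}\Allocation{i} = M = \biguplus_{i\in N}\Allocation{i}'$, and additivity of the common valuation $\vv$ means that for any allocation the total value of the bundles equals the value of their disjoint union. Hence both sides of the displayed inequality collapse to $\valuation{}{M}$:
\[
\sum_{i \in N} \valuation{}{\Allocation{i}} = \valuation{}{M} = \sum_{i \in N} \valuation{}{\Allocation{i}'}.
\]
This directly contradicts the strict inequality obtained above, completing the argument.

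The step doing the real work is the combination of completeness and additivity that forces both allocations to have the identical total social welfare $\valuation{}{M}$; this is what makes a strict Pareto improvement impossible under identical valuations. I expect the only subtlety — hardly an obstacle — to be keeping the bookkeeping honest: one must be careful that $\xx'$ is also required to be \emph{complete} (otherwise its total value could legitimately be smaller and the summation argument would break), and that the domination is with respect to the \emph{same} shared valuation for all agents, which is exactly the hypothesis. Note that the claim is stated for any constrained setting, but the matroid or other feasibility constraints play no role whatsoever in the proof: the only facts used are additivity, identical valuations, and completeness of both allocations.
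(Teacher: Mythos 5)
Your proof is correct and takes essentially the same route as the paper: both arguments rest on the single observation that completeness plus additivity of the shared valuation force every complete feasible allocation to have total welfare exactly $v(M)$, so a strict Pareto improvement for one agent necessarily costs another agent. The only cosmetic differences are that the paper states this directly (the sum of values is a constant, so a gain forces a loss) rather than via your contrapositive framing, and that it first notes $\hat{v}_i(X_i) = v(X_i)$ for feasible bundles before summing, which your argument handles implicitly by working with $v$ throughout.
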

\begin{proof}
Let $\vv$ denote the common valuation of all agents.
Let $\xx$ be a complete feasible allocation.
Feasibility implies that for every agent $i\in N$: $\fvi(\Allocation{i})=\vv[\Allocation{i}]$.
Therefore, 
$\sum_{i\in N} \fvi(\Allocation{i})=\sum_{i\in N}\vv[\Allocation{i}]$.
By completeness and additivity, the latter sum equals $\vv(M)$, which is a constant independent of $\xx$. 
So the sum of feasible values is the same in all complete feasible allocations. 
Therefore, if any other allocation gives a higher value to some agent, it must give a lower value to some other agent. 
\iffalse % OLD PROOF
Assume towards contradiction that there exists a feasible allocation $\xx$ that is not Pareto efficient, and let $\xx'$ be a feasible allocation such that all agents weakly prefer $\xx'$ over $\xx$ and at least one agent strictly prefers it.  Since both allocations are feasible, for every agent $j$: $\fvj[\Allocation{j}]=\vv[\Allocation{j}]$ and $\fvj[\Allocation{j}']=\vv[\Allocation{j}']$. Let $i$ be an agent who strictly prefers $\xx'$ over $\xx$. Then, $\vv[\Allocation{i}']>\vv[\Allocation{i}]$. From additivity and the fact that all items are allocated in both allocations (recall that any feasible allocation is complete, by definition), it follows that $\sum_{j\in N}\vv[\Allocation{j}] = \sum_{j\in N}\vv[\Allocation{j}']$. That is, $\sum_{j\neq i}\vv[\Allocation{j}] + \vv[\Allocation{i}] = \sum_{j\neq i}\vv[\Allocation{j}'] + \vv[\Allocation{i}']$. But since $\vv[\Allocation{i}']>\vv[\Allocation{i}]$, there must be some agent $k$ such that $\vv[\Allocation{k}']<\vv[\Allocation{k}]$ in contradiction to Pareto efficiency.
\fi
\end{proof}

\section{Impossibility Results}
\label{sec:impossibility}

In this section we give some intuition for why previous approaches fail in the case of heterogeneous constraints, and provide impossibility results for settings beyond the ones considered in this paper.

All examples in this section involve two agents. 
Every item is denoted by a pair of values, where the first and second values correspond to the value of the first and second agents, respectively. For example, an item $(0,1)$, or simply $0,1$, denotes an item that agent 1 values at 0 and agent 2 values at 1.

\subsection{Partition Matroids, Maximum Nash welfare}
\label{sub:partition-mnw}
The following example shows that a maximum Nash welfare (MNW) outcome may not be \fefone\ in settings with  feasibility constraints, even under identical partition matroid constraints and binary valuations. We note that the existence of such an example has been mentioned by \citet{biswas2018fair} without a proof; we include it here for completeness.
\begin{example}
\label{exp:partition-mnw}
Consider the setting and allocation illustrated in Table~\ref{tab:mnw_non_ef1}. 
This example consists of 2 agents, and 2 categories. 
Category 1 has 4 items and capacity 2 for both agents; 
Category 2 has 6 items and capacity 3 for both agents. 
Recall that $x,y$ refers to an item that is valued at $x$ by agent $A$ and valued at $y$ by agent $B$.
In the allocation given in Table \ref{tab:mnw_non_ef1}, $\valuation{A}{\Allocation{A}}=2$ and $\valuation{B}{\Allocation{B}}=3$, resulting in Nash welfare of 6. 
One can verify that this allocation is not EF1. 
The only other feasible allocations have either value $0$ for agent $A$ and value $5$ for agent $B$ (for Nash welfare of 0), or value 1 for agent $A$ and value 4 for agent $B$ (for Nash welfare of 4); the latter allocation is EF1.
\end{example}

\begin{table}[h]
\centering
\begin{tabular}{|c||c|c|}
\hline
Category & Alice & Bob\\
\hline\hline
$\Category{1}$ & 1,1 & 0,0\\
 $\capacity{A}{1}=\capacity{B}{1}=2$& 1,1 & 0,0\\
 \hline
 $\Category{2}$ & 0,1 & 0,1\\
 $\capacity{A}{2}=\capacity{B}{2}=3$ & 0,1 & 0,1\\
  & 0,1 & 0,1\\
\hline
\end{tabular}
\caption{
\small
\label{tab:mnw_non_ef1}
An example of agents with identical partition matroid constraints and binary valuations where MNW does not imply EF1.
}
\end{table}

Note that \citet{benabbou2020finding} prove that MNW always implies EF1 for submodular valuations with binary marginals.  However, they consider \emph{clean} allocations, where items with $0$ marginal value are not allocated. In contrast, we consider \emph{complete} allocations, in which all items must be allocated.
Indeed, if we ``clean'' the allocation in Table \ref{tab:mnw_non_ef1} by having Alice dispose of the three items she does not desire in category $\category{}{2}$, the allocation becomes EF1 while remaining MNW.
The same reasoning (and same example) applies also to the \emph{prioritized egalitarian} mechanism introduced by \citet{babaioff2020fair}. 
Specifically, this mechanism gives a {\em clean} (``non-redundant" in their terminology) Lorentz-dominating allocation, which is shown to be EF1 (and even EFX). However, the obtained allocation is not complete. 

\subsection{Partition Matroids, Heterogeneous Categories}
\label{sub:different_categories}

The following example shows that if agents have partition matroid constraints, where the partitions of the items into categories is not the same, then an \fefone{} allocation might not exist, even if the valuations are identical and binary.

\begin{example}[Heterogeneous categories, no \fefone]
\label{exp:different_categories}
Consider a setting 
with four items and two agents with identical binary valuations: $(1,1)$, $(1,1)$, $(0,0)$, $(0,0)$.
Suppose that
\begin{itemize}
\item Alice's partition has two categories, 
$\category{A}{1}=\{(1,1),(0,0)\}$ with capacity $1$ 
and 
$\category{A}{2}=\{(1,1),(0,0)\}$ with capacity 1.
\item 
Bob's partition has three categories: $\category{B}{1}=\{(1,1)\}$ with capacity $1$, 
$\category{B}{2}=\{(1,1)\}$ with capacity $1$, 
and $\category{B}{3}=\{(0,0),(0,0)\}$ with capacity $0$.
\end{itemize}

%depicted in Table~\ref{tab:different_categories} with four items
%: $a,b,c,d$; and two agents with the following identical valuations: 
% $\valuation{}{a}=\valuation{}{b}=1$ and  $\valuation{}{c}=\valuation{}{d}=0$.
%Let the partition for Alice be: $\category{A}{1}=\{a,c\}, \category{A}{2}=\{b,d\}$ with capacities: $\capacity{A}{1}=\capacity{A}{2}=1$ and for Bob: $\category{B}{1}=\{a\},\category{B}{2}=\{b\},\category{B}{3}=\{c,d\}$ with capacities $\capacity{B}{1}=\capacity{B}{2}=1, \capacity{B}{3}=0$.
There is a unique feasible allocation, in which Bob gets the two $(1,1)$ items from $\category{B}{1}$ and $\category{B}{2}$,
and Alice gets the two $(0,0)$ items from 
$\category{A}{1}$ and $\category{A}{2}$.
Bob's bundle is feasible for Alice, so  Alice envies Bob beyond F-EF1.
\end{example}

\subsection{Non-Existence of EF1 for Non-Matroid Constraints}
\label{sub:matching-ef1}
In this subsection we consider natural classes of constraints (set systems) that are not matroidal, and show that they  might not admit a complete EF1 allocation.
\begin{definition}
Let $\indSets\subseteq 2^M$ be a set of subsets of $M$.
Two elements $x,y\in M$ are called \emph{complementary for $\indSets$} 
if, for every partition of $M$ into two subsets $X_1,X_2\in \indSets$ (with $X_1\cap X_2 = \emptyset$ and $X_1\cup X_2 = M$), 
either $\{x,y\}\subseteq X_1$
or $\{x,y\}\subseteq X_2$.
\end{definition}

\begin{proposition}
\label{prop:complementary}
Suppose there are two agents with identical constraints represented by some $\indSets\subseteq 2^M$. If there are complementary items for $\indSets$, then a complete feasible  EF1 allocation might not exist even when the agents have identical binary valuations.
\end{proposition}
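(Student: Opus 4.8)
The plan is to fix the constraint system $\indSets$ (identical for both agents) together with a pair of complementary items $x,y$, and then choose a single binary valuation shared by both agents that concentrates all value on $x$ and $y$. Concretely, I would set $v(x)=v(y)=1$ and $v(z)=0$ for every other item $z\in M\setminus\{x,y\}$. The whole argument then hinges on the observation that, for two agents, a complete feasible allocation $(X_1,X_2)$ is exactly a partition of $M$ into two sets $X_1,X_2\in\indSets$ with $X_1\cap X_2=\emptyset$ and $X_1\cup X_2=M$ --- which is precisely the object quantified over in the definition of complementary items.

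First I would invoke Assumption~\ref{asm:feasible} to take an arbitrary complete feasible allocation $(X_1,X_2)$; it exists and is a partition of $M$ into two members of $\indSets$. Applying the complementary property to this partition, $x$ and $y$ must lie in the same bundle; by the symmetry between the two (identically endowed) agents, I may assume without loss of generality that $\{x,y\}\subseteq X_1$. Since $v$ assigns positive value only to $x$ and $y$, this immediately yields $v(X_1)=2$ and $v(X_2)=0$, regardless of how the worthless items are distributed.

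The remaining step is to verify that agent $2$'s envy toward agent $1$ cannot be eliminated by removing a single good. For any $Y\subseteq X_1$ with $|Y|\le 1$, the set $X_1\setminus Y$ still contains at least one of $x,y$, since deleting one item cannot remove both; hence $v(X_1\setminus Y)\ge 1 > 0 = v(X_2)$. Thus no single-good removal restores the EF1 inequality for agent $2$, so the chosen allocation is not EF1, and as it was an arbitrary complete feasible allocation, none is EF1.

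I expect the only delicate points to be definitional bookkeeping rather than anything conceptually hard: ensuring that the definition of complementary items is applied to a genuine two-agent complete feasible allocation (so the quantifier ``every partition into two feasible sets'' matches), and checking that the ``up to one good'' relaxation is evaluated against agent $2$'s own valuation at the right place. Because the constraints are identical and both allocated bundles are feasible, the feasible value $\fvaluation{i}{\cdot}$ agrees with $v$ on the bundles in play, so standard EF1 and its feasible variant \fefone{} coincide here; I would note this so that the same construction also rules out a complete feasible \fefone{} allocation, aligning the result with the heterogeneous-constraint notion used throughout the paper.
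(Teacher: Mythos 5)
Your proposal is correct and is essentially the paper's own proof, just spelled out in more detail: the paper likewise values the two complementary items at $1$ and everything else at $0$, observes that in every complete feasible allocation one agent receives both complementary items while the other receives neither, and concludes the envy persists even after removing one item. Your added bookkeeping (invoking Assumption~\ref{asm:feasible}, matching the partition quantifier in the definition of complementary items to two-agent complete allocations, and noting the coincidence of EF1 with \fefone{} under identical constraints) is sound and consistent with the paper's remark that the two notions coincide in that case.
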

\begin{proof}
Suppose both agents value both complementary items at $1$ and the other items at $0$. In every feasible allocation, one agent gets none of the complementary items, and thus envies the other agent who gets both these items, and the envy is by two items.
\end{proof}

There are several natural constraints with complementary items.
\begin{example}[Matching and matroid-intersection constraints]
\label{exp:intersection}
Consider the allocation of course seats among students, where the seats are categorized both by their time and by their subject, and a student should get at most a single seat per time and at most a single seat per subject.
Suppose there are two subjects, physics and chemistry; each of them is given in two time-slots, morning and evening. The 
items (physics,morning) and (chemistry,evening) are complementary, since in the (unique) partition of the four seats into two feasible subsets, these two seats appear together. By Proposition \ref{prop:complementary}, an EF1 allocation may not exist.

In general, the above constraints can be represented by an intersection of two partition matroids: one matroid $(M, \indSets[1])$ partitions the seats into two categories by their time,  and another matroid $(M, \indSets[2])$ partitions them by their subject, and the capacities of all categories are $1$. The feasible bundles are the bundles in $\indSets[1]\cap \indSets[2]$.

The above constraints can also be represented by the set of \emph{matchings} in the following bipartite graph (where the edges are the items):
\begin{center}
\begin{tikzpicture}[scale=0.9]
\draw (0,2) -- (4,2) -- (0,4) -- (4,4) -- (0,2);
\draw (0,2)[fill] circle [radius = 0.1];
\draw (0,4)[fill] circle [radius = 0.1];
\draw (4,2)[fill] circle [radius = 0.1];
\draw (4,4)[fill] circle [radius = 0.1];
%\node at (2,1.8) {1};
%\node at (2,4.2) {1};
%\node at (1,2.8) {0};
%\node at (3,2.8) {0};
\end{tikzpicture}
\end{center}
Matching constraints in bipartite graphs can be formed as an intersection of two partition matroids: 
one partitions the edges into categories based on their leftmost endpoint, and the other partitions the edges into categories based on their rightmost endpoints, and all categories in both matroids have a capacity of $1$.
\end{example}

\begin{example}[Conflict-graph constraints]
\label{exp:conflict}
Conflict-graph constrainst were recently studied by 
\citet{hummel2021fair}. 
Suppose the items are the vertices of the graph below:
\begin{center}
\begin{tikzpicture}[scale=0.9]
\draw (0,2) -- (4,2) -- (4,4) -- (0,4) -- (0,2);
\draw (0,2)[fill] circle [radius = 0.1];
\draw (0,4)[fill] circle [radius = 0.1];
\draw (4,2)[fill] circle [radius = 0.1];
\draw (4,4)[fill] circle [radius = 0.1];
%\node at (-.4,2) {1};
%\node at (4.4,2) {0};
%\node at (-.4,4) {0};
%\node at (4.4,4) {1};
\end{tikzpicture}
\end{center}
Edges denote conflicts, and the feasible sets are the set of non-adjacent vertices. Two diagonally-opposite vertices are complementary, so by Proposition \ref{prop:complementary} an EF1 allocation may not exist.
\end{example}

\begin{example}[Budget constraints]
\label{exp:budget}
Budget constraints were recently studied by \citet{wu2021budget,gan2021approximately}.
Suppose there are two items $x,y$ with a cost of 10 and one item $z$ with a cost of 20, and two agents with budget 20. The only complete feasible partition is $(\{z\}, \{x,y\})$. The items $x,y$ are complementary, so by Proposition \ref{prop:complementary} an EF1 allocation might not exist.
\end{example}

\begin{remark}
If $(M, \indSets)$ is a matroid, and there is at least one partition of $M$ into two independent sets, then there are no complementary items for $\indSets$. This follows from the symmetric basis exchange proprety \citep{brualdi_1969}.

The opposite is not necessarily true. For example, suppose the elements of $M$ are arranged on a line and $\indSets$ contains  all the connected subsets along the line. This constraint is not a matroid, since it is not downward-closed. 
But it has no complementary items. Indeed, an EF1 allocation  exists for any number of agents with binary valuations \citep{bilo2018almost}.

As another example, consider a budget constraint with a budget of $7$, and suppose there are four items with costs $1, 2, 3, 4$. This constraint is downward-closed, but it is not a matroid, since there are maximal feasible sets of different cardinalities ($1,2,4$ and $3,4$). 
But it has no complementary items: $1$ and $2$ are separated by the feasible partition $\{1,3\},\{2,4\}$; $3$ is separated from the other items by the feasible partition $\{3\},\{1,2,4\}$; 
and $4$ is separated from the other items by the feasible partition $\{4\},\{1,2,3\}$.

Therefore, characterizing the constraints for which an EF1 allocation is guaranteed to exist remains an open problem.
\end{remark}

\subsection{Non-Existence of EFX, Uniform Matroids}
\label{sub:uniform-efx}
An \emph{Envy Free up to any good} (EFX) allocation is a feasible allocation $\Allocation{}$ where for every pair of agents $i,j$,  for every good $g$ in $j$'s bundle, $\valuation{i}{\Allocation{i}} \geq \valuation{i}{\Allocation{j}\setminus\{g\}}$. Clearly, EFX is stronger than EF1. EFX has been recently shown to exist in the unconstrained settings for up to 3 agents with additive valuations
\citep{chaudhury2020efx}. However, under constrained settings an EFX allocation may not exist even in the simple setting of two agents with identical uniform matroid constraints and identical binary valuations. 
\begin{example}
\label{exp:efx}
There are four items $a,b,c,d$, with values $v(a)=v(b)=v(c)=0$ and $v(d)=1$ for both agents, and a capacity of 2 for each agent. In every feasible allocation (i.e., allocating 2 items to each agent), the agent who does not get item $d$ is envious beyond EFX.
\end{example}

\section{At most two categories}
\label{partition_warmup}
\label{sec:2-categories}

\subsection{Uniform Matroids}
As a warm-up, we present a simple algorithm for a setting with a single category. We call it Capped Round Robin (CRR).
CRR is a slight modification of round robin, where if an agent reached her capacity --- she is skipped over (Algorithm~\ref{alg:crr}).
\begin{algorithm}[h]
\caption{Capped Round Robin
\label{alg:crr}
}
\begin{algorithmic}[1]
%\SetAlgoLined
\REQUIRE{Category $\Category{h}$ with capacities $\capacity{i}{h}$ for every $i\in[n]$,  and an order $\sigma$ over $[n]$.}
\STATE \textbf{Initialize}: 
	$L\leftarrow \Category{h}$, $P\leftarrow \{i: \capacity{i}{h} = 0\}$, $t\leftarrow 0$, $\forall i\in [n] \  \allocation{i}{h}\leftarrow \emptyset$.
\WHILE{$L\neq \emptyset$}
	\STATE		$i\leftarrow \sigma[t]$.
	\IF{$i\notin P$}
		\STATE $g = \argmax_{g\in L}\valuation{i}{\{g\}}$.
		\STATE $ \allocation{i}{h} \leftarrow  \allocation{i}{h}\cup \{g\}$.
		\COMMENT{Agent $i$ gets her best unallocated item in $\Category{h}$}
		\STATE $L \leftarrow L \setminus \{g\}$.
		\IF{$| \allocation{i}{h}|==\capacity{i}{h}$}
			\STATE $P\leftarrow P \cup \{i\}$
			\COMMENT {Agent $i$ cannot get any more items from $\Category{h}$}
		\ENDIF
	\ENDIF
	\STATE $t \leftarrow t + 1 \mod{n}$.
\ENDWHILE
\RETURN $\allocation{}{h}$
\end{algorithmic}
\end{algorithm}

CRR finds a \fefone{} allocation whenever the constraints of all agents are \emph{uniform matroids}, i.e., all items belong to a single category (but agents may have different capacities and different valuations).

\begin{theorem}
\label{thm:uniform}
With uniform-matroid constraints, CRR finds an \fefone{} allocation. Furthermore, if $\xx$ is the outcome, for every ${i,j}$ such that $i$ precedes $j$ in $\sigma$, $\vi[{\allocation{i}{}}]=\fvi[{\allocation{i}{}}]\geq \fvaluation{i}{\allocation{j}{}}$.
\end{theorem}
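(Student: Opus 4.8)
The plan is to follow the execution of CRR pass by pass and to exploit a domination property of greedy selection that is strictly stronger than the per-round comparison used in the analysis of ordinary round robin. Record the items in the order they are taken: for each agent $i$ let $a^i_1, a^i_2, \dots$ denote the items $i$ receives, indexed by the pass (round) in which each is picked. Because CRR skips an agent once she reaches capacity, agent $i$ gets at most $\capacity{i}{h}$ items, so $\Allocation{i}$ is feasible for $i$ and its best feasible subset is all of it; this gives $v_i(\Allocation{i}) = \hat v_i(\Allocation{i})$, the equality asserted in the statement. I would also note that, since items leave a single shared pool and the loop stops the instant the pool empties, each agent's picks occupy a consecutive block of passes $1, 2, \dots, |\Allocation{i}|$.

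The engine of the proof is the following domination fact. Suppose $i$ precedes $j$ in $\sigma$. When $i$ makes her pass-$r$ pick $a^i_r$ she takes the most valuable remaining item, and at that moment every item $j$ will pick in passes $r, r+1, \dots$ is still unallocated (within pass $r$, $j$ moves after $i$). Hence $v_i(a^i_r) \ge v_i(a^j_{r'})$ for all $r' \ge r$. Symmetrically, if $j$ precedes $i$, then $a^j_r$ is already gone before $i$'s pass-$r$ pick, so we obtain only the slightly weaker $v_i(a^i_r) \ge v_i(a^j_{r'})$ for $r' > r$.

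I would first derive the ``furthermore'' inequality, where $i$ precedes $j$. Here $\hat v_i(\Allocation{j})$ is the sum of the top $c := \min(\capacity{i}{h}, |\Allocation{j}|)$ values, under $v_i$, among $j$'s items. List those top $c$ items by increasing pass index $r_1 < r_2 < \dots < r_c$; since these are distinct positive integers, $r_l \ge l$, and the domination fact gives $v_i(a^i_l) \ge v_i(a^j_{r_l})$ for every $l$. Summing over $l$ yields $\hat v_i(\Allocation{j}) \le \sum_{l \le c} v_i(a^i_l) \le v_i(\Allocation{i}) = \hat v_i(\Allocation{i})$, i.e. no F-envy from $i$ to $j$. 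For the full \fefone{} claim only the reverse pairs remain, where $j$ precedes $i$. There I would delete $j$'s first pick $a^j_1$ and run the identical matching against $\Allocation{j} \setminus \{a^j_1\}$: the surviving items now have pass indices at least $2$, so the sorted indices satisfy $r_l \ge l+1 > l$, which is exactly what the weaker ``$r' > r$'' domination requires, and the same summation gives $\hat v_i(\Allocation{i}) \ge \hat v_i(\Allocation{j} \setminus \{a^j_1\})$.

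The step I expect to be the main obstacle is not the value inequality but the bookkeeping that makes the matching well defined, namely that $a^i_l$ exists for every $l \le c$ (equivalently $c \le |\Allocation{i}|$). This is precisely where capacities bite, because the best-feasible-subset operator keeps the top-valued items rather than the first-picked ones, so one cannot simply pair pass against pass. I would argue that if the later agent ends up with more items than the earlier one, this can happen only because the earlier agent exhausted her capacity --- otherwise the shared pool would already have been empty at the earlier agent's turn, and the later agent could not have picked in that pass either. In the ``furthermore'' case this forces $c = |\Allocation{i}|$; in the removal case it leaves exactly a one-item gap, which is the single item \fefone{} is permitted to discard. Verifying this case analysis carefully, and confirming that it covers the capacity-limited and pool-exhausted terminations uniformly, is the crux.
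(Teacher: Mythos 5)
Your proof is correct and takes essentially the same approach as the paper's: match agent $i$'s picks, in pick order, against the items of $\feasible{i}{\Allocation{j}}$ in pick order, compare term by term via greedy domination, and handle the \fefone{} direction by removing the preceding agent's first pick so that the pass indices shift by one. The only difference is one of rigor: your explicit capacity-exhaustion argument for $c\le|\Allocation{i}|$ spells out the step the paper dispatches with ``since $i$ chooses first among $i,j$, $|\Allocation{i}|\geq|\feasible{i}{\Allocation{j}}|$'', and your index-offset bound $r_l\ge l+1$ on the actual run replaces the paper's slightly informal coupling claim that removing $g$ makes the execution ``equivalent to $j$ choosing first''.
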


The proof is similar to that of standard round robin in unconstrained setting, we include it here for completeness.

\begin{proof}
Let $i,j\in N$ s.t. $i$ precedes $j$ in $\sigma$.
First, we prove  $\vi[{\allocation{i}{}}]\geq \fvaluation{i}{\allocation{j}{}}$.
Since $i$ chooses first among $i,j$,  $|\allocation{i}{}|\geq |\feasible{i}{\allocation{j}{}}|$.
If we order $\allocation{i}{}$ and $\feasible{i}{\allocation{j}{}}$ according to the order in which the items were taken, 
every item in $\allocation{i}{}$ was chosen before (and therefore worth more to agent $i$ than) the corresponding item in $\feasible{i}{\allocation{j}{}}$ (if such one exist, as $|\allocation{i}{}|\geq |\feasible{i}{\allocation{j}{}}|$). 
That is because between the two of them, $i$ chose first.
So $\vi[{\allocation{i}{}}] \geq \vi[{\feasible{i}{\allocation{j}{}}}] = \fvaluation{i}{\allocation{j}{}}$.

Now it remains to show $\xx$ is \fefone. Let $g$ be the first item chosen by agent $i$. Notice that if we remove $g$ from $\vi[{\allocation{i}{}}]$, it is equivalent to $j$ being the one choosing first among $i,j$ (when the item set does not include $g$) and therefore we can use the exact same argument to claim that $\vj[{\allocation{j}{}}] \geq \vj[{\feasible{j}{\allocation{i}{}\setminus\{g\}}}] = \fvaluation{j}{\allocation{i}{}\setminus\{g\}}$.
\end{proof}

\subsection{Two categories}
\label{sub:2-categories}
While CRR may not find an \fefone{} allocation for more than one category, we can extend it to two categories by running CRR with reverse order on the second category; see Algorithm \ref{alg:2c-crr}. 
\begin{theorem}
\label{thm:2_categories}
When all agents have partition-matroid constraints with at most two categories, the same categories but possibly different capacities, an \fefone{} allocation always exists and can be found efficiently.
\end{theorem}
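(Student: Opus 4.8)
The plan is to analyze the two-category variant of Capped Round Robin (Algorithm~\ref{alg:2c-crr}): fix an arbitrary order $\sigma$ over the agents, run CRR on $\Category{1}$ with order $\sigma$, and then run CRR on $\Category{2}$ with the reversed order $\sigma^R$. The single-category case is immediate from Theorem~\ref{thm:uniform}, so the content is entirely in the two-category case. Efficiency is also immediate, since CRR runs in polynomial time and we invoke it twice; the whole argument therefore reduces to proving that the resulting allocation $\xx$ is \fefone.

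The structural fact I would establish first is that, for a partition matroid with categories $\Category{1},\Category{2}$, the feasible valuation is additive across categories: for every set $T$ and agent $i$, $\fvi[T]=\fvi[T\cap\Category{1}]+\fvi[T\cap\Category{2}]$. This holds because $\feasible{i}{T}$ independently selects the $\capacity{i}{1}$ highest-valued items of $T\cap\Category{1}$ and the $\capacity{i}{2}$ highest-valued items of $T\cap\Category{2}$, with no interaction across categories. Consequently, for any agents $i,j$ the feasible envy splits as $\fvi[\Allocation{j}]-\fvi[\Allocation{i}]=(\fvi[\allocation{j}{1}]-\fvi[\allocation{i}{1}])+(\fvi[\allocation{j}{2}]-\fvi[\allocation{i}{2}])$, so I may reason about each category separately and recombine.

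Next I would apply Theorem~\ref{thm:uniform} within each category. Fix a pair $i,j$ and assume $i$ precedes $j$ in $\sigma$ (the reverse case is symmetric, swapping the two categories). In $\Category{1}$, where the order is $\sigma$ and $i$ comes before $j$, the first part of Theorem~\ref{thm:uniform} gives $\fvi[\allocation{i}{1}]\geq\fvi[\allocation{j}{1}]$, i.e.\ no feasible envy in the first category. In $\Category{2}$, where the order is $\sigma^R$ and $j$ now precedes $i$, the \fefone{} part of Theorem~\ref{thm:uniform} (with $j$ playing the role of the earlier chooser and $i$ the later one) supplies a single item $g$ --- the first item chosen by $j$ in $\Category{2}$ --- such that $\fvi[\allocation{i}{2}]\geq\fvi[\allocation{j}{2}\setminus\{g\}]$.

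Finally I would combine the two bounds through category-additivity. Since $g\in\allocation{j}{2}\subseteq\Category{2}$, removing $g$ affects only the second category, and
\[
\fvi[\Allocation{i}]=\fvi[\allocation{i}{1}]+\fvi[\allocation{i}{2}]\geq\fvi[\allocation{j}{1}]+\fvi[\allocation{j}{2}\setminus\{g\}]=\fvi[\Allocation{j}\setminus\{g\}],
\]
which is exactly the \fefone{} condition for $i$ towards $j$. The symmetric case, where $j$ precedes $i$ in $\sigma$, uses the reverse-order \fefone{} guarantee in $\Category{1}$ and the no-envy guarantee in $\Category{2}$, with the deductible item lying in $\Category{1}$. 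I expect the one genuinely delicate point to be the bookkeeping behind the reverse ordering: the role of $\sigma^R$ is precisely to guarantee that each agent $i$ is the earlier chooser in exactly one of the two categories, so $i$'s envy toward any $j$ is entirely confined to the category where $i$ chooses later, and there it is bounded by a single item; additivity then lets that one item serve as the global EF1 deductible. The category-additivity of $\fvi$ is the structural lever that makes this localization possible and is the step I would state and justify most carefully.
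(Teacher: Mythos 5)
Your proof is correct and follows essentially the same route as the paper: Back-and-Forth CRR analyzed via Theorem~\ref{thm:uniform} applied within each category, with the reversed order $\sigma'$ ensuring that each agent's feasible envy toward any other agent is confined to the single category where she chooses later, and hence bounded by one item. The only difference is cosmetic --- you make explicit the category-additivity of $\fvi$, which the paper's proof uses implicitly when summing the per-category envy bounds.
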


\begin{algorithm}
\begin{algorithmic}[1]
\STATE $\sigma \leftarrow$ an arbitrary order over the agents.
\STATE Run Capped Round Robin with $\Category{1}, \sigma$. Let $\allocation{i}{1}$ be the outcome for each agent $i\in N$.
\STATE $\sigma'\leftarrow \text{reverse}(\sigma)$.
\STATE Run Capped Round Robin with $\Category{2}, \sigma'$. Let $\allocation{i}{2}$ be the outcome for each agent $i\in N$.
\RETURN $\Allocation{i}{1}\cup\allocation{i}{2}$ for all $i\in N$.
\end{algorithmic}
\caption{Back-and-Forth CRR
\label{alg:2c-crr}
}
\end{algorithm}

\begin{proof}
Algorithm {\em Back-and-forth CRR} (Algorithm \ref{alg:2c-crr}) runs CRR in an arbitrary order for the first category, then uses the reverse order for CRR in the second category.
After the first category, by Theorem \ref{thm:uniform}, the allocation is \fefone \ and no agent envies another agent that appears in $\sigma$ after her.
Consider two arbitrary agents $i,j$ at the end of the algorithm.
If agent $i$ f-envied agent $j$ (up to 1 good) after the first category, she appears before $j$ in $\sigma'$ and thus will not gain any more envy in the second category. If $i$ didn't f-envy $j$ after the first category, she can only gain envy up to one good in the second category. 
That is, in one of the categories she might envy up to one good, in the other she will not envy at all. We conclude that the resulting allocation is \fefone.
\end{proof}

\section{Different Capacities, Identical Valuations}
\label{sub:identical-valuations}
\label{sec:identical-valuations}
We now consider an arbitrary number of categories, allow agents to have different capacities, but assume that all agents have the same valuations; this is, in a sense, the dual setting of \citet{biswas2018fair} who consider identical capacities and different valuations.
Using CRR as a subroutine, we show  that a similar algorithm to the one used by \citet{biswas2018fair} finds an \fefone{} allocation in this setting; this follows from the fact that no cycles can be formed in the envy graph.
Using Algorithm \ref{alg:pc-crr} we prove:

\begin{theorem}
\label{thm:diff_apacities_id_vals}
For every instance with identical additive valuations and partition matroids with identical categories (but possibly different capacities), Algorithm \ref{alg:pc-crr} returns an \fefone{} allocation.
\end{theorem}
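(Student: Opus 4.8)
The plan is to adapt the per-category scheme of \citet{biswas2018fair} (Algorithm~\ref{alg:pc-rr}), replacing plain round robin by Capped Round Robin (Algorithm~\ref{alg:crr}) to accommodate different capacities, and to argue that the envy-cycle removal step --- the one step whose feasibility breaks down under heterogeneous constraints --- is never actually invoked. Concretely, Algorithm~\ref{alg:pc-crr} processes the categories one at a time: in each iteration it runs CRR on the current category using the current order $\sigma$, and then resets $\sigma$ to a topological order of the feasible envy graph of the cumulative allocation. The heart of the argument is therefore a single structural lemma.

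\emph{Key lemma (acyclicity).} With identical additive valuations $v$, the feasible envy graph of every feasible allocation is acyclic, so a topological order always exists and no bundles are ever swapped. The proof is a telescoping chain: along any directed cycle $i_1 \to \cdots \to i_k \to i_1$ we have $v(X_{i_t}) = \hat v_{i_t}(X_{i_t}) < \hat v_{i_t}(X_{i_{t+1}}) \le v(X_{i_{t+1}})$, where the first equality is feasibility of $X_{i_t}$ for agent $i_t$, the strict inequality is the edge, and the last inequality holds because a best feasible subset is a subset and $v \ge 0$. Composing around the cycle yields $v(X_{i_1}) < v(X_{i_1})$, a contradiction. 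This is exactly where different capacities would otherwise cause trouble: cycle removal requires exchanging bundles along the cycle, which need not be feasible when agents have different capacities, so acyclicity sidesteps the issue entirely and guarantees that the returned allocation stays feasible.

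\emph{Main induction.} I would then show, by induction on the number of processed categories, that the cumulative allocation is \fefone{}. Two facts drive the induction. First, since the categories are identical and $v$ is additive, feasible valuations decompose over categories: $\hat v_i(T) = \sum_h \hat v_i(T \cap \Category{h})$, because a best feasible subset picks the top $k_i^h$ items of each category independently. Second, Theorem~\ref{thm:uniform} describes CRR within a single category: if $i$ precedes $j$ in the order then $i$ does not F-envy $j$ within that category, and in any case the within-category allocation is \fefone{}. Fix a pair $i,j$ and consider the order $\sigma$ used for the newly processed category $h+1$. If $i$ precedes $j$ in $\sigma$, then $\hat v_i(X_i^{h+1}) \ge \hat v_i(X_j^{h+1})$, so adding category $h+1$ preserves the inductive \fefone{} witness $Y$ inherited from categories $1,\dots,h$. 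If $j$ precedes $i$, then because $\sigma$ is a topological order there is no edge $i \to j$, i.e.\ $i$ did not F-envy $j$ after category $h$; the single item that CRR lets us drop from $X_j^{h+1}$ then absorbs all the newly created envy, again giving \fefone{}. Summing the per-category inequalities via the decomposition closes the induction.

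I expect the main obstacle to be the acyclicity lemma, since it is what licenses skipping the (possibly infeasible) swap step and thereby keeps the whole scheme feasible; the bookkeeping in the induction --- in particular keeping the removed good confined to a single category so the additive decomposition applies cleanly to $X_j \setminus Y$ --- is the other place that needs care.
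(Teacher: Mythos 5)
Your proposal is correct and takes essentially the same approach as the paper: the same telescoping acyclicity lemma for the feasible envy graph (Lemma~\ref{no_cycles_pc_crr}) followed by the same per-category induction built on Theorem~\ref{thm:uniform}. Your explicit per-category decomposition $\fvaluation{i}{T}=\sum_h \fvaluation{i}{T\cap \Category{h}}$ and the two-case analysis on whether $i$ precedes $j$ in $\sigma$ merely spell out bookkeeping that the paper's proof leaves implicit.
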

%First, we describe the algorithm, which is a slight adaptation of the algorithm in \citet{biswas2018fair}. 
Similarly to \citet{biswas2018fair}, 
our Algorithm \ref{alg:pc-crr} iterates over the categories running a sub-routine in each.
While they run round-robin, we run CRR. The order for the sub-routine is determined by a topological sort of the envy graph.  \citet{biswas2018fair} have an extra step of de-cycling the graph, which is not needed in our case due to the following lemma.
\begin{algorithm}
\begin{algorithmic}[1]
\REQUIRE{$M,\Category{},\capacity{i}{h}$ for every $i\in[n],h\in[l]$}
\ENSURE{an allocation $\Allocation{}$ which is $\fefone$}
%	\SetAlgoLined
\STATE Initialize:	$\sigma \leftarrow$ an arbitrary order over the agents; $\forall i\in[n] \ \Allocation{i}\leftarrow \emptyset$.
\FORALL{$\Category{h}\in \Category{}$}
\STATE Run Capped Round Robin with $\Category{h}, \sigma$.
\STATE $\forall i\in[n] \ \Allocation{i}\leftarrow \Allocation{i}\cup \allocation{i}{h}$.
%Draw "Feasible Envy" directed graph where $(i,j)\in E \iff \valuation{i}{\Allocation{i}}<\fvaluation{i}{\Allocation{j}}$;\\
\STATE Set $\sigma$ to be a topological order of the feasible-envy graph (which is acyclic by Lemma \ref{no_cycles_pc_crr}).
\ENDFOR
\end{algorithmic}
\caption{Per-Category Capped Round Robin
\label{alg:pc-crr}
}
\end{algorithm}

\begin{lemma}
\label{no_cycles_pc_crr}
For any setting with identical valuations (possibly with different capacities), the feasible envy graph of any feasible allocation is acyclic.
\end{lemma}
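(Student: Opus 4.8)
The plan is to exhibit a single global potential function that strictly increases along every edge of the feasible envy graph; this immediately forbids directed cycles. Let $\vv$ denote the valuation common to all agents. The natural candidate potential is the map $i \mapsto \vv[\Allocation{i}]$ that assigns to each agent the common value of the bundle it holds under $\xx$.

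First I would record the two elementary facts that make this potential work, each drawing on a different hypothesis. (1) Because $\xx$ is feasible, each $\Allocation{i}$ is independent in $\matroid[i]$, so the best feasible subset of $\Allocation{i}$ for agent $i$ is $\Allocation{i}$ itself; hence $\fvi[\Allocation{i}] = \vv[\feasible{i}{\Allocation{i}}] = \vv[\Allocation{i}]$. (2) For any agent $i$ and any bundle $T$, the set $\feasible{i}{T}$ is by definition contained in $T$, and since $\vv$ is additive with nonnegative item values it is monotone; thus $\fvi[T] = \vv[\feasible{i}{T}] \le \vv[T]$. The point enabled by the identical-valuations assumption is that $\fvi[\cdot]$ and the quantity $\vv[\cdot]$ are computed with the \emph{same} underlying function for every agent, so these (in)equalities may be chained across agents with different matroids.

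Next, consider any edge $i \to j$ of the feasible envy graph, i.e.\ $\fvi[\Allocation{i}] < \fvi[\Allocation{j}]$. Applying fact (1) to the left side and fact (2) with $T = \Allocation{j}$ to the right side yields $\vv[\Allocation{i}] = \fvi[\Allocation{i}] < \fvi[\Allocation{j}] \le \vv[\Allocation{j}]$, so $\vv[\Allocation{i}] < \vv[\Allocation{j}]$. Hence the potential strictly increases along every directed edge. Finally, suppose for contradiction there is a directed cycle $i_1 \to i_2 \to \cdots \to i_k \to i_1$; applying the previous inequality along each edge gives $\vv[\Allocation{i_1}] < \vv[\Allocation{i_2}] < \cdots < \vv[\Allocation{i_k}] < \vv[\Allocation{i_1}]$, a contradiction. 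Therefore the feasible envy graph is acyclic.

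I do not expect a serious technical obstacle here; the argument is a short potential-function computation. The only subtlety worth flagging is bookkeeping: the \emph{equality} in (1) relies on feasibility of the allocation (an agent's own bundle lies in its own matroid), whereas the \emph{inequality} in (2) relies only on monotonicity of $\vv$ together with the subset property of $\feasible{i}{\cdot}$. It is precisely the identical-valuations hypothesis that lets the single quantity $\vv[\Allocation{\cdot}]$ serve as a potential for all agents at once; with heterogeneous valuations the chained inequality would mix the incomparable functions $\Valuation{i}$ and the argument would collapse, which is exactly why the analogous statement fails in the general heterogeneous setting and a more delicate cycle-prevention technique is needed there.
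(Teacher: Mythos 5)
Your proof is correct and is essentially the paper's own argument: the paper likewise combines the feasibility equality $\fvaluation{i}{\Allocation{i}} = \vv[\Allocation{i}]$ with the bound $\fvaluation{i}{\Allocation{j}} \leq \vv[\Allocation{j}]$ to deduce $\vv[\Allocation{i}] < \vv[\Allocation{j}]$ along every envy edge, then derives the same contradiction around a hypothetical cycle. Your framing via a global potential function is merely a cosmetic repackaging of that chained-inequality argument, so no further comparison is needed.
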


\begin{proof}
Assume towards contradiction that there exists some allocation $\xx$ whose corresponding feasible envy graph contains a cycle; denote the agents in the cycle $1, \ldots, p$, according to their order on the cycle. 
Note that, while all agents have the same valuation $v$, their feasible-valuation functions $\fvaluation{i}{}$ may differ.
\begin{itemize}
\item Feasible envy implies that, for every $i\in[p]$,
$\fvaluation{i}{\Allocation{i}} < \fvaluation{i}{\Allocation{i+1}}$, where we denote $p+1\equiv 1$.
\item Since the allocation is feasible, 
$\fvaluation{i}{\Allocation{i}} = v({\Allocation{i}})$.
\item 
Since the feasible valuation is at most the valuation, 
$\fvaluation{i}{\Allocation{i+1}} \leq v(\Allocation{i+1})$.
\end{itemize}
Combining the above three inequalities implies $v(\Allocation{i}) < v(\Allocation{i+1})$ for all $i\in[p]$, so we have $v(\Allocation{1}) < \cdots < v(\Allocation{p}) < v(\Allocation{1})$, a contradiction.
\end{proof}

\begin{proof}[Proof of Theorem~\ref{thm:diff_apacities_id_vals}]
We show by induction that after every category the allocation is \fefone.
\emph{Base}: after the first category the allocation is \fefone{} according to Theorem \ref{thm:uniform}.
\emph{Step}: assume the allocation is \fefone{} after $t$ categories. 
Before running category $t+1$ we reorder the agents topologically according to the feasible envy graph, and use this order as $\sigma$ in Algorithm \ref{alg:crr} (CRR). This is possible by Lemma \ref{no_cycles_pc_crr}, which shows that the feasible envy graph is acyclic. 
For every $i,j$ such that $i$ precedes $j$ in $\sigma$, $j$ does not F-envy $i$. 
By Theorem \ref{thm:uniform}, during category $t+1$, $j$ can become envious of $i$, but only up to one good, and $i$'s envy cannot increase. 
This implies that if the allocation is \fefone{} in the end of category $t$, it remains \fefone{} after category $t+1$.
\end{proof}

The following theorem shows that, for identical valuations and possibly different capacities, 
the Maximum Nash Welfare allocation is F-EF1.

\begin{theorem}
\label{mnw_identical_vals}
For different capacities and identical valuations, any feasible allocation that maximizes Nash Social Welfare is \fefone{}.
\end{theorem}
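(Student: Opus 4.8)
The plan is to argue by contradiction, adapting the classical ``a local exchange strictly increases the Nash welfare'' argument (the one behind the unconstrained fact that MNW implies EF1) to the constrained setting. The twist is that a single-item transfer from the envied agent to the envious one need not be feasible, because the receiving agent may already be at her capacity in the relevant category. I would therefore localize the argument to one category and, when the receiver is saturated there, replace the plain transfer by a one-for-one \emph{swap} inside that category, which keeps both bundles feasible precisely because all agents share the same categories.

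Concretely, suppose the MNW allocation $\xx$ is not \fefone, witnessed by agents $i,j$, and let $\vv$ be the common valuation. Let $g^\ast = \argmax_{g\in \Allocation{j}} \vv[\{g\}]$ be a most valuable item of $\Allocation{j}$. Taking the removed set to be $\{g^\ast\}$ in the negation of \fefone, and using $\fvi[\Allocation{i}]=\vv[\Allocation{i}]$ by feasibility, gives $\vv[\Allocation{i}] < \fvi[\Allocation{j}\setminus\{g^\ast\}] = \vv[S']$, where $S' := \feasible{i}{\Allocation{j}\setminus\{g^\ast\}}$ is feasible for $i$, contained in $\Allocation{j}$, and avoids $g^\ast$. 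Since for a partition matroid a best feasible subset is obtained category by category, $\vv[S']=\sum_h \vv[S'\cap \Category{h}]$ and $\vv[\Allocation{i}]=\sum_h \vv[\allocation{i}{h}]$, so there is a category $\Category{h^\ast}$ with $\vv[S'\cap \Category{h^\ast}] > \vv[\allocation{i}{h^\ast}]$. I let $g$ be a most valuable item of $S'\cap \Category{h^\ast}$; then $\vv[\{g\}]>0$ and $|S'\cap\Category{h^\ast}|\le \capacity{i}{h^\ast}$.

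Next I would define the exchange and verify both feasibility and welfare. If $i$ has spare capacity in $\Category{h^\ast}$ (i.e. $|\allocation{i}{h^\ast}|<\capacity{i}{h^\ast}$), simply move $g$ from $j$ to $i$; otherwise also move the least valuable item $g'$ of $\allocation{i}{h^\ast}$ from $i$ to $j$. In both cases each agent's item count in $\Category{h^\ast}$ is preserved (or only drops for $j$), so the new allocation stays feasible. Writing $\Delta := \vv[\{g\}]-\vv[\{g'\}]$ (with $\vv[\{g'\}]:=0$ in the spare-capacity case), the only changing bundles become $\vv[\Allocation{i}]+\Delta$ and $\vv[\Allocation{j}]-\Delta$, so the Nash welfare strictly increases provided $\Delta>0$ and $\vv[\Allocation{j}]-\Delta>\vv[\Allocation{i}]$. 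For the second inequality, since $g^\ast$ is the global maximum of $\Allocation{j}$ we have $\vv[S']\le \vv[\Allocation{j}]-\vv[\{g^\ast\}]\le \vv[\Allocation{j}]-\vv[\{g\}]$, whence $\vv[\Allocation{i}]<\vv[S']\le \vv[\Allocation{j}]-\vv[\{g\}]\le \vv[\Allocation{j}]-\Delta$. For $\Delta>0$: in the spare-capacity case $\Delta=\vv[\{g\}]>0$; in the saturated case I would prove the auxiliary fact that if multisets of nonnegative reals $A,B$ satisfy $|A|\le|B|$ and $\mathrm{sum}(A)>\mathrm{sum}(B)$ then $\max A>\min B$, and apply it to $A=S'\cap\Category{h^\ast}$, $B=\allocation{i}{h^\ast}$ (here $|A|\le\capacity{i}{h^\ast}=|B|$) to get $\vv[\{g\}]>\vv[\{g'\}]$. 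Either way the welfare strictly increases, contradicting the maximality of $\xx$.

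The main obstacle is exactly the saturated case: one must exhibit an exchange that is simultaneously feasible and welfare-improving, and the crux is guaranteeing that the incoming item beats the outgoing one ($\vv[\{g\}]>\vv[\{g'\}]$), which is what the cardinality-versus-sum fact provides once the argument has been localized to the category $\Category{h^\ast}$ carrying the feasible envy. A secondary technical point is the treatment of zero-valued bundles, to which the geometric-mean product is insensitive: I would invoke the standard refinement of MNW (first maximize the number of agents with positive utility, then the product over them, as in Caragiannis \etal), under which the same exchange either creates a new positively-valued agent or strictly increases the product, yielding the contradiction in all cases.
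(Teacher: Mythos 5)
Your proof is correct and is essentially the paper's argument run in the contrapositive direction: the paper likewise localizes the envy to a single category $h$ and swaps the most valuable item $g$ of $\allocation{j}{h}$ against the least valuable item $b$ of $\allocation{i}{h}$ (it pads $\allocation{i}{h}$ with zero-valued dummy items up to capacity, which replaces your spare-capacity/saturated case split, and your cardinality-versus-sum lemma is exactly its averaging step $\capacity{i}{h}\cdot \vv[g] \geq \fvaluation{i}{\allocation{j}{h}} > \vv[{\allocation{i}{h}}] \geq \capacity{i}{h}\cdot \vv[b]$), then applies the same MNW optimality inequality $(\vv[{\Allocation{i}}]+z)(\vv[{\Allocation{j}}]-z)\leq \vv[{\Allocation{i}}]\cdot\vv[{\Allocation{j}}]$ to conclude directly that removing $g$ eliminates the envy, rather than deriving a strict Nash-welfare improvement by contradiction. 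Your closing remark about zero utilities is a point where you are more careful than the paper: under the literal ``maximize the product'' definition, the optimality inequality is vacuous whenever some third agent is forced to utility $0$, so the lexicographic refinement you invoke (or an equivalent argument) is genuinely needed there.
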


\begin{proof}
Let $\xx$ be an allocation that maximizes Nash social welfare (MNW), and suppose there exist agents $i,j$ such that $i$ f-envies $j$.
This means that, for at least one category $h$,
$\fvaluation{i}{\allocation{i}{h}}<\fvaluation{i}{\allocation{j}{h}}$. 
Since the allocation is feasible, this implies 
$v(\allocation{i}{h})<\fvaluation{i}{\allocation{j}{h}}$.

Without loss of generality, we can assume that $|X_i^h| = k_i^h$; otherwise, we can add to 
category $h$ some $k_i^h-|X_i^h|$ dummy elements with value $0$ to all agents and give them to agent $i$ without affecting the valuations.

$v(\allocation{i}{h})<\fvaluation{i}{\allocation{j}{h}}$ implies $\fvaluation{i}{\allocation{j}{h}}>0$, which implies  $|\allocation{i}{h}|=\capacity{i}{h}>0$
and $|\allocation{j}{h}|>0$.
Therefore, the following items exist:
\begin{align*}
b:=\argmin_{t\in \allocation{i}{h}}\vv[t];
&&
g:=\argmax_{t\in \allocation{j}{h}}\vv[t].
\end{align*}
So
\begin{align*}
\vv[{\allocation{i}{h}}]\geq \capacity{i}{h}\cdot \vv[b]; && \vv[{\allocation{j}{h}}]\leq \capacity{j}{h}\cdot \vv[g].
\end{align*}
Since $i$ f-envies $j$ in category $\Category{h}$, $\capacity{i}{h}\cdot \vv[g] \geq \fvaluation{i}{\allocation{j}{h}} > \fvaluation{i}{\allocation{i}{h}} = \vv[{\allocation{i}{h}}]\geq \capacity{i}{h}\cdot \vv[b]$, so $\vv[g]-\vv[b]> 0$.

Let $\xx'$ be the allocation obtained by $\xx$ by swapping goods $b$ and $g$ between $i$ and $j$'s allocations.
$\xx'$ is feasible since $b$ and $g$ are in the same category. 
Since $\xx$ is MNW, it follows that
\[
(\vv[{\Allocation{i}}]+\vv[g]-\vv[b])\cdot(\vv[{\Allocation{j}}]-\vv[g]+\vv[b])\leq \vv[{\Allocation{i}}]\cdot\vv[{\Allocation{j}}].
\]
Let $z=\vv[g]-\vv[b] > 0$. We get:
\[
\vv[{\Allocation{i}}]\vv[{\Allocation{j}}]-\vv[{\Allocation{i}}]z + \vv[{\Allocation{j}}]z -z^2 \leq \vv[{\Allocation{i}}]\vv[{\Allocation{j}}].
\]
Simplifying the above expression and using the fact that $z>0$, we get:
\[
\vv[{\Allocation{j}}]-z\leq \vv[{\Allocation{i}}].
\]
Since the valuation is additive, and $\vv[b]\geq 0$, we get:
\[
\vv[{\Allocation{i}}] \geq \vv[{\Allocation{j}}]-z = \vv[{\Allocation{j}}] -\vv[g] + \vv[b]
\geq  \vv[{\Allocation{j}}] -\vv[g] = \vv[{\Allocation{j}\setminus\{g\}}].
\]
By the fact that $\fvaluation{i}{S}\leq\vv[S]$ for every set $S$, 
we get:
\[
\fvaluation{i}{\Allocation{j}\setminus\{g\}} \leq \vv[{\Allocation{j}\setminus\{g\}}] \leq \vv[{\Allocation{i}}] = \fvaluation{i}{\Allocation{i}}.
\]
This implies that $\xx$ is F-EF1, completing the proof.
\iffalse
\paragraph{Case 1:} There exist a category $h$ such that $|\allocation{i}{h}|<\capacity{i}{h}$ and $\fvaluation{i}{\allocation{j}{h}}>0$.
In this case, we use a similar proof to the proof showing that for unconstrained setting MNW implies EF1 \citep{caragiannis2019unreasonable}. Let $g\in\allocation{j}{h}$ be such that $\vv[g]>0$. Since $|\allocation{i}{h}|<\capacity{i}{h}$, $\Allocation{i}\cup\{g\}$ is a feasible allocation.
$\xx$ being MNW implies:
\[
(\vv[{\Allocation{i}}]+\vv[g])\cdot(\vv[{\Allocation{j}}]-\vv[g])\leq \vv[{\Allocation{i}}]\cdot\vv[{\Allocation{j}}]
\]
Simplifying the above expression gives
\[
\vv[{\Allocation{i}}]\vv[{\Allocation{j}}]-\vv[{\Allocation{i}}]\vv[g] + \vv[{\Allocation{j}}]\vv[g] -\vv[g]^2 \leq \vv[{\Allocation{i}}]\vv[{\Allocation{j}}].
\]
Simplifying further and using the fact that $\vv[g]>0$ we get:
\[
\vv[{\Allocation{j}}]-\vv[g]\leq \vv[{\Allocation{i}}].
\]
Since valuations are additive, this implies
\[
\vv[{\Allocation{j}\setminus\{g\}}] \leq \vv[{\Allocation{i}}].
\]
By the fact that $\fvaluation{i}{S}\leq\vv[s]$ for every set $S$, 
we get the desired feasible EF1 property:
\[
\fvaluation{i}{\Allocation{j}\setminus\{g\}} \leq \vv[{\Allocation{j}\setminus\{g\}}] \leq \vv[{\Allocation{i}}] = \fvaluation{i}{\Allocation{i}}.
\]
\fi
\end{proof}

The fact that the constraints are based on partition matroids is used in the proof step regarding the exchange of items $b$ and $g$.
Possibly, the result can be extended to agents with different matroid constraints
$(M,\indSets[i])$, as long as the different matroids satisfy some ``pairwise basis exchange'' property. Though, it is not clear how to define such a property.

\section{Partition Matroids with Binary Valuations}
\label{partition-binary}

In this section we assume that all agents have \emph{binary} additive valuations.
For this setting, we
present an efficient algorithm  that finds an \fefone{} allocation for $n$ agents with different valuations, and partition matroids with different capacity constraints.
For binary valuations $v_i(j)\in \{0,1\}$ for all $i,j$, and for every agent $i$ we refer to the set of items $J_i=\{j \in M \ s.t\  \vi[j]=1\}$ as agent $i$'s {\em desired set}.

\begin{theorem}
\label{thm:partition_binary}
In every setting with partition matroids with binary valuations (possibly heterogeneous capacities and heterogeneous valuations), an \fefone{} allocation exists and can be computed efficiently by the {\em Iterated Priority Matching Algorithm} (Algorithm~\ref{alg:partition-binary}). 
\end{theorem}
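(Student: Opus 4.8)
The plan is to first turn the F-EF1 requirement into a clean numeric condition that exploits binarity, and then argue that the Iterated Priority Matching algorithm maintains this condition as an invariant. Since every valuation is binary and additive, the feasible valuation decomposes over categories, $\hat v_i(T)=\sum_h \min\bigl(k_i^h,\,|T\cap C^h\cap J_i|\bigr)$, and for a feasible bundle $X_i$ we have $\hat v_i(X_i)=v_i(X_i)=\sum_h|X_i^h\cap J_i|$. Removing a single item can lower a binary feasible valuation by at most one, and whenever $\hat v_i(X_j)>\hat v_i(X_i)\ge 0$ the set $\mathrm{Best}_i(X_j)$ is nonempty, so some desired item $g\in X_j$ satisfies $\hat v_i(X_j\setminus\{g\})=\hat v_i(X_j)-1$. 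Hence I would first prove the equivalence: a feasible allocation is F-EF1 if and only if $\hat v_i(X_j)\le \hat v_i(X_i)+1$ for every ordered pair $i,j$. This reduces the goal to producing a feasible, complete allocation in which no agent's feasible value for another's bundle exceeds her own by more than one.

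Next I would describe the algorithmic strategy and the invariant. Desired items are allocated over a sequence of rounds; each round computes a priority matching between the agents and the still-available desired items, subject to the partition-matroid capacities, handing each agent at most one new desired item, with agents ordered by their current number of held desired items (ascending), so that agents who are behind are saturated first. The invariant I would carry through the rounds is precisely the target bound restricted to the desired-item allocation built so far: after each round, $\hat v_i(X_j)\le v_i(X_i)+1$ for all $i,j$. Items undesired by their would-be recipient (and desired items whose desiring agents are all capacity-saturated) contribute nothing and are set aside for the completion phase.

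The crux, and the step I expect to be the main obstacle, is showing that a priority-matching round preserves this invariant. The only way the bound can break is if $j$ receives a new item that $i$ also desires and can still feasibly hold (raising $\hat v_i(X_j)$ by one) while $i$ receives no desired item; so I must show this cannot happen when $i$ has weakly lower value than $j$. This is an augmenting-path/Hall-type argument on the bipartite desire graph: if a higher-priority (lower-value) agent $i$ were left unmatched while $j$ was matched to an item $i$ could feasibly absorb, one could reroute along an alternating path to match $i$, contradicting the lexicographic (by priority) optimality of the matching. The delicacy is that capacities act simultaneously across all categories, so the rerouting must respect per-category slack; I would handle this by casting each round as a degree-constrained subgraph (b-matching) problem and invoking the exchange property of the underlying partition matroid to guarantee the augmenting path exists.

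Finally I would settle feasibility, completeness, and runtime. The matching phase halts when no further desired item can be feasibly placed; at that point any still-unallocated item $g\in C^h$ desired by some agent $i$ must have $i$ saturated in category $h$ (otherwise $i$ could absorb $g$), and since the matching phase only ever hands agents their desired items, such an $i$ has $v_i(X_i^h)=k_i^h$, which forces $\hat v_i(X_j^h)\le k_i^h=v_i(X_i^h)$ for every $j$. Consequently, distributing the leftover items to agents with slack capacity, which is possible by Assumption~\ref{asm:feasible}, never raises $\hat v_i(X_j)$ in any category where it could create new envy, so the gap bound survives; combined with the first step, the final allocation is F-EF1. Each round is a single polynomial-time matching and the number of rounds is at most $\max_{i,h} k_i^h\le m$, so the algorithm runs in polynomial time.
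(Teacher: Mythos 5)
Your high-level architecture matches the paper's (rounds of priority matchings, an ``envy at most one'' invariant, a leftover phase justified by saturation), but there is a genuine gap at exactly the step you flagged as the crux: your priority order is wrong, and the argument built on it fails. You order agents by their current number of held desired items, ascending, and argue that a \emph{lower-value} agent cannot be starved while a higher-value agent is fed. But with heterogeneous valuations, F-envy is not monotone in own value: agent $i$ can F-envy agent $j$ while holding strictly \emph{more} desired items than $j$, because $\hat{v}_i(X_j)$ can be inflated by items that $i$ desires but $j$ does not --- precisely the items $j$ picks up in leftover phases. Concretely: category $1$ has ten items desired only by $i$, with $\capacity{i}{1}=\capacity{j}{1}=5$; the matching phase gives $i$ five of them, and the leftover phase dumps the other five on $j$. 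Now $v_i(X_i)=5$, $v_j(X_j)=0$, and $\hat{v}_i(X_j)=5$ (no envy yet). Category $2$ is a single item desired by both with capacity $1$; your order puts $j$ first ($0<5$), $j$ takes it, and $i$'s envy becomes $1$. Category $3$ is again a single doubly-desired item with capacity $1$; your order \emph{still} puts $j$ first ($1<5$), $j$ takes it, and $i$'s envy reaches $2$, violating \fefone{}. The paper avoids this by ordering agents by a \emph{topological order of the feasible envy graph}, recomputed before every matching round, so that an agent who already envies another always has strictly higher priority. That choice carries its own proof obligation, which is the hardest part of the paper's argument and is entirely absent from your proposal: one must prove the feasible envy graph stays \emph{acyclic} throughout (the paper's property (a), established by a delicate cycle-chasing argument showing every agent on a hypothetical envy cycle would have received $q+1$ desired items while the first agent received only $q$). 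Your count-ascending order sidesteps acyclicity but, as the example shows, does not preserve the invariant.

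A secondary flaw: your opening equivalence ``F-EF1 $\iff$ $\hat{v}_i(X_j)\le \hat{v}_i(X_i)+1$'' is false in the direction you use. Your justification claims some $g\in X_j$ satisfies $\hat{v}_i(X_j\setminus\{g\})=\hat{v}_i(X_j)-1$ whenever $\hat{v}_i(X_j)>0$, but if every contributing category $h$ has strictly more than $\capacity{i}{h}$ items of $X_j$ desired by $i$, no single removal lowers the feasible value: with one category, $\capacity{i}{}=2$, three $i$-desired items in $X_j$ and one in $X_i$, the gap is $1$ yet $\hat{v}_i(X_j\setminus\{g\})=2>1$ for every $g$, so the allocation is not \fefone{}. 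To conclude \fefone{} from the gap bound you would additionally need that, whenever the gap equals $1$, some contributing category satisfies $1\le |X_j\cap C^h\cap J_i|\le \capacity{i}{h}$ --- a structural property of the algorithm's output that must be argued, not assumed. Your reduction of each round to a $b$-matching with augmenting paths is a reasonable alternative formalization of the paper's per-round priority matching (the paper instead runs up to $T^h=\max_i \capacity{i}{h}$ separate one-item-per-agent rounds within each category, which makes capacities trivial inside each matching), and your treatment of the leftover phase via saturation is essentially the paper's; but without the envy-graph ordering and its acyclicity lemma, the invariant at the heart of the proof does not go through.
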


\begin{algorithm}[h]
\begin{algorithmic}[1]
%\SetAlgoLined
\STATE
\textbf{Initialize}: 
	$\forall i\in [n] \  \allocation{i}{}\leftarrow \emptyset$.
\FOR{each category $h$}
\STATE $\forall i\in [n] \  \allocation{i}{h}\leftarrow \emptyset$.
\STATE $T^h := \max_{i\in N}\capacity{i}{h}$.
\FOR{$t = 1,\ldots,T^h$}
\STATE Construct the agent-item graph $G^h_t$ (see Definition \ref{def:agent-item-graph}).
\STATE Construct the feasible-envy graph corresponding to $\xx$.
\STATE $\sigma\leftarrow$ a topological order on the feasible envy-graph.
\STATE Find a priority matching in $G^h_t$ according to $\sigma$
 (see Definition \ref{def:priority}).
\STATE For every agent $i$ who is matched to an item $g_i$: $ \allocation{i}{h}\leftarrow \allocation{i}{h}\cup\{g\}$.
\ENDFOR
\STATE	Allocate the unmatched items of $\Category{h}$ arbitrarily to agents with remaining capacity.
\STATE	$\forall i\in [n] \  \allocation{i}{}\leftarrow \allocation{i}{}\cup\allocation{i}{h}$.
\ENDFOR
\end{algorithmic}
\caption{
Iterated Priority Matching
\label{alg:partition-binary}}
\end{algorithm}

Two key tools we use are the \emph{agent-item graph} and the \emph{priority matching}, defined next.
\begin{definition}[Agent-item graph]
\label{def:agent-item-graph}
Given a category $h$ and a partial allocation $\xx$,
the \emph{agent-item graph} is a bipartite graph $G^h$, where one side consists of the agents with remaining capacity (i.e., agents such that $|\allocation{i}{h}| < \capacity{i}{h}$), and one side consists of the unallocated items of $\Category{h}$.
An edge $(i,j)$ exists in $G^h$ iff $j$ is a desired item of $i$, that is,  $\valuation{i}{j}=1$).
\end{definition}

\begin{definition}[Priority matching]
\label{def:priority}
Given a graph $G = (V, E)$, a matching in $G$ is a subset of edges $\mu\subseteq E$ such that each vertex $u\in V$ is adjacent to at most one edge in $\mu$. Given a linear order on the vertices, $\sigma [ 1 ], \dots ,\sigma [ n ]$, every matching is associated with a binary vector of size $n$, where element $i$ equals $1$ whenever vertex $\sigma [ i ]$ is matched. 
The \emph{priority matching of $\sigma$} is the matching associated with the maximum such vector in the lexicographic order. Note that every ordering $\sigma$ over the vertices  yields a potentially different priority-matching.
\end{definition}

Priority matching was introduced by \citet{roth2005pairwise} in the context of kidney exchange, where they prove that every priority matching is also a \emph{maximum-cardinality matching}; that is, it maximizes the total number of saturated vertices in $V$.\footnote{\citet{okumura2014priority} extends this result to priority classes of arbitrary sizes, and shows a polynomial time algorithm for finding a priority matching. Simpler algorithms were presented by \citet{turner2015maximium,turner2015faster}.}

The Iterated Priority Matching algorithm (Algorithm \ref{alg:partition-binary})
works category-by-category.
For each category $h$, the items of $\Category{h}$ are allocated in two phases, namely the {\em matching phase} and the {\em leftover phase}. The matching phase proceeds in several iterations, where in each iteration, every agent receives at most one item. The number of iterations is at most the maximum capacity of an agent in $\Category{h}$, denoted by $T^h := \max_{i\in N}\capacity{i}{h}$.

Given the current allocation, let $\sigma$ be a topological order over the agents in the feasible envy graph (we shall soon show that the feasible envy-graph is cycle free). 
In each iteration $t$ of the matching phase, we construct the agent-item graph $G^h_t$, and then compute a priority-matching in $G^h_t$ with respect to $\sigma$, and augment agent allocations by the obtained matches.
We then update the feasible envy graph and proceed to the next iteration, where the next set of items in $\Category{h}$ is allocated. 

After at most $T^h$ iterations, all remaining items of category $\Category{h}$ contribute value $0$ to all agents with remaining capacity, and we move to the {\em leftover phase}. In this phase, we allocate the leftover items arbitrarily among agents, respecting feasibility constraints. This is possible since a feasible allocation exists by assumption.

To prove the correctness of the algorithm, it suffices to prove that every feasible envy-graph constructed in the process is cycle-free, and that the feasible envy between any two agents is at most $1$. We prove both conditions simultaneously in the following lemma.

\begin{lemma}
In every iteration of Algorithm \ref{alg:partition-binary}:

(a) The feasible envy-graph has no cycles; 

(b) For every $i,j\in N$, $\pe{i}{j}\leq 1$.
\end{lemma}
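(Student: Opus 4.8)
The plan is to prove (a) and (b) together by induction on the iterations of Algorithm~\ref{alg:partition-binary}, maintaining the joint invariant that the feasible envy graph $\G(\xx)$ is acyclic and $\pe{i}{j}\le 1$ for all $i,j$. The empty allocation is the base case. For the inductive step, fix an iteration in category $h$ with the current order $\sigma$ (a topological order of the acyclic graph, so whenever $i$ F-envies $j$ then $i$ precedes $j$). The first, routine step is to record how one priority-matching round changes feasible values. Because valuations are binary and the constraints form a partition matroid, for a feasible bundle $\fvi[\Allocation{i}]$ equals the number of $i$-desired items in $\Allocation{i}$, while $\fvi[\Allocation{j}]=\sum_{h'}\min(\capacity{i}{h'},\ \#\{\,i\text{-desired items in }\allocation{j}{h'}\,\})$. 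Hence a matched agent (who receives a desired item while having remaining capacity) increases $\fvi[\Allocation{i}]$ by exactly $1$, an unmatched agent by $0$, and for every $j$ the quantity $\fvi[\Allocation{j}]$ increases by at most $1$, and only if $j$ is matched.

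For part (a) I would argue purely structurally, classifying the edges of the new graph. Since before-envy is at most $1$ and feasible values are integers, an edge $i\to j$ present before the round satisfies $\fvi[\Allocation{j}]=\fvi[\Allocation{i}]+1$. A short computation then shows: (i) every \emph{newly} created edge $i\to j$ forces $i$ unmatched and $j$ matched; and (ii) every \emph{surviving} edge out of a matched agent points to a matched agent. Consequently, every edge entering an unmatched agent originates at an unmatched agent. Therefore any directed cycle in the new graph lies entirely within the unmatched agents or entirely within the matched agents, and in either case consists solely of edges already present before the round --- contradicting the inductive acyclicity. Notably, this part uses only the binary/partition value bookkeeping and the inductive envy bound, not the matching rule itself.

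Part (b) is the crux and is where priority matching is essential. The feasible envy of $i$ toward $j$ can rise only if $j$ is matched to an $i$-desired item (raising $\fvi[\Allocation{j}]$) while $i$ is unmatched (so $\fvi[\Allocation{i}]$ is unchanged); the only danger is an increase from $1$ to $2$. If $\pe{i}{j}=1$ before the round, then $i$ F-envies $j$, so $i$ precedes $j$ in $\sigma$ and thus has strictly higher priority. I would split into two cases. If $i$ still has remaining capacity in category $h$, then the item $g_j$ matched to $j$ is $i$-desired and available, so it is also an edge at $i$ in the agent--item graph; replacing the matched pair $(j,g_j)$ by $(i,g_j)$ yields a valid matching whose agent-indicator vector is lexicographically larger (it improves at the earlier position $i$, leaving all earlier positions unchanged), contradicting that the chosen matching is the priority matching of $\sigma$. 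Hence this case cannot occur.

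The remaining case is that $i$ is saturated in category $h$, i.e.\ $|\allocation{i}{h}|=\capacity{i}{h}$. During the matching phase every item $i$ holds in $\Category{h}$ is $i$-desired, so $i$ already holds $\capacity{i}{h}$ desired items there; since $\min(\capacity{i}{h},\cdot)\le\capacity{i}{h}$, the category-$h$ contribution to $i$'s F-envy toward $j$ is at most $0$, so additional $i$-desired items given to $j$ in $\Category{h}$ cannot push the total envy above its value at the start of category $h$, which is at most $1$ by the induction hypothesis. The leftover phase allocates only items of value $0$ to every agent with remaining capacity, so it can create envy only at already-saturated agents and is covered by the same saturation argument. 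I expect the lexicographic-exchange step of the priority matching, together with correctly isolating the saturated-agent case, to be the main obstacle; the acyclicity argument, by contrast, follows cleanly from the $\{0,1\}$ structure of the value changes.
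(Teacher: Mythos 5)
Your proof is correct, and part (a) is where it genuinely departs from the paper. The paper establishes acyclicity by a global, within-category argument: assuming a cycle $i_1\to i_2\to\cdots\to i_1$ appears after iteration $t$ of category $h$, it tracks the sets $Q_1,Q_2,Q_3,\dots$ of desired items accumulated along the cycle and uses the maximum-cardinality property of priority matchings to show that every agent on the cycle must have received at least $q+1$ desired items during the category, contradicting $|Q_1|=q$. Your argument is instead local to a single matching round: from the joint invariant (integer feasible values, so every envy edge has gap exactly $1$) you classify the edges after the round --- new edges run unmatched$\,\to\,$matched, and surviving edges leaving a matched agent end at a matched agent --- hence no edge runs matched$\,\to\,$unmatched, any cycle stays within one class and so consists only of surviving edges, contradicting acyclicity before the round. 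This is shorter, uses only the $\{0,1\}$ value bookkeeping rather than the matching rule, and makes transparent why (a) and (b) must be carried jointly through the induction. For part (b) you follow essentially the paper's route: the paper's Cases 1 and 3 are absorbed into your observation that $\pe{i}{j}$ can rise only when $j$ is matched to an available $i$-desired item while $i$ is unmatched; your saturated case is the paper's Case 2 (with a correct extension to the leftover phase); and your explicit exchange --- replacing the pair $(j,g_j)$ by $(i,g_j)$ to obtain a lexicographically larger matching --- is exactly the content behind the paper's one-line appeal to the priority matching.

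One point to be aware of, which you share with the published proof: your acyclicity argument covers the matching rounds, but the topological order at the start of the \emph{next} category is computed after the leftover phase, during which envy can still change for agents saturated within the category. Indeed, your own bookkeeping shows that for such an agent $i$ the envy toward $j$ equals its pre-category value plus $\min(\capacity{i}{h},d)-\capacity{i}{h}\le 0$, so an envy edge that vanished mid-category can reappear once leftover items desired by $i$ are handed to $j$. The matched/unmatched classification does not apply to this phase, so acyclicity of the post-leftover graph needs a separate word (e.g., a counting argument in the paper's style rules out such mixed cycles). The paper's proof glosses over this in exactly the same way --- it only argues acyclicity after $(h,t)$ for matching iterations --- so this is a shared omission rather than a defect specific to your write-up.
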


\begin{proof}
The proof is by induction on the categories and iterations. Both claims clearly hold from the outset (i.e., under the empty allocation).
In the analysis below, we refer to states \emph{before $(h,t)$} and \emph{after $(h,t)$} to denote the states before and after iteration $t$ of category $h$, respectively. 
 
\paragraph{Proof of property (a).}
We assume that property (a) holds before $(h,1)$ (i.e., before starting to allocate items in category $h$). We prove that it holds after $(h,t)$ for every $t$.
Suppose by contradiction that after $(h,t)$ there is a cycle $i_1 \to \cdots \to i_p = i_1$  in the feasible envy-graph. By assumption (a), the cycle did not exist before category $h$, so at least one edge was created during the first $t$ steps in category $h$. Suppose w.l.o.g. that it is the edge $i_1\to i_2$.  

Let $Q_1$ be the set of items desired by $i_1$ that are allocated to $i_1$ up to iteration $t$ of category $h$, and let $q=|Q_1|$.
Agent $i_1$ must have gotten these $q$ items in the first $q$ iterations of $h$ (otherwise, there exists an iteration $\leq q$ in which $i_1$ did not get an item, but a desired item remained unallocated, contradicting maximum priority matching). 

Let $Q_2$ be the set of items desired by $i_1$ that are allocated to $i_2$ up to iteration $t$ of category $h$.
The fact that $i_1$ started to envy $i_2$ during category $h$ implies that $|Q_2|\geq q+1$ and $\capacity{i_1}{h}\geq q+1$.
Agent $i_2$ must have gotten all these items in the first $q+1$ iterations of $h$ (otherwise, one of these items could have been allocated to $i_1$ in iteration $q+1$, contradicting maximum priority matching). 
This implies that in fact $|Q_2|=q+1$.
It also implies that iteration $q+1$ is still within the matching phase, since there is an item desired by $i_1$, and $i_1$ has remaining capacity.
Therefore, $i_2$ received at least $q+1$ items within the matching phase, implying that $i_2$'s value increased by at least $q+1$ up to iteration $t$ of category $h$. 

Let $Q_3$ be the set of items desired by $i_2$ that are allocated to $i_3$ up to iteration $t$ of category $h$. 
By assumption of the envy-cycle, $i_2$ envies $i_3$ after $(h,t)$.
By the induction assumption,
$\pe{i_2}{i_3} \leq 1$
before $(h,1)$.
Since $i_2$'s value increased by at least $q+1$ up to iteration $t$ of category $h$, it must hold that $|Q_3|\geq q+1$.
We now claim that before $(h,q+1)$, at most one item of $Q_3$ was available, and $i_3$ got it in this iteration. Otherwise, one could allocate one of those items to $i_2$, and allocate the item that $i_2$ received in iteration $q+1$ (that is desired by $i_1$), to $i_1$, increasing the priority matching. 

We conclude that $i_3$ got an item at each one of the first $q+1$ iterations of category $h$, as $|Q_3|\geq q+1$.  
Since all of these iterations are within the matching phase, all of these items are desired by $i_3$. 
Therefore, $i_3$'s value increases by at least $q+1$.
Repeating this argument, we conclude that every agent along the cycle received at least $q+1$ desired items during the first $t$ steps of $h$, including agent $i_p = i_1$; but this is in contradiction to the fact that $i_1$ received $q=|Q_1|$ items.

\paragraph{Proof of property (b).}
We assume that property (b) holds for every iteration before $(h,t)$ and prove that it holds after $(h,t)$.
By the induction assumption, 
before $(h,1),\ldots, (h,t)$ we had $\pe{i_1}{i_2}\leq 1$. 
We consider several cases.

\emph{Case (1)}: before $(h,t)$, we had $\pe{i_1}{i_2} = 0$.
Since at most one item is allocated to $i_2$ at iteration $t$, we must have $\pe{i_1}{i_2} \leq 1$ after $(h,t)$.

\emph{Case (2)}: before $(h,t)$, the capacity of agent $i_1$ was exhausted, so we have 
$\valuation{i_1}{\allocation{i_1}{h}}=\capacity{i_1}{h}\geq\fvaluation{i_1}{\allocation{i_2}{h}}$.
But before $(h,1)$ we had $\pe{i_1}{i_2}\leq 1$,
and the envy cannot increase after adding $\capacity{i_1}{h}$ to  $\valuation{i_1}{X_1}$ and at most $\capacity{i_1}{h}$ 
to $\fvaluation{i_1}{X_2}$.

\emph{Case (3)}: Agent $i_1$ does not desire any item remaining before $(h,t)$. Then clearly the envy of $i_1$ cannot change during $(h,t)$.

The remaining case is that ,before $(h,t)$ we had $\pe{i_1}{i_2} = 1$,
the capacity of $i_1$ was not exhausted, and $i_1$ desires at least one item. 
Then, $i_1$ precedes $i_2$ in the topological order $\sigma$ in iteration $t$, so the priority-matching on $G^h_t$ prefers to match $i_1$, than to match $i_2$ and leave $i_1$ unallocated. Therefore, the envy of $i_1$ at $i_2$ does not increase during $(h,t)$.
%
%Then, since $g$ is an available item that $i_1$ desires, it must be that the capacity of $i_1$ was exhausted during the matching phase. 
%This implies that $i_1$ desires all the items she receives in category $\Category{h}$.
%That is, if $\xx$ is the allocation after $(h,t)$, 
%then 
%$\valuation{i_1}{\allocation{i_1}{h}}=\capacity{i_1}{h}\geq\fvaluation{i_1}{\allocation{i_2}{h}}$.
%
%
%But we assumed that after $(h,t)$, $\pe{i_1}{i_2}>1$; that is,
%$\valuation{i_1}{\allocation{i_1}{}}\leq\fvaluation{i_1}{\allocation{i_2}{}}-2$.
%Let $\xx'$ be the allocation before $(h,1)$. By additivity and the above inequalities, 
%we get that
%$\valuation{i_1}{\Allocation{i_2}} = \valuation{i_1}{\allocation{i_1}{}} - \valuation{i_1}{\allocation{i_1}{h}} 
%\leq \fvaluation{i_1}{\allocation{i_2}{}}-2 -\fvaluation{i_1}{\allocation{i_2}{h}} 
%= \fvaluation{i_1}{\Allocation{i_2}'} -2$,
%implying that the allocation was not \fefone{} before category $h$; a contradiction.
\end{proof}

%\subsection{Pseudo Code for Algorithm {\em Iterated Priority Matching}}
%\begin{algorithm}[h]
%\SetAlgoLined
%\textbf{Initialize}: \\
%	$\forall i\in [n] \  \allocation{i}{}\leftarrow \emptyset$\\
%\For{each category $h$}{
%	$\forall i\in [n] \  \allocation{i}{h}\leftarrow \emptyset$\\
%	Let $T^h := \max_{i\in N}\capacity{i}{h}$;\\
%	\For{$t = 1,\ldots,T^h$}{
%		Construct a bipartite agent-item graph $G^h_t$;\\
%		Construct the feasible-envy graph corresponding to $\xx$;\\
%		$\sigma\leftarrow$ a topological order on the feasible envy-graph;\\
%		Find a priority matching in $G^h_t$ according to $\sigma$;\\
%		For every item $g$ matched to agent $i$ : $ \allocation{i}{h}\leftarrow \allocation{i}{h}\cup\{g\}$;\\ 
%	}
%	Allocate the unmatched items of $\Category{h}$ arbitrarily to agents with remaining capacity;\\
%	$\forall i\in [n] \  \allocation{i}{}\leftarrow \allocation{i}{}\cup\allocation{i}{h}$\\
%}
%\caption{
%Iterated Priority Matching
%\label{alg:partition-binary}}
%\end{algorithm}

\subsection{Pareto Efficiency}
\label{partition-binary-pe}
We show that if capacities are binary (that is, $\capacity{i}{h}\in\{0,1\}$ for all $i,h$), then Algorithm~\ref{alg:partition-binary} returns a Pareto efficient allocation, but this is not the case under arbitrary (non-binary) capacities. 

\begin{obs}
	In settings with partition constraints with heterogeneous {\em binary} capacities and heterogeneous binary valuations, Algorithm~\ref{alg:partition-binary} returns a Pareto efficient allocation.
\end{obs}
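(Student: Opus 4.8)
The plan is to prove the observation by showing that, under binary capacities and binary valuations, Algorithm~\ref{alg:partition-binary} outputs an allocation of maximum social welfare, and then to invoke the standard fact that any allocation maximizing the sum of utilities is Pareto efficient. The latter is immediate: if some complete feasible allocation Pareto-dominated the output, it would weakly increase every agent's value and strictly increase at least one, hence strictly increase $\sum_{i\in N}\vi[\Allocation{i}]$, contradicting maximality. So the whole burden falls on the social-welfare claim.

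To establish social-welfare maximality I would argue category by category. With binary capacities each agent receives at most one item per category, so $T^h=\max_{i\in N}\capacity{i}{h}\leq 1$, and the matching phase of each category $\Category{h}$ consists of a single priority-matching step on the agent-item graph, in which every agent $i$ with $\capacity{i}{h}=1$ is linked to the items of $\Category{h}$ she desires. By the result of \citet{roth2005pairwise} cited above, a priority matching is a maximum-cardinality matching, so the algorithm matches the largest possible number of agents to items they value at $1$ in $\Category{h}$. Since valuations are binary and additive, $\vi[\Allocation{i}]$ counts the desired items received by $i$, and by the partition structure $\sum_{i\in N}\vi[\Allocation{i}]$ decomposes as a sum of independent per-category contributions, where the contribution of $\Category{h}$ equals the cardinality of the matching realized there.

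The one step that needs care is verifying that this per-category maximum is attainable within a \emph{complete} feasible allocation: given the maximum matching $\mu$, the leftover items must still be placed on agents with remaining capacity, and one must rule out that completeness forces sacrificing matched value. This is exactly where binariness of the capacities (rather than general capacities) is used. After the single matching step the matched agents are saturated, and the number of agents with $\capacity{i}{h}=1$ equals $\sum_{i}\capacity{i}{h}\geq|\Category{h}|$ by Assumption~\ref{asm:feasible}; hence at least $|\Category{h}|-|\mu|$ unmatched agents retain free capacity, which is precisely the number of leftover items, so the leftover phase can place every remaining item without disturbing $\mu$. (Maximality of $\mu$ also guarantees no remaining agent desires a leftover item, so these placements contribute value $0$.) Summing the maximized contributions over all categories shows the output maximizes social welfare among all complete feasible allocations, and the Pareto-efficiency conclusion follows as above.
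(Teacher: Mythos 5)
Your proposal is correct and follows essentially the same route as the paper's own proof: with binary capacities each category reduces to a single priority matching, which by \citet{roth2005pairwise} has maximum cardinality and hence maximizes welfare within the category, and additivity across categories plus the standard ``SWM implies Pareto efficiency'' fact completes the argument. The only difference is that you make explicit a detail the paper leaves implicit --- that the leftover phase can absorb the remaining $|\Category{h}|-|\mu|$ items using Assumption~\ref{asm:feasible} without disturbing the matching, and that maximality of $\mu$ forces these placements to contribute zero value --- which is a welcome tightening rather than a different approach.
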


\begin{proof}
Under binary capacities, the algorithm runs a single priority matching in each category. As this matching is of maximum cardinality, it maximizes the social welfare within this category. From additivity, the allocation that maximizes SW within every category maximizes SW over all categories. Any welfare-maximizing allocation is  Pareto efficient.
\end{proof}

%In the special case in which the capacities are binary, i.e., $\capacity{i}{h}\in\{0,1\}$ for all $i,h$, the algorithm runs a single priority matching in each category. Since this matching is also maximum-cardinality, it maximizes the sum of utilities in each category, and thus maximizes the overall social welfare. Therefore, the algorithm in this case returns a Pareto-efficient allocation.
%\er{[TODO: check whether we define Pareto-efficient anywhere]}

The following example shows that when capacities may be larger than 1, even when there is a single category and two agents with the same capacity, the allocation returned by Algorithm~\ref{alg:partition-binary} may not be Pareto efficient.

\begin{example}
Consider the setting and allocation depicted in the following table.
\begin{center}
\begin{tabular}{|c||c|c|}
\hline
Capacities & Alice & Bob\\
\hline\hline
$\capacity{A}{}=2$ & 1,1 & 0,1\\
$\capacity{B}{}=2$& 1,0 & 1,0\\
 \hline
\end{tabular}
\end{center}
There are two agents sharing an identical uniform matroid with capacity $2$, and four items: $(1,0)$, $(1,0)$, $(1,1)$, $(0,1)$ (recall that $(x,y)$ denotes an item that gives value $x$ to Alice and value $y$ to Bob).
	We claim that the allocation depicted in the table can be the outcome of Algorithm \ref{alg:partition-binary}, and is not Pareto efficient. 
% Example from here: https://cs.stackexchange.com/a/129826/1342
Indeed, in the first iteration the priority matching may assign $(1,1)$ to Alice and $(0,1)$ to Bob. Then, Bob does not want any remaining item, so in the second iteration Alice gets the item $(1,0)$. Finally, in the leftover phase Bob gets the item $(1,0)$. In the obtained allocation, Alice has value $2$ and Bob has value $1$. This allocation is Pareto-dominated by the allocation giving items $(1,0),(1,0)$ to Alice and $(1,1),(0,1)$ to Bob, where Alice is indifferent and Bob is strictly better off.\qed
\end{example}

%\begin{table}[h]
%\centering
%\begin{tabular}{|c||c|c|}
%\hline
%Capacities & Alice & Bob\\
%\hline\hline
%$\capacity{A}{}=2$ & 1,1 & 0,1\\
%$\capacity{B}{}=2$& 1,0 & 1,0\\
% \hline
%\end{tabular}
%\caption{
%\small
%\label{tab:binary_non_pareto}
%An outcome allocation of Algorithm \ref{alg:partition-binary} that is not Pareto-efficient (for non-binary capacities) .
%}
%\end{table}

It remains open whether the setting with binary valuations and general heterogeneous capacities
always admits an allocation that is both \fefone{} and Pareto-efficient.

\section{Partition Matroids with Two Agents}
\label{partition_2}

\label{2_agents_sec}

In this section we present an algorithm for two agents with heterogeneous capacities.

\begin{theorem}
\label{thm:2-agents}
In every setting with two agents and partition matroid constraints, an \fefone{} allocation exists and can be computed efficiently by Algorithm $RR^2$ (Algorithm~\ref{alg:rr2}).
\end{theorem}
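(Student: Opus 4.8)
The plan is to reduce the problem to a per-category analysis by first establishing the structural fact that, for partition matroids with common categories, the feasible valuation decomposes additively over categories: for any set $T$ we have $\hat v_i(T)=\sum_h \hat v_i(T\cap C^h)$, and within a single category $\hat v_i(S)$ is just the total $v_i$-value of the $k_i^h$ highest-valued items of $S$. Consequently an agent's feasible envy toward another decomposes into a sum of per-category feasible envies, and F-EF1 amounts to exhibiting, for each ordered pair, a single item whose removal cancels this sum. Since there are only two agents, I would not invoke cycle elimination at all: the only cycle in the feasible-envy graph is a mutual-envy $2$-cycle, which is harmless for F-EF1 as long as each of the two directions is individually removable by one item. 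So the whole task is to bound each direction directly.

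Next I would analyze $RR^2$ as a capped-round-robin procedure and argue F-EF1 by induction on the categories, using the decomposition above. The template generalizes the back-and-forth idea behind Theorem~\ref{thm:2_categories}: within each category run Capped Round Robin with the agent who is currently behind going first, so that by Theorem~\ref{thm:uniform} the prioritized agent gains no feasible envy in that category while the other agent's feasible envy grows by at most one item there. Using additivity of $\hat v$ over categories, a direction of envy that is already removable by some item $g$ in an earlier category stays removable by the same $g$ after any category in which that agent is prioritized: the per-category increment to its envy is nonpositive (the prioritized agent weakly prefers its own within-category bundle), while the removability inequality is only helped. The target of the induction is the invariant that after every category the partial allocation is F-EF1, i.e.\ in each direction the accumulated feasible envy is removable by one item.

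The hard part --- and the reason the two-category proof does not simply iterate --- is controlling the \emph{accumulation} of feasible envy across three or more categories. In Theorem~\ref{thm:2_categories} each agent could envy in at most one category, so one removal sufficed; but once \emph{both} agents carry feasible envy, processing a further category forces one of them to acquire envy in a second category, and a sum spread over two categories is in general not removable by a single item. The crux of the proof is therefore to show that the ordering used by $RR^2$ keeps each agent's total feasible envy concentrated enough that a single removal still works. I expect this to require a global pairing argument between the two agents' picks --- matching the $t$-th item taken by one agent with the $t$-th item taken by the other and showing the former weakly dominates the latter in the picker's valuation --- carried out against the top-$k$ feasible valuations rather than the raw ones. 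The delicate cases are those where a per-category capacity cap blocks the usual ``an earlier pick is weakly preferred to any later-available item'' comparison, because the envied item lies in a category the picker has already filled; here one must exploit that $\hat v_i$ counts only the top $k_i^h$ items of the other agent's within-category bundle, so the blocked comparisons fall outside the counted terms. Establishing this domination in the presence of both the global alternation and the per-category caps is the main obstacle, and it is precisely what the specific structure of $RR^2$ is engineered to handle.
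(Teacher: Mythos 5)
You have correctly set up the per-category decomposition of $\hat v_i$ and correctly located the crux --- controlling the accumulation of feasible envy once both directions of envy are live --- but your proposal stops exactly where the proof has to begin, and the mechanism you gesture at is not the one $RR^2$ uses. First, you misdescribe the algorithm: $RR^2$ does not let ``the agent who is currently behind'' play first. The agents alternate strictly in choosing categories, and --- this is the ingredient your sketch omits entirely --- each agent, on her turn, picks the remaining category maximizing her \emph{surplus} $s_i^h(\mathcal{R}(v_1,v_2,i)^h)$, the margin she would gain over the other agent by playing first there. Without this greedy surplus ordering there is nothing to prevent the accumulation you worry about, and your proposed invariant (``after every category the partial allocation is \fefone'') is never established; you explicitly leave the ``global pairing argument'' as an expected obstacle rather than a proved step, so the proposal contains no proof of the theorem.

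Second, the paper's actual resolution is different in kind from the one you anticipate. The key fact is Lemma~\ref{lem:surplus}: for every category $h$, the surplus of agent $i$ when playing first dominates the negative of her surplus when playing second, $s_i^h(\mathcal{R}(v_1,v_2,i)^h)\geq -\,s_i^h(\mathcal{R}(v_1,v_2,-i)^h)$, proved via four exchange lemmas about CRR under identical valuations (truthful play is a best response, the first chooser weakly prefers her bundle to the second chooser's, etc.). Combining this with the greedy category choice, the first category-chooser's total envy telescopes into a sum of pairwise-nonnegative terms $s_1^{2t-1}-s_1^{2t}$, so agent 1 ends with \emph{no} F-envy at all --- not merely envy removable by one item. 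Agent 2 is then handled by a thought experiment: deleting the first category makes her the first chooser in the residual instance, so by the same telescoping she has no envy there, and her entire envy is confined to the first category, where Theorem~\ref{thm:uniform} caps it at one item. The ``domination of the $t$-th picks'' pairing you propose, carried out globally across categories against capped feasible valuations, is precisely the kind of local item-by-item argument that the per-category caps obstruct (as you yourself note) and that the surplus/telescoping machinery is designed to replace; identifying the difficulty does not discharge it.
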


To present the algorithm we introduce some notation.  
\begin{itemize}
	\item Given an allocation $\xx$, the {\em surplus} of agent $i$ in category $h$ is 
	$$
	\surplus{i}{h}{\xx} :=  \fvaluation{i}{\allocation{i}{h}} - \fvaluation{i}{\allocation{j}{h}}.
	$$
That is, the difference between $i$'s value for her own bundle and her value for $j$'s bundle.
	\item Given agents $1,2$,  $\ell\in \{1,2\}$, valuation functions $\Valuation{},\Valuation{}'$ and category $h$,  $\crr{\Valuation{}}{\Valuation{}'}{\ell}^h$ is the allocation obtained by Capped Round Robin (Algorithm \ref{alg:crr} in \S\ref{sec:2-categories}) for  category $h$, under valuations 
	$\Valuation{1}=\Valuation{}, \Valuation{2}=\Valuation{}'$, and 
	where agent $\ell$ plays first.
	When clear in the context, we omit the superscript $h$ from $\crr{\Valuation{}}{\Valuation{}'}{\ell}^h$.
\end{itemize}

We are now ready to present Algorithm ``Round Robin Squared'' ($RR^2$) .
In $RR^2$, there are two layers of round robin (RR), one layer for choosing the next category, and one layer for choosing items within a category.
For every agent $i$, the categories are ordered by descending order of the surplus $\surplus{i}{h}{\crr{\Valuation{1}}{\Valuation{2}}{i}}$, that is, the surplus that agent $i$ can gain over the other agent by playing first in category $h$. Denote this order $\pi_i$. 

In the first iteration, agent 1 chooses the first category in $\pi_1$. Within this category, the items are allocated according to Capped Round Robin (CRR) (Algorithm \ref{alg:crr}), with agent 1 choosing first.
In the second iteration, agent 2 chooses the first category in $\pi_2$ that has not been chosen yet. Within this category, the items are allocated according to CRR, with agent 2 choosing first.
The algorithm proceeds in this way, where in every iteration, the agent who chooses the next category flips; that agent chooses the highest category in her surplus-order that has not been chosen yet, and within that category, agents are allocated according to CRR with that agent choosing first.
This proceeds until all categories are allocated.

\begin{algorithm}[h]
\begin{algorithmic}[1]
\REQUIRE{A set of items $M$, categories $\Category{1},...,\Category{l}$, capacities $\capacity{i}{h}$ for every $i=1,2,h\in[l]$;
$a\in\{1,2\}$ the first agent to choose.
}
\STATE \textbf{Initialize for all $i\in\{1,2\}$}: $\Allocation{i}\leftarrow \emptyset$; $\pi_i\leftarrow$ Categories listed by descending order of  $\surplus{i}{h}{\crr{\Valuation{1}}{\Valuation{2}}{i}}$.
\WHILE{there are unallocated categories}
\STATE $h\leftarrow$ the first category in $\pi_a$ not yet allocated.
\STATE Run CRR on category $h$. Let	$\allocation{}{h}\leftarrow \crr{\Valuation{1}}{\Valuation{2}}{a}^{h}$.
\STATE 	For all $i\in\{1,2\}$, let  $\Allocation{i}\leftarrow \Allocation{i}\cup\allocation{i}{h}$.
\STATE 	Switch $a$ to be the other agent.
\ENDWHILE
\end{algorithmic}
%\SetAlgoLined
\caption{RR-Squared ($RR^2$)
\label{alg:rr2}
}
\end{algorithm}

The key lemma in our proof asserts that the surplus of an agent $i$ when playing first within a category $h$ is at least as large as minus the surplus of the same agent when playing second in the same category. Below, we denote by $-i$ the agent who is not $i$.

\begin{lemma}
\label{lem:surplus}
For every category $h$ and every $i\in\{1,2\}$:
\begin{align*}
\surplus{i}{h}{\crr{\Valuation{1}}{\Valuation{2}}{i}^{h}} \geq -\surplus{i}{h}{\crr{\Valuation{1}}{\Valuation{2}}{-i}^{h}}.
\end{align*}
%where $j$ is the other agent.
\end{lemma}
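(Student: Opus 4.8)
The plan is to directly compare the two CRR runs within a single category $h$: one where agent $i$ plays first (producing allocation $\crr{\Valuation{1}}{\Valuation{2}}{i}^{h}$) and one where agent $-i$ plays first (producing $\crr{\Valuation{1}}{\Valuation{2}}{-i}^{h}$). Since the statement concerns a single category with a uniform-matroid (capacity) constraint, I would reduce the whole lemma to a property of capped round-robin on one category. The inequality to establish is $\surplus{i}{h}{\crr{\Valuation{1}}{\Valuation{2}}{i}^{h}} + \surplus{i}{h}{\crr{\Valuation{1}}{\Valuation{2}}{-i}^{h}} \geq 0$, i.e.\ the sum of agent $i$'s surpluses across the two orderings is nonnegative. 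Expanding the definition of surplus, this is
\begin{align*}
\big(\fvi[{\allocation{i}{h}}] - \fvi[{\allocation{-i}{h}}]\big)_{\text{$i$ first}}
+
\big(\fvi[{\allocation{i}{h}}] - \fvi[{\allocation{-i}{h}}]\big)_{\text{$-i$ first}}
\geq 0,
\end{align*}
where each bracketed term is evaluated in the corresponding CRR run. So I want to show that $i$'s feasible-value gain from being the first-mover outweighs $i$'s feasible-value loss from being the second-mover.

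**Key steps.** First I would invoke Theorem~\ref{thm:uniform}: in capped round robin on a single category, whenever $i$ plays first, $\fvi[{\allocation{i}{h}}] = \vi[{\allocation{i}{h}}] \geq \fvaluation{i}{\allocation{-i}{h}}$, so the ``$i$ first'' surplus is already nonnegative. That immediately gives half of what I need; the real content is bounding the ``$-i$ first'' surplus from below, which can be negative. For the ``$-i$ first'' case, Theorem~\ref{thm:uniform} guarantees $\fvi[{\allocation{i}{h}}] \geq \fvaluation{i}{\allocation{-i}{h} \setminus \{g\}}$ for the first item $g$ that $-i$ picks, i.e.\ the allocation is F-EF1 from $i$'s perspective even though $i$ moved second. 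The natural approach is to set up an item-by-item coupling between the two runs: order the items each agent receives by the round in which they were chosen, and argue that in the round-robin structure the sequence of values $i$ can secure by moving first dominates, round by round, the sequence $i$ is denied by moving second. Because both runs operate on the same item set with the same capacities and the same valuation $\Valuation{i}$, the greedy nature of CRR lets me pair up the ``extra'' items $i$ gets as first-mover against the ``missing'' items as second-mover.

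**The main obstacle.** The hard part will be handling the interaction between $i$'s capacity $\capacity{i}{h}$ and the feasibility operator $\feasible{i}{\cdot}$ applied to $-i$'s bundle. The quantity $\fvaluation{i}{\allocation{-i}{h}}$ is the value of $i$'s \emph{best feasible subset} of $-i$'s bundle, which truncates at $\capacity{i}{h}$ items; this truncation behaves differently across the two runs because $-i$'s bundle has different size and composition depending on who moved first. I expect to need a careful argument that the best-feasible-subset truncation does not break the round-by-round domination — essentially showing that the top $\capacity{i}{h}$ $\Valuation{i}$-values available to $-i$ in the ``$-i$ first'' run, minus what $i$ actually secures there, is offset by the advantage $i$ gains in the ``$i$ first'' run. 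I would formalize this by comparing, for each rank $r \le \capacity{i}{h}$, the $r$-th highest $\Valuation{i}$-value item that $i$ holds versus the $r$-th highest that $-i$ holds, exploiting that the two CRR runs only differ in who breaks the tie at each round. The cleanest route is likely to argue that switching the first-mover can change each agent's per-round pick by at most one ``slot'' in the sorted order, so the pointwise differences telescope to the claimed inequality; making this coupling precise, especially across the feasibility truncation, is where the main care is required.
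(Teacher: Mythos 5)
Your reduction to a single category and the observation that the first-mover surplus is nonnegative (via Theorem~\ref{thm:uniform}) are correct, but they only dispose of the trivial case; the lemma's content is exactly the cross-run comparison, and that is where your sketch has a genuine gap. The coupling claim you float --- that ``the two CRR runs only differ in who breaks the tie at each round'' and that each pick shifts by at most one slot so the differences telescope --- is not true as stated: moving agent $-i$ to the front changes the \emph{entire interleaving}, the composition of $-i$'s bundle is governed by $\Valuation{-i}$ rather than $\Valuation{i}$, and the bundles $\allocation{-i}{h}$ in the two runs can consist of largely different items. Worse, the lemma holds with \emph{equality} in very simple instances (e.g., capacities $1,1$ and two items valued $10$ and $9$ by agent $i$ give surpluses $+1$ and $-1$), so any slack lost in an ``at most one slot'' bound composed over rounds and then passed through the truncation $\feasible{i}{\cdot}$ would break the argument; a correct coupling must be exact in the worst case, and you give no mechanism for comparing $\fvaluation{i}{\allocation{-i}{h}}$ across the two runs when the underlying sets differ.

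The paper resolves precisely this difficulty by a different idea that your plan has no analogue of: it interpolates through a hypothetical ``copycat'' game in which both agents play by $\Valuation{i}$. Concretely (for $i=1$), it sandwiches the two surpluses around the identical-valuation quantities via four inequalities: facing a copycat is the worst case for the truthful player, both for her own bundle and for her feasible value of the opponent's bundle (Lemmas~\ref{lem:equal_val_different} and~\ref{lem:feasible_val_extension}, both consequences of the exchange argument in Lemma~\ref{lem:equal_val_truthful}); and within the identical-valuation game the first-mover/second-mover comparison becomes trivial because the bundles are \emph{nested} --- if $\capacity{1}{h}\leq\capacity{2}{h}$ then $\crr[1]{\vv}{\vv}{1}=\feasible{1}{\crr[2]{\vv}{\vv}{2}}$, and symmetrically otherwise (Lemma~\ref{lem:different_first_chooser}). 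This nesting is exactly what replaces your missing coupling: with heterogeneous valuations no such containment holds, which is why the direct run-versus-run comparison you propose does not go through without the copycat detour or some equally substantive substitute. To repair your proof you would need to formulate and prove a precise exchange property playing the role of Lemmas~\ref{lem:equal_val_truthful}--\ref{lem:different_first_chooser}; as written, the ``main obstacle'' paragraph names the hard step but does not supply it.
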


We first show how Lemma~\ref{lem:surplus} implies Theorem~\ref{thm:2-agents}. Then we prove the lemma itself.
\begin{proof}[Proof of Theorem \ref{thm:2-agents}]
Without loss of generality, suppose agent 1 is the first to choose a category. 
By reordering, let $\Category{1},...,\Category{\ell}$ be the categories in the order they are chosen.
If $\ell$ is odd, we add a dummy empty category to make it even.
We show first that agent 1 does not F-envy agent 2. We have
\begin{align}
&\vone[{\Allocation{1}}] - \fvaluation{1}{\Allocation{2}} = \sum_{h=1,\ldots,\ell}\vone[{\allocation{1}{h}}]-\sum_{h=1,\ldots,\ell}\fvaluation{1}{\allocation{2}{h}} 
\nonumber && \text{(by addtivity)} \\
&=\sum_{h=1,\ldots,\ell}(\vone[{\allocation{1}{h}}]-\fvaluation{1}{\allocation{2}{h}})
\nonumber
\\
&= \sum_{h \ is \ odd}\surplus{1}{h}{\crr{\vone}{\vtwo}{1}} + \sum_{h \ is \ even}\surplus{1}{h}{\crr{\vone}{\vtwo}{2}} 
 && \text{(by definition of surplus)} 
 \label{2agents_eq3}
\\ 
&\geq \sum_{h \ is \ odd}\surplus{1}{h}{\crr{\vone}{\vtwo}{1}} + \sum_{h \ is \ even}-\surplus{1}{h}{\crr{\vone}{\vtwo}{1}} 
 && \text{(by Lemma \ref{lem:surplus})} 
\nonumber
\\
&=\sum_{t=1,\ldots,\frac{\ell}{2}}(\surplus{1}{2t-1}{\crr{\vone}{\vtwo}{1}} -\surplus{1}{2t}{\crr{\vone}{\vtwo}{1}}).
 \label{2agents_eq5}
\end{align}
Equation \eqref{2agents_eq3} follows from the definition of surplus, the facts that agent 1 chooses the odd categories, and the agent who chooses the category is the one to choose first within this category. 
Since agent $1$ chooses the odd categories, and does so based on highest surplus, it implies that for every $t$,
$\surplus{1}{2t-1}{\crr{\vone}{\vtwo}{1}} \geq\surplus{1}{2t}{\crr{\vone}{\vtwo}{1}}$, as category $2t$ was available when agent $1$ chose category $2t-1$. 
		Therefore, every summand in the sum of \eqref{2agents_eq5} is non-negative. Thus, the whole sum is non-negative, implying that  $\vone[{\Allocation{1}}] \geq \fvaluation{1}{\Allocation{2}}$,
		as desired.
	
	We next show that agent 2 does not F-envy agent 1 beyond F-EF1.
	As a thought experiment, consider the same setting with the first chosen category removed. 
	Following the same reasoning as above, in this setting agent 2 does not F-envy agent 1.
	But within the first category, agent 2 can only F-envy agent 1 up to 1 item. 
	That is, there exists one item in the first category such that when it is removed, it eliminates the feasible envy of the second agent within that category, and thus eliminates her  feasible envy altogether.
	We conclude that the obtained allocation is \fefone.
\end{proof}

Now all that is left is to prove Lemma~\ref{lem:surplus}.
The proof is based on Lemmas \ref{lem:equal_val_truthful} through \ref{lem:different_first_chooser}, which are stated below. 
All four lemmas consider a setting with two agents with identical additive valuations $\Valuation{1}=\Valuation{2}=\Valuation{}$ playing CRR (Algorithm~\ref{alg:crr}) on a single category.

\begin{lemma}
\label{lem:equal_val_truthful}
If one agent plays according to $\Valuation{}$, the best strategy for the other agent is to play according to $\Valuation{}$. That is, for every additive valuation $\Valuation{}'$ and every $\ell\in \{1,2\}$:
\begin{align*}
(a)\ \ 
\fvaluation{}{\crr[1]{\Valuation{}}{\Valuation{}}{\ell}}\geq \fvaluation{}{\crr[1]{\Valuation{}'}{\Valuation{}}{\ell}}
\\
(b)\ \ 
\fvaluation{}{\crr[2]{\Valuation{}}{\Valuation{}}{\ell}}\geq \fvaluation{}{\crr[2]{\Valuation{}}{\Valuation{}'}{\ell}}
\end{align*}
\end{lemma}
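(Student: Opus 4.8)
The plan is to reduce both parts to a single ``greedy is optimal'' statement about Capped Round Robin (CRR, Algorithm~\ref{alg:crr}) and then prove that statement by a counting argument that exploits the opponent's greedy behaviour. I focus on part~(a); part~(b) is identical after interchanging the two agents. First I would record two preliminary observations. Since CRR never exceeds a capacity, agent~1's output bundle always lies in her (single-category) matroid, so its feasible value equals its plain value; that is, $\fvaluation{}{\crr[1]{\Valuation{}}{\Valuation{}}{\ell}}=\valuation{}{\crr[1]{\Valuation{}}{\Valuation{}}{\ell}}$ and likewise for the deviation $\Valuation{}'$. Hence (a) is equivalent to the inequality $\valuation{}{\crr[1]{\Valuation{}}{\Valuation{}}{\ell}}\ge \valuation{}{\crr[1]{\Valuation{}'}{\Valuation{}}{\ell}}$: truthful reporting weakly maximizes agent~1's true $\Valuation{}$-value. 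Second, the turn sequence of CRR (which agent picks at each step, and when an agent is skipped) depends only on the capacities and on $\ell$, not on the reported valuations, because the skip condition $|\allocation{i}{h}|=\capacity{i}{h}$ is purely a counting condition. I therefore fix this turn sequence and let $T_1$ be the set of global steps at which agent~1 picks.

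Next I would pin down the truthful outcome. When both agents report $\Valuation{}$, every pick takes a most valuable remaining item, so items are consumed in weakly decreasing order of value; writing $g_1,g_2,\dots$ for the items sorted by value (fixing any tie-break to make this precise), agent~1 receives exactly the items picked at her steps, and her value is $\sum_{t\in T_1}\valuation{}{g_t}$. The goal becomes: for every deviation $\Valuation{}'$, agent~1's realized bundle has value at most $\sum_{t\in T_1}\valuation{}{g_t}$.

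The heart of the proof is a domination claim. Sort agent~1's realized bundle under the deviation as $b_1\ge b_2\ge\cdots$ by value, and let $t_1<t_2<\cdots$ enumerate $T_1$; I would show $\valuation{}{b_r}\le \valuation{}{g_{t_r}}$ for every $r$, which sums to the desired inequality. The crucial sublemma is: for any threshold set $H=\{g:\valuation{}{g}\ge\theta\}$ of size $h$, agent~1 receives at most $|T_1\cap\{1,\dots,h\}|$ items of $H$ in any run against the greedy opponent. I would prove it by observing that at each of agent~2's turns among the first $h$ steps strictly fewer than $h$ items have been consumed, so some $H$-item is still available; being greedy with valuation $\Valuation{}$, agent~2 then takes an item of value $\ge\theta$, i.e.\ an $H$-item. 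Thus agent~2 grabs at least $h-|T_1\cap\{1,\dots,h\}|$ items of $H$, leaving at most $|T_1\cap\{1,\dots,h\}|$ for agent~1. Applying this with $\theta=\valuation{}{b_r}$: agent~1 holds the $r$ items $b_1,\dots,b_r$, all of value $\ge\theta$, so $r\le|T_1\cap\{1,\dots,h\}|$, whence her $r$-th turn satisfies $t_r\le h$ and therefore $\valuation{}{g_{t_r}}\ge\theta=\valuation{}{b_r}$.

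The main obstacle is precisely this sublemma: it is where the special structure of the setting --- that the opponent shares agent~1's valuation and plays greedily --- is essential, and it is what makes greedy a best response even though sequential round-robin picking is \emph{not} strategyproof in general. A minor technical point is tie-breaking, needed to make ``items are consumed in value order'' and the enumeration $g_1,g_2,\dots$ well defined; I would handle it by fixing a consistent total order on items (or an infinitesimal perturbation), noting that the sublemma itself is threshold-based and so needs no such care. Part~(b) then follows verbatim with the roles of the two agents swapped.
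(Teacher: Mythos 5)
Your proposal is correct and is essentially the paper's own argument in slightly different packaging: both proofs fix the (report-independent) turn sequence, sort the deviator's bundle by value, and show position-wise domination by the truthful bundle via a counting argument over items above a value threshold, using the fact that the truthful greedy opponent always grabs a remaining high-value item while such items survive. The paper phrases this as a contradiction at the smallest violating index (counting the $z_t + t - 1$ items above $v(\gamma_t)$), whereas you state it as a clean threshold-set sublemma, but the key idea is identical.
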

\begin{proof}
The two statements are obviously analogous; below we prove claim (b). 

Denote by  ``truthful play'' the play of CRR in which agent 2 plays according to $\Valuation{}$ and gets the bundle $\crr[2]{\Valuation{}}{\Valuation{}}{\ell}$;
denote by ``untruthful play'' the play of CRR in which agent 2 plays according to $\Valuation{}'$ and gets the bundle $\crr[2]{\Valuation{}}{\Valuation{}'}{\ell}$.
Order the items in each of these two bundles in descending order of $\Valuation{}$. Denote the resulting ordered vectors $\gamma$ and $\gamma'$ respectively, such that $v(\gamma_1) \geq v(\gamma_2)\geq \cdots$ and $v(\gamma'_1) \geq v(\gamma'_2)\geq \cdots$. Note that $|\gamma|=|\gamma'|$ (agent 2 gets the same number of items in both plays).
We now prove that $v(\gamma_t)\geq v(\gamma'_t)$ for all $t\leq|\gamma|$.

For every index $t\leq |\gamma|$, denote by $z_t$ the number of items held by agent 1 in round $t$ of agent 2, that is:
\begin{align*}
z_t := 
\begin{cases}
\min{(\capacity{1}{h},t-1)} & \text{~if~} \ell=2
\\
\min{(\capacity{1}{h},t)} & \text{~if~} \ell=1
\end{cases}
\end{align*}
Assume towards contradiction that there exists an index $t\leq |\gamma|$ s.t. $\valuation{}{\gamma'_t}>\valuation{}{\gamma_t}$ and let us look at the smallest such $t$ (corresponding to a highest valued item in $\gamma'$). 

In the truthful play, before agent 2 picks $\gamma_t$,
agents 1 and 2 together hold the $z_t+t-1$ highest-valued items; hence there are exactly $z_t+t-1$ items more valuable than $\gamma_t$.

In the untruthful play, agent 1 still plays by $v$ and thus still holds  at least $z_t$ of the $z_t+t-1$ highest-valued items.
While we do not know by which order agent 2 picks items, we do know that in the final allocation $\gamma'$, the first $t$ items are at least as valuable as $\gamma'_t$, which is by assumption more valuable than $\gamma_t$. Hence, there are at least $z_t+t$ items more valuable than $\gamma_t$; a contradiction.
\end{proof}

Since the sum of values of bundle 1 and bundle 2 is fixed, we get the following corollary:
\begin{lemma}
\label{lem:equal_val_different}
If one agent plays according to $\Valuation{}$, the worst case for this agent is that the other agent plays according to $\Valuation{}$ too. That is, for every $\Valuation{}'$ and $\ell\in \{1,2\}$:
\begin{align*}
(a)\ \ 
\fvaluation{}{\crr[2]{\Valuation{}}{\Valuation{}}{\ell}}\leq \fvaluation{}{\crr[2]{\Valuation{}'}{\Valuation{}}{\ell}}
\\
(b)\ \ 
\fvaluation{}{\crr[1]{\Valuation{}}{\Valuation{}}{\ell}}\leq \fvaluation{}{\crr[1]{\Valuation{}}{\Valuation{}'}{\ell}}
\end{align*}
\end{lemma}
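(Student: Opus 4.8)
The plan is to obtain Lemma~\ref{lem:equal_val_different} as a ``conservation of total value'' complement to Lemma~\ref{lem:equal_val_truthful}: within a single category the two agents together always receive exactly the same total value, so whenever one agent's own bundle weakly loses value, the other agent's bundle must weakly gain the same amount. First I would observe that each bundle named in the statement is the agent's \emph{own} CRR bundle, of cardinality at most her own capacity and hence feasible for her; therefore $\fvaluation{}{\cdot}$ coincides with $\valuation{}{\cdot}$ on each of these bundles, and I may argue throughout with the additive $v$ and invoke additivity freely.

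The one real step --- the one silently assumed by the remark preceding the statement --- is to prove that $\valuation{}{\crr[1]{\cdot}{\cdot}{\ell}} + \valuation{}{\crr[2]{\cdot}{\cdot}{\ell}} = \valuation{}{\Category{h}}$ for every profile of reported valuations and every first mover $\ell$. This is where Assumption~\ref{asm:feasible} is essential. I would note that in CRR an agent with remaining capacity always takes an item while items remain, so a run can terminate with items left over only if both agents are simultaneously saturated; that would force $\capacity{1}{h}+\capacity{2}{h}$ items to have been allocated while some item of $\Category{h}$ still remains, contradicting the feasibility assumption $\capacity{1}{h}+\capacity{2}{h}\ge|\Category{h}|$. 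Hence the two bundles partition $\Category{h}$, and by additivity their values sum to the constant $\valuation{}{\Category{h}}$, independent of the reported valuations and of $\ell$. I expect this to be the only genuine obstacle; it is worth flagging that the assumption cannot be dropped, since if $|\Category{h}|>\capacity{1}{h}+\capacity{2}{h}$ a deviating agent can change \emph{which} items are allocated, the total value is no longer fixed, and the corollary can fail.

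Given the invariant, both parts reduce to a single subtraction. For part~(a) I would combine the two identities $\valuation{}{\crr[1]{\Valuation{}}{\Valuation{}}{\ell}}+\valuation{}{\crr[2]{\Valuation{}}{\Valuation{}}{\ell}} = \valuation{}{\Category{h}} = \valuation{}{\crr[1]{\Valuation{}'}{\Valuation{}}{\ell}}+\valuation{}{\crr[2]{\Valuation{}'}{\Valuation{}}{\ell}}$ with Lemma~\ref{lem:equal_val_truthful}(a), which gives $\valuation{}{\crr[1]{\Valuation{}}{\Valuation{}}{\ell}}\ge\valuation{}{\crr[1]{\Valuation{}'}{\Valuation{}}{\ell}}$; subtracting from the common total yields $\valuation{}{\crr[2]{\Valuation{}}{\Valuation{}}{\ell}}\le\valuation{}{\crr[2]{\Valuation{}'}{\Valuation{}}{\ell}}$, which is exactly the claimed inequality. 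Part~(b) is the mirror image, using Lemma~\ref{lem:equal_val_truthful}(b) in place of (a). Once the total-value invariance is cleanly justified, the rest is immediate.
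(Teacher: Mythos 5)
Your proposal is correct and matches the paper's approach: the paper derives Lemma~\ref{lem:equal_val_different} as an immediate corollary of Lemma~\ref{lem:equal_val_truthful} via exactly the observation that the sum of the two bundles' values is fixed. You simply spell out the two details the paper leaves implicit --- that Assumption~\ref{asm:feasible} (i.e., $\capacity{1}{h}+\capacity{2}{h}\ge|\Category{h}|$) forces CRR to allocate all of $\Category{h}$, so the bundles partition the category and the total is the constant $\valuation{}{\Category{h}}$, and that $\fvaluation{}{\cdot}$ coincides with $\valuation{}{\cdot}$ on each agent's own (feasible) bundle --- which is a welcome tightening, not a deviation.
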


Applying the proof of Lemma \ref{lem:equal_val_truthful}, but with respect to the case where the capacities of both agents are set to be the minimum of $\capacity{1}{h},\capacity{2}{h}$, gives the following lemma as a corollary:
\begin{lemma}
	\label{lem:feasible_val_extension}
	If one agent plays according to $\vv$, she (weakly) prefers the bundle the other agent gets when playing according to $\vv$ over the bundle the ther agent gets when playing according to $\vv'\neq\vv$. That is, for every $\ell \in\{1,2\}$:\begin{align*}
	(a)\ \ 
	\fvaluation{1}{\crr[2]{v}{v}{l}} \geq \fvaluation{1}{\crr[2]{v}{v'}{l}}
	\\
	(b)\ \ 
	\fvaluation{2}{\crr[1]{v}{v}{l}} \geq
	\fvaluation{2}{\crr[1]{v'}{v}{l}} 
	\end{align*}
\end{lemma}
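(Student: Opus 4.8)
The plan is to deduce both inequalities from the coordinate-wise domination established \emph{inside} the proof of Lemma~\ref{lem:equal_val_truthful}, rather than from its statement. The guiding observation is that when agent~$1$ evaluates agent~$2$'s bundle through $\fvaluation{1}{\cdot}$, only the top $\capacity{1}{h}$ items (by $\vv$) matter; and since agent~$2$'s bundle never exceeds $\capacity{2}{h}$ items, the feasible value in fact depends on at most the top $k:=\min(\capacity{1}{h},\capacity{2}{h})$ of them. The comparison therefore lives entirely in the regime in which both agents behave as if their capacity were $k$, which is precisely the setting handled by the proof of Lemma~\ref{lem:equal_val_truthful}.

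For part~(a) I would set $\gamma:=\crr[2]{\vv}{\vv}{\ell}$, the bundle agent~$2$ receives under truthful play, and let $\gamma'$ be her bundle when she instead plays according to $\Valuation{}'$, sorting both in weakly decreasing order of $\vv$. The proof of Lemma~\ref{lem:equal_val_truthful}(b) supplies two facts: the bundles have equal size, $|\gamma|=|\gamma'|$, and $\vv[\gamma_t]\ge\vv[\gamma'_t]$ for every $t\le|\gamma|$. Now $\fvaluation{1}{\gamma}$ equals the sum of the $T:=\min(\capacity{1}{h},|\gamma|)$ largest $\vv$-values in $\gamma$, and $\fvaluation{1}{\gamma'}$ equals the sum of the $\min(\capacity{1}{h},|\gamma'|)=T$ largest in $\gamma'$; since both sums range over the same $T\le k$ coordinates, adding the coordinate-wise inequalities yields $\fvaluation{1}{\gamma}\ge\fvaluation{1}{\gamma'}$. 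Part~(b) is the mirror image: interchanging the two agents and applying Lemma~\ref{lem:equal_val_truthful}(a) gives $\fvaluation{2}{\crr[1]{\vv}{\vv}{\ell}}\ge\fvaluation{2}{\crr[1]{\Valuation{}'}{\vv}{\ell}}$ by the identical truncation.

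The step that needs care --- and the only genuine obstacle --- is that the feasible valuation on the left is taken with agent~$1$'s capacity $\capacity{1}{h}$, whereas $\gamma$ was produced under agent~$2$'s capacity $\capacity{2}{h}$, so the \emph{statement} of Lemma~\ref{lem:equal_val_truthful} (which compares whole-bundle values under agent~$2$'s own capacity) cannot be quoted verbatim. The resolution is exactly the truncation above: coordinate-wise domination holds for \emph{all} positions $t\le|\gamma|$ and the bundles are equally long, so restricting each sum to its top $T\le k$ coordinates preserves the inequality. Equivalently, rerunning the domination argument of Lemma~\ref{lem:equal_val_truthful} with both capacities capped at $k=\min(\capacity{1}{h},\capacity{2}{h})$ produces bundles that are feasible for agent~$1$, on which $\fvaluation{1}{\cdot}$ and $\vv[\cdot]$ agree; the claimed inequality is then immediate.
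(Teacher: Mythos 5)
Your proof is correct and is essentially the paper's own argument made explicit: the paper proves this lemma in one line, by rerunning the coordinate-wise domination argument from the proof of Lemma~\ref{lem:equal_val_truthful} with both capacities set to $\min(\capacity{1}{h},\capacity{2}{h})$, which is exactly your truncation of the sorted vectors $\gamma,\gamma'$ to their top $T\leq k$ coordinates. Your extra care in noting that the \emph{statement} of Lemma~\ref{lem:equal_val_truthful} cannot be quoted verbatim because of the capacity mismatch is a precision the paper glosses over, but the underlying route is the same.
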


We also use the following lemma.
\begin{lemma}
\label{lem:different_first_chooser}
The value of each agent for her own bundle when she plays first is at least her value for the other agent's bundle when the other agent plays first, and vice versa:
\begin{align*}
(a)\ \ 
\fvaluation{1}{\crr[1]{v}{v}{1}} \geq \fvaluation{1}{\crr[2]{v}{v}{2}}
\\
(b)\ \ 
\fvaluation{1}{\crr[1]{v}{v}{2}} \geq
\fvaluation{1}{\crr[2]{v}{v}{1}} 
\end{align*}
\end{lemma}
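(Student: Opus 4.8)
The plan is to exploit the fact that, since both agents share the single valuation $\vv[]$, Capped Round Robin (Algorithm~\ref{alg:crr}) selects the items of the category in one globally fixed order. First I would sort the items by non-increasing value as $g_1,g_2,\ldots$; then in \emph{every} run of CRR the $r$-th item chosen is exactly $g_r$, because each active agent always grabs the most valuable remaining item. Hence only the \emph{assignment} of $g_1,g_2,\ldots$ to the two agents depends on who chooses first. Abbreviating $\capacity{1}{h},\capacity{2}{h}$ as $k_1,k_2$, the first chooser receives $g_1,g_3,g_5,\ldots$ and the second chooser receives $g_2,g_4,\ldots$ as long as both have remaining capacity; once the agent with the smaller capacity is exhausted, the other agent sweeps up all subsequent (consecutive) items until reaching its own capacity. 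I would record this explicit description of both bundles as a preliminary observation, first padding the category with zero-valued dummy items so that at least $k_1+k_2$ items are present -- this changes neither the feasible valuations nor any relevant value, but guarantees each agent fills its capacity, so the bundle sizes are exactly $k_1$ and $k_2$.

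For part (a), both $\crr[1]{v}{v}{1}$ and $\crr[2]{v}{v}{2}$ are \emph{first-chooser} bundles: the former with first-chooser capacity $k_1$ and opponent capacity $k_2$, the latter with the two capacities swapped. Because agent $1$'s own bundle $\crr[1]{v}{v}{1}$ has at most $k_1$ items it is feasible for agent $1$, so $\fvaluation{1}{\crr[1]{v}{v}{1}}=\vv[{\crr[1]{v}{v}{1}}]$, while on the right $\fvaluation{1}{\cdot}$ retains only the $k_1$ most valuable items. I would therefore compare the bundles rank by rank: writing $(S)_{(t)}$ for the $t$-th most valuable item of $S$, the explicit description above shows that the $t$-th item of any first-chooser bundle equals $g_{2t-1}$ throughout the alternation phase, so $(\crr[1]{v}{v}{1})_{(t)}=(\crr[2]{v}{v}{2})_{(t)}=g_{2t-1}$ for every $t\le\min(k_1,k_2)$. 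For the remaining indices $\min(k_1,k_2)<t\le k_1$ (nonempty only when $k_1>k_2$), agent $1$'s own bundle contributes the extra nonnegative consecutive items $g_{k_2+t}$, whereas the right-hand bundle has only $k_2$ items and so contributes $0$; hence $(\crr[1]{v}{v}{1})_{(t)}\ge(\crr[2]{v}{v}{2})_{(t)}$ for all $t\le k_1$. Summing these inequalities over $t=1,\ldots,k_1$ gives $\vv[{\crr[1]{v}{v}{1}}]\ge\fvaluation{1}{\crr[2]{v}{v}{2}}$, which is (a).

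Part (b) is the symmetric statement for \emph{second-chooser} bundles: $\crr[1]{v}{v}{2}$ and $\crr[2]{v}{v}{1}$ are second-chooser bundles with the two capacities swapped, and the identical rank-by-rank comparison applies, now with the $t$-th alternation item equal to $g_{2t}$. The main obstacle I anticipate is the bookkeeping in the \emph{capacity-exhaustion} regime: when $k_1\neq k_2$ one agent stops early and the other collects a block of consecutive items, so the bundles are no longer a clean set of odd- or even-ranked items. The argument survives because this trailing block always lies below the alternation prefix in value, so it only helps the left-hand (own-bundle) side and is exactly what the $k_1$-cap discards on the right; the zero-valued padding removes the orthogonal nuisance of the category running out of items before capacities are met. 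I expect the single delicate point to be checking that the $k_1$-truncation on the right and the $(\le k_1)$-size of the left-hand bundle line up, but this reduces to the two short cases $k_1\le k_2$ and $k_1>k_2$ verified above.
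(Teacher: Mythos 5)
Your proposal is correct and takes essentially the same route as the paper's proof: the paper likewise observes that with identical valuations the two CRR runs differ only in the swapped capacities, and splits into the cases $k_1\le k_2$ (where $\crr[1]{v}{v}{1}=\feasible{1}{\crr[2]{v}{v}{2}}$, giving (a) with equality) and $k_1>k_2$ (where $\crr[2]{v}{v}{2}\subseteq\crr[1]{v}{v}{1}$, giving the inequality by monotonicity), with (b) handled symmetrically. Your explicit odd/even-rank description of the CRR outcome, the zero-padding, and the rank-by-rank summation simply make fully explicit the set containments that the paper asserts without detailed verification.
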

\begin{proof}
When both agents play using the same valuation, the only thing that differentiates agent $1$'s bundle when $1$ chooses first (respectively, second) from agent $2$'s bundle when $2$ chooses first (resp., second) is their capacities. 
If $\capacity{1}{h} \leq \capacity{2}{h}$, then $\crr[1]{v}{v}{1} \subseteq \crr[2]{v}{v}{2}$, and moreover, $\crr[1]{v}{v}{1}  = \feasible{1}{\crr[2]{v}{v}{2}}$. Therefore, (a) holds with equality. 
Otherwise, $\crr[2]{v}{v}{2} = \feasible{1}{\crr[2]{v}{v}{2}} \subseteq \crr[1]{v}{v}{1} $. Therefore, $\fvone[{\crr[1]{v}{v}{1}}] \geq \fvaluation{1}{\crr[2]{v}{v}{2}}$, establishing (a). Similar considerations apply to (b).
\end{proof}

With these lemmas in hand, we are ready to prove Lemma~\ref{lem:surplus}.
\begin{proof}[Proof of Lemma \ref{lem:surplus}]
We provide the proof for $i=1$; the other case is analogous.
The proof follows from the following four inequalities:
\begin{enumerate}
	\item $\fvaluation{1}{\crr[1]{\vone}{\vtwo}{1}} \geq \fvaluation{1}{\crr[1]{\vone}{\vone}{1}}$
	\item $\fvaluation{1}{\crr[2]{\vone}{\vtwo}{1}} \leq \fvaluation{1}{\crr[2]{\vone}{\vone}{1}}$ 
	\item $\fvaluation{1}{\crr[1]{\vone}{\vone}{1}} \geq \fvaluation{1}{\crr[2]{\vone}{\vtwo}{2}}$ 
	\item $\fvaluation{1}{\crr[2]{\vone}{\vone}{1}} \leq \fvaluation{1}{\crr[1]{\vone}{\vtwo}{2}}$ 
\end{enumerate}
We now prove the four inequalities above:
\begin{enumerate}
\item 
Follows by applying Lemma \ref{lem:equal_val_different}(b)
with $v := \vone$, $v' := \vtwo$, and $\ell=1$.

\item

Follows by applying Lemma \ref{lem:feasible_val_extension}(a) with $\vv := \Valuation{1}$, $\vv' := \Valuation{2}$, and $\ell=1$.

\item By applying Lemma \ref{lem:different_first_chooser}(a) with $\vv := \vone$, it follows that 
$\fvaluation{1}{{\crr[1]{\vone}{\vone}{1}}} \geq \fvaluation{1}{\crr[2]{\vone}{\vone}{2}}$. 
In addition, by applying Lemma~\ref{lem:feasible_val_extension}(a) with $\vv := \Valuation{1}$, $\vv' := \Valuation{2}$, and $\ell=2$, it follows that 
$\fvaluation{1}{\crr[2]{\vone}{\vone}{2}} \geq \fvaluation{1}{\crr[2]{\vone}{\vtwo}{2}}$. 
\item 
By applying Lemma \ref{lem:different_first_chooser}(b) with $v := v_1$, it follows that $\fvaluation{1}{\crr[2]{\vone}{\vone}{1}} \leq \fvaluation{1}{\crr[1]{\vone}{\vone}{2}} $. 
In addition, by applying Lemma \ref{lem:equal_val_different}(b) 
with $v := \vone$, $v' := \vtwo$, and $\ell=2$, it follows that 
$\fvaluation{1}{\crr[1]{\vone}{\vone}{2}} \leq \fvone[{\crr[1]{\vone}{\vtwo}{2}}]$.

\end{enumerate}

Combining the four inequalities above gives:
\begin{align*}
\surplus{1}{h}{\crr{\vone}{\vtwo}{1}} &= \vone[{\crr[1]{\vone}{\vtwo}{1}}]  - \fvaluation{1}{\crr[2]{\vone}{\vtwo}{1})} 
\\
&\geq^{1,2} \fvone[{\crr[1]{\vone}{\vone}{1}}] - \fvaluation{1}{\crr[2]{\vone}{\vone}{1}}
\\
&\geq^{3,4} \fvaluation{1}{\crr[2]{\vone}{\vtwo}{2}}-\fvone[{\crr[1]{\vone}{\vtwo}{2}}]  
\\
&= - \surplus{1}{h}{\crr{\vone}{\vtwo}{2}}.
\end{align*}
The assertion of the lemma follows.
\end{proof}

The notion of surplus allows us to treat each category as a single item, whose value for agent $i$ is $\surplus{i}{h}{\crr{\Valuation{1}}{\Valuation{2}}{i}^{h}}$.
The problem with extending this idea to three agents is that, for each category, there are $3!=6$ possible round-robin orders, so there are $6$ different potential ``surplus'' quantities. 
%The existence of EF1 allocations for three or more categories with heterogeneous capacities remains open.

\section{Base-Orderable Matroids with up to Three Agents}
\label{general_matroids}

\label{3_agents}

In this section we consider constraints that are represented by a wide class of matroids, termed {\em base-orderable} (BO) matroids.%
\footnote{The class of base-orderable matroids was first introduced by \cite{BRUALDI1968244,brualdi_1969}, but the term base-orderable appeared only later (\cite{brualdi1971induced}).}
Recall that the \emph{bases} of a matroid are its inclusion-maximal independent sets.
In the definitions below, we 
\er{use the shorthands
$S+x := S\cup \{x\}$ 
and 
$S-x := S\setminus \{x\}$,
for any set $S$ and item $x$.
}
%denote adding and removing items from sets by the shorthand operators $+$ and $-$ respectively.

\begin{definition}
Given a matroid $(M,\indSets)$ and  independent sets $I,J\in \indSets$, items $x\in I $ and $y\in J$ represent a \emph{feasible swap} if both  $I - x + y$ and  $J  - y + x$ 
are in $\indSets$.
\end{definition}

\begin{definition} [\cite{BRUALDI1968244}]
\label{def:bo}
A matroid $\matroid=(M,\indSets)$ is \emph{base-orderable (BO)} if for every two bases $I,J\in \indSets$, there exists a \emph{feasible exchange bijection}, defined as a bijection $\mu: I\leftrightarrow J$ such that for any $x\in I$, both $I -  x + \mu(x) \in\indSets$ and $J -  \mu(x) + x \in\indSets$. 
\end{definition}

This class contains many interesting matroids, including partition matroids, laminar matroids (a natural generalization of partition matroids where the categories may be partitioned into sub-categories),%
\footnote{
Formally, a \emph{laminar matroid} is defined using $\ell$ possibly-overlapping sets $C_1, C_2, \dots, C_\ell \subseteq M$ such that $\bigcup_{h=1}^\ell C_h = M$. Additionally, for every pair $h, h' \in \{1, . . . , \ell\}$, only one of the following holds:
(i) $C_h \subset C_{h'}$ or, (ii) $C_{h'} \subset C_h$ or, (iii) $C_h \cap C_{h'} = \emptyset$. Each $C_h$ has a capacity $k_h$. A set $I \subseteq M$ is independent if and only if $|I \cap C_h| \leq k_h$ for each $h \in \{1, . . . , \ell\}$.
} transversal matroids,%
\footnote{
Let $G = (L, R, E)$ be a bipartite graph, and let $\indSets = \{A \subseteq L : \exists$ an injective
function $f_A : A \rightarrow R\}$. Then $(L, \indSets)$ is a matroid, and is called a transversal matroid.
}
and more. 
\citet{bonin2016infinite} conjectures that ``almost all matroids are base-orderable''.
%That is, every matched pair can be swapped without affecting feasibility.

%\michal{Please add an appendix with the definitions of laminar matroids and transversal matroids.}
%\begin{definition}
%	\label{def:bo}
%	A matroid is base orderable (BO) if for every two bases $I,J$, there exists a bijection $\mu: I\rightarrow J$ such that for any $i\in I$: $I\setminus \{i\}\cup\{\mu(i)\}\in\indSets$ and $J\setminus \{\mu(i)\}\cup\{i\}\in\indSets$. 
%\end{definition}

%That is, every matched pair can be swapped without affecting feasibility.

When different agents have different matroids,
even when these are all partition matroids,
 an \fefone{} allocation may not exist (see Example \ref{exp:different_categories}). 
Therefore, we restrict attention to settings with a common matroid $\matroid$.
Before presenting our algorithm, we present two tools that are useful for any matroid --- BO or not:
\er{
finding a social-welfare-maximizing allocation with matroid constraints (Section \ref{sub:max-welfare}),
and extending a matroid such that every feasible partition is a partition into bases (Section \ref{sub:free-extension}).
}

\subsection{Finding a social-welfare-maximizing allocation}
\label{sub:max-welfare}
We initialize our algorithm with an allocation that maximizes the sum of agents' utilities. Such an allocation can be found in polynomial time for any common matroid constraints.
\begin{theorem}
\label{thm:swm}
For any constraints based on a common matroid, and any $n$ agents with additive valuations, it is possible to find in polynomial time, a complete feasible allocation that maximizes the sum of utilities.
\end{theorem}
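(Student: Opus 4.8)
The plan is to reduce the problem to \emph{weighted matroid intersection}, which is solvable in polynomial time given independence oracles for the two matroids involved.

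First I would build an auxiliary ground set $E = M \times [n]$, where the element $(g,i)$ encodes the statement ``item $g$ is given to agent $i$'', and assign it weight $w(g,i) = \valuation{i}{g}$. On $E$ I define two matroids. The first, $\matroid[1]$, is the direct sum of $n$ disjoint copies of the common matroid $\matroid$, one per agent: a set $S \subseteq E$ is independent in $\matroid[1]$ iff for every agent $i$ the set $\{g : (g,i) \in S\}$ is independent in $\matroid$ (a direct sum of matroids is a matroid). The second, $\matroid[2]$, is the partition matroid whose categories are the item-fibers $\{(g,1),\ldots,(g,n)\}$ for each $g\in M$, each with capacity $1$. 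A common independent set $S \in \indSets[1] \cap \indSets[2]$ then corresponds exactly to a partial allocation in which each item goes to at most one agent and each agent's bundle is independent in $\matroid$, and its weight equals the total value of that partial allocation.

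Next, to guarantee a \emph{complete} allocation, I would inflate the weights: set $w(g,i) = \valuation{i}{g} + C$ with $C > \sum_{g \in M} \max_i \valuation{i}{g}$. Since all values are non-negative, any common independent set of size $m$ has weight at least $Cm$, while any common independent set of size $<m$ has weight strictly below $Cm$; hence a maximum-weight common independent set has maximum possible cardinality. By Assumption~\ref{asm:feasible} a complete feasible allocation exists, i.e.\ there is a common independent set of size exactly $m$, so the maximum cardinality equals $m$ and the returned set is complete. Among size-$m$ sets the additive term $Cm$ is constant, so maximizing $w$ is equivalent to maximizing $\sum_i \valuation{i}{\Allocation{i}}$.

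Finally, running a classical polynomial-time weighted matroid intersection algorithm on $(\matroid[1], \matroid[2], w)$ yields the desired allocation: the independence oracle for $\matroid[1]$ reduces to independence tests in $\matroid$, and $\matroid[2]$ is trivial to test. The only conceptual step — and the crux of the argument — is recognizing this two-matroid structure together with the weight-inflation trick, which simultaneously forces completeness and preserves value-optimality; everything else is an appeal to the standard polynomial-time guarantee for weighted matroid intersection. The main thing to verify carefully is that the inflation constant $C$ is large enough to dominate all true-value differences, which the bound $C > \sum_{g} \max_i \valuation{i}{g}$ ensures.
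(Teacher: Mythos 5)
Your proposal is correct and follows essentially the same route as the paper: the paper likewise reduces to maximum-weight matroid intersection over $N\times M$, with one matroid being the agent-wise copies of the common matroid and the other a capacity-$1$ partition matroid over item-fibers, and it uses the same weight-shift trick (with constant $V = m\cdot\max_i\max_j \valuation{i}{j}$, playing the role of your $C$) to force a complete allocation before optimizing value via Edmonds' algorithm. Your bound $C > \sum_{g}\max_i \valuation{i}{g}$ is equally valid, so there is nothing to fix.
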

\begin{proof}%
\footnote{
We are grateful to Chandra Chekuri for the proof idea.
%https://cstheory.stackexchange.com/q/47539/9453
}
The problem of SW maximization with \emph{submodular} valuations is NP-hard in general, and admits constant-factor approximations \citep{vondrak2008optimal,calinescu2011maximizing}. However, in the special case of \emph{additive} valuations with matroid constraints, it can be solved in polynomial time by reduction to the \emph{maximum-weight matroid intersection} problem:
 given two matroids over the same base-set, $(Z, \indSets[1])$ and $(Z, \indSets[2])$, where each element of $Z$ has a weight, find an element of $\indSets[1]\cap \indSets[2]$ with a largest total weight.

We construct the base set $Z := N\times M$, where each pair $(i,j)\in Z$ corresponds to allocating item $j\in M$ to agent $i\in N$. 
We construct two weighted matroids over $Z$, namely $\matroid[{1}]=(Z,\indSets[{1}])$ and $\matroid[{2}]=(Z,\indSets[{2}])$.
We first describe the independent sets in both matroids, then specify the weight function.

The first matroid, $\matroid[{1}]$, represents the original matroid constraints: $S\subseteq Z$ is in $\indSets[{1}]$ iff for every agent $i$, the set of items $S_i=\{j : (i,j) \in S\}$ is an independent set in the original matroid. 
We show that $\matroid[{1}]$ is a matroid: 
(1) $\emptyset\in\indSets[{1}]$. 
(2) Downward-closed: let $S\subseteq Z$ such that $S\in\indSets[1]$. Then for every $i$, $S_i\in\indSets$. Consider a subset $S'\subseteq S$. For every $i$, $S'_i\subset S_i$. Since $\matroid$ is  downward-closed, $S'_i\in \indSets$. By the definition of  $\indSets[1]$ we conclude that $S'\in \indSets[1]$. 
(3) Augmentation property: Let $S,S'\in \indSets[1]$ such that $|S'|>|S|$. Then there must be some index $i$ for which $|S'_i|>|S_i|$. Since $S'_i,S_i\in \indSets$, it follows from the augmentation property of $\matroid$ that there exists an item $a\in S'_i\setminus S_i$ such that $ S_i\cup\{a\}\in\indSets$. Then $S\cup \{(i,a)\}\in\indSets[1]$ and the augmentation property holds.

The second matroid, $\matroid[{2}]$, is a partition matroid with $m$ categories, where category $j$ corresponds to the set $\{(1,j),\ldots,(n,j)\}$, and every category has capacity 1. This essentially ensures that every item is given to at most one agent. 
One can easily verify that every subset of $Z$ that is an independent set in both matroids represents a feasible allocation. 

Next, define the weight function $w:Z\rightarrow\mathbb{N}$. 
For every $(i,j)\in Z$, let $w((i,j))=V + \vi[j]$, where $V := m\cdot \max_i \max_j v_i(j)$.
Using Edmonds' polynomial-time algorithm \citep{edmonds2003submodular}, find a maximum-weight subset $S^*\in \indSets[{1}]\cap \indSets[{2}]$.
The construction of $w$ guarantees that
$S^*$ maximizes the number of allocated items.
Subject to this, $S^*$ maximizes the total value.

Since $\feasibleAlloc\neq\emptyset$ (by assumption), maximizing the number of allocated items ensures that all items are allocated, namely that the allocation is complete. 
Within complete allocations, maximizing the total value ensures that the returned allocation maximizes social welfare.
This concludes the proof.
\end{proof}

\subsection{Ensuring a partition into bases}
\label{sub:free-extension}
To simplify the algorithms,
we pre-process the instance to ensure that, in all feasible allocations, every agent receives a \emph{basis} of $\matroid$.
Recall that the \emph{rank} of a matroid $\matroid$ is the cardinality of a basis of $\matroid$ (all bases have the same cardinality). We denote $r := \operatorname{rank}(\matroid)$.
In the pre-processing step, we add to $M$ dummy items, valued at $0$ by all agents, so that after the addition, $|M| = n\cdot r$. This guarantees that, in every feasible allocation, every bundle contains exactly $r$ items, so it is a basis.
To ensure that the dummy items do not affect the set of feasible allocations, we use the \emph{free extension},%
\footnote{
We are grateful to Kevin Long for the proof idea at {https://math.stackexchange.com/q/4300433}.
}  defined below.
\begin{definition}
Let $\matroid=(M,\indSets)$ be a matroid with rank $r$.
The \emph{free extension} of $\matroid$ is a matroid $\matroid'=(M',\indSets')$ defined as follows (where $\newitem$ is a new item):
\begin{align*}
M' &:= M +  \newitem;
\\
\indSets' & := \indSets ~~\cup~~ \{I + \newitem ~|~ I\in \indSets, |I|\leq r-1 \}.
\end{align*}
That is: all bundles that were previously feasible remain feasible; in addition, all non-maximal feasible bundles remain feasible when the new item is added to them.
\end{definition}
The properties of the free extension are summarized below.
\begin{observation}
If the free extension of $\matroid$ is $\matroid'$, with the new item $\newitem$, then:
\begin{itemize}
\item All bases of $\matroid$ are bases of $\matroid'$. 
\item  $\operatorname{rank}(\matroid) = \operatorname{rank}(\matroid') = r$.
\item Given a feasible partition of $M$ (a partition into independent sets), where some sets in the partition are not maximal (contain less than $r$ items), one can construct a feasible partition of $M'$ by adding $\newitem$ into some non-maximal set.
\item Given a feasible partition of $M'$, one can construct a feasible partition of $M$ by removing  $\newitem$ from the set  containing it.
\item $\matroid'$ is base-orderable if and only if $\matroid$ is base-orderable.
\end{itemize}
\end{observation}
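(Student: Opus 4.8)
The plan is to verify the five bullets in turn, treating preservation of base-orderability as the crux. As a preliminary I would confirm that $\mathcal{M}'$ really is a matroid: downward-closure of $\mathcal{I}'$ is immediate by splitting on whether a subset contains $\newitem$, and augmentation follows from a short case analysis on whether $\newitem$ lies in the smaller and/or larger set, invoking augmentation in $\mathcal{M}$ together with the size bound $|I|\le r-1$ that is attached to every independent set of the form $I+\newitem$. The rank claim is then automatic: every member of $\mathcal{I}'$ is either in $\mathcal{I}$ (size $\le r$) or has the form $I+\newitem$ with $|I|\le r-1$ (again size $\le r$), while any basis $B$ of $\mathcal{M}$ lies in $\mathcal{I}'$ and has size $r$; hence $\operatorname{rank}(\mathcal{M}')=r$ and $B$ is a maximal independent set of $\mathcal{M}'$, giving the first two bullets at once. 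The two partition bullets are bookkeeping: to lift a feasible partition of $M$ having a non-maximal block $I$ (so $|I|\le r-1$), replace $I$ by $I+\newitem\in\mathcal{I}'$; conversely, deleting $\newitem$ from whichever block contains it turns a feasible partition of $M'$ into one of $M$, using downward-closure in both directions.

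The substance is the equivalence ``$\mathcal{M}'$ is BO iff $\mathcal{M}$ is BO''. The direction ($\Rightarrow$) is easy: two bases $I,J$ of $\mathcal{M}$ are also bases of $\mathcal{M}'$, so there is a feasible exchange bijection $\mu$ for them in $\mathcal{M}'$; since $\newitem\notin I\cup J$, every swapped set $I-x+\mu(x)$ and $J-\mu(x)+x$ avoids $\newitem$ and therefore lies in $\mathcal{I}$ rather than merely in $\mathcal{I}'$, so $\mu$ already witnesses base-orderability of $\mathcal{M}$. For ($\Leftarrow$) I would fix two bases $I',J'$ of $\mathcal{M}'$ and split on how many of them contain $\newitem$. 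If neither does, they are bases of $\mathcal{M}$ and the $\mathcal{M}$-bijection transfers verbatim using $\mathcal{I}\subseteq\mathcal{I}'$. If exactly one does, say $I'=I_0+\newitem$ with $|I_0|=r-1$ and $J'$ a basis of $\mathcal{M}$, I would augment $I_0$ by some $y\in J'$ to a basis $B=I_0+y$ of $\mathcal{M}$, take an $\mathcal{M}$-exchange bijection $\nu:B\to J'$, and define $\mu$ on $I'$ by $\mu(x)=\nu(x)$ for $x\in I_0$ and $\mu(\newitem)=\nu(y)$. The required memberships hold because $I_0-x+\nu(x)$ is a subset of the independent set $B-x+\nu(x)$, so it stays in $\mathcal{I}$ and has size $\le r-1$ (hence tolerates adding $\newitem$), while $J'-\nu(y)$ has size $r-1$ and so likewise tolerates $\newitem$.

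The genuinely delicate subcase is when \emph{both} bases contain $\newitem$, i.e.\ $I'=I_0+\newitem$ and $J'=J_0+\newitem$ with $|I_0|=|J_0|=r-1$; this is the main obstacle. Mapping $\newitem\mapsto\newitem$ (a trivially feasible swap), the task reduces to producing a feasible exchange bijection between the two equal-size \emph{non-maximal} independent sets $I_0,J_0$ of $\mathcal{M}$, and the side condition that each swapped block can still absorb $\newitem$ is free, since those blocks keep size $\le r-1$. This is exactly where the base-only definition of BO is not directly applicable. I would resolve it by extending $I_0,J_0$ to bases $I_0+a$ and $J_0+b$ of $\mathcal{M}$, applying base-orderability of $\mathcal{M}$ to obtain a bijection $\nu:I_0+a\to J_0+b$, and then repairing it. If $\nu(a)=b$, the restriction $\nu|_{I_0}$ already works by downward-closure. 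Otherwise, writing $\nu(a)=b'$ and $\nu(x_0)=b$, I reroute by setting $\mu(x_0)=b'$ and $\mu=\nu$ elsewhere; the two new memberships follow from $I_0+b'=(I_0+a)-a+\nu(a)\in\mathcal{I}$ and $J_0+x_0=(J_0+b)-b+x_0\in\mathcal{I}$ combined with downward-closure. Lifting this bijection to $\mathcal{M}'$ via $\newitem\mapsto\newitem$ then completes the argument, and with it the whole observation.
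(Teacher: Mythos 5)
Your proof is correct, and on most of the statement it travels the same road as the paper: the paper declares the first four bullets trivial and proves only the BO equivalence, with the same case split on which of $I',J'$ contains $x^{\text{new}}$, and your treatment of the easy direction and of the mixed case (exactly one basis containing $x^{\text{new}}$, extend $I_0$ to a basis of $\mathcal{M}$, pull back a BO bijection, reroute $x^{\text{new}}$) coincides with the paper's Case 2. The genuine divergence is the case where \emph{both} bases contain $x^{\text{new}}$. The paper handles it by replacing $I'=I+x^{\text{new}}$ with a basis $I+y$ of $\mathcal{M}$ and recycling its mixed case to get a feasible-exchange bijection $\mu: I+y \leftrightarrow J'$ \emph{in $\mathcal{M}'$}, then rerouting $x^{\text{new}}\mapsto\mu(y)$; you instead fix $x^{\text{new}}\mapsto x^{\text{new}}$ and reduce to exhibiting a feasible-exchange bijection between the equal-size non-maximal independent sets $I_0,J_0$ of $\mathcal{M}$, which you obtain by extending both to bases $I_0+a$ and $J_0+b$, applying BO of $\mathcal{M}$, and repairing the single crossing ($\nu(a)\neq b$) by the one-swap reroute $\mu(x_0)=b'$, justified via downward-closure. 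Both arguments are valid. The paper's route is shorter because it reuses the mixed case, but its verification is more delicate than it admits (``exactly as in Case 2''), since the recycled bijection lives in $\mathcal{M}'$ and may itself map elements to $x^{\text{new}}$; your route stays entirely inside $\mathcal{M}$ until a uniform final lift, and along the way establishes the standalone fact that in a BO matroid any two equal-cardinality independent sets admit a feasible-exchange bijection --- an elementary special case of the closure of base-orderability under truncation, which the paper elsewhere cites rather than proves. You also verify that $\mathcal{M}'$ is actually a matroid and spell out why, in the forward direction, the swapped sets lie in $\mathcal{I}$ rather than merely in $\mathcal{I}'$ (they avoid $x^{\text{new}}$); the paper leaves both points implicit.
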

The first four observations are trivial. We prove the fifth one in Appendix \ref{app:general}.

By Assumption \ref{asm:feasible}, our instance admits a feasible allocation. Since any feasible bundle is contained in a basis, the cardinality of every allocated bundle is at most $r$, so $|M|\leq n\cdot r$.
We construct a new instance by applying the free extension $n\cdot r - |M|$ times, 
getting a matroid $\matroid' = (M',\indSets')$ with $|M'| = n\cdot r$. 
We call the $n\cdot r - |M|$ new items \emph{dummy items}, and let all agents value them at $0$. 
\begin{observation}
The new instance satisfies the following properties.
\begin{itemize}
\item All bases of $\matroid$ are bases of $\matroid'$.
\item In every feasible allocation $(Y_1,\ldots,Y_n)$ in the new instance, $|Y_i|=r$ for all $i\in[n]$, so every $Y_i$ is a basis of $\matroid'$.
\item For every feasible allocation $(X_1,\ldots,X_n)$ in the original instance, there is a feasible allocation $(Y_1,\ldots,Y_n)$ in the new instance, where for all $i\in[n]$, $Y_i$ contains $X_i$ plus zero or more dummy items, so all agents' valuations to all bundles are identical.
\item For every feasible allocation $(Y_1,\ldots,Y_n)$ in the new instance, there is a feasible allocation $(X_1,\ldots,X_n)$ in the original instance, where for all $i\in[n]$, $Y_i$ contains $X_i$ plus zero or more dummy items, so all agents' valuations to all bundles are identical.
\item $\matroid'$ is base-orderable if and only if $\matroid$ is base-orderable.
\end{itemize}
\end{observation}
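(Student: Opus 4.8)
The plan is to prove the Observation by a short induction that reduces it to the single-element free-extension Observation, and then to concentrate on the only ingredient with genuine content: preservation of base-orderability. Recall that $\matroid'$ is obtained from $\matroid$ by performing the free extension $n\cdot r-|M|$ times; write $\matroid=\matroid^{(0)},\matroid^{(1)},\dots,\matroid^{(n\cdot r-|M|)}=\matroid'$ for the intermediate matroids, each obtained from its predecessor by adding one dummy item. The first four bullets then follow by a one-line induction on the step index, invoking the matching bullet of the free-extension Observation at each step: bases are preserved and the rank stays $r$ throughout; a feasible allocation of $\matroid^{(t)}$ lifts to one of $\matroid^{(t+1)}$ by inserting the new dummy into a non-maximal bundle and projects back by deleting it; and since $|M'|=n\cdot r$, every bundle in a feasible allocation of $\matroid'$ has size exactly $r$, hence is a basis. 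The fifth bullet chains from the single-step assertion that $\matroid^{(t+1)}$ is base-orderable iff $\matroid^{(t)}$ is, so it remains to establish that single-step claim.

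Consider therefore one free-extension step, from a matroid $\matroid$ to its one-element free extension $\matroid'$ with new item $\newitem$. I would first record that the bases of $\matroid'$ are exactly the bases of $\matroid$ together with the sets $I_0+\newitem$ for $I_0\in\indSets$ with $|I_0|=r-1$. The direction ``$\matroid'$ base-orderable $\Rightarrow$ $\matroid$ base-orderable'' is then immediate: two bases of $\matroid$ are bases of $\matroid'$, and their feasible exchange bijection only ever produces sets avoiding $\newitem$, for which membership in $\indSets'$ coincides with membership in $\indSets$. For the converse I would take two bases $B_1,B_2$ of $\matroid'$ and split on how many of them contain $\newitem$. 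If neither does, both are bases of $\matroid$ and the $\matroid$-exchange bijection transfers verbatim because $\indSets\subseteq\indSets'$. If exactly one does, say $B_1=I_0+\newitem$ and $B_2$ a basis of $\matroid$, I would use the augmentation axiom to choose $y^*\in B_2$ with $I_0+y^*$ a basis of $\matroid$, apply base-orderability of $\matroid$ to $I_0+y^*$ and $B_2$ to obtain a bijection $\lambda$, and set $\mu(\newitem):=\lambda(y^*)$ and $\mu:=\lambda$ on $I_0$; a short check (using downward-closedness to see that each $(I_0-x)+\lambda(x)$ is independent of size $r-1$, so stays independent after adding $\newitem$) shows $\mu$ is a feasible exchange bijection in $\matroid'$.

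I expect the main difficulty to be the remaining case, where both bases contain $\newitem$: here $B_1=I_0+\newitem$ and $B_2=J_0+\newitem$ with $I_0,J_0$ merely equal-size independent sets rather than bases, so the definition of base-orderability does not apply to them directly. I would reduce this to a symmetric exchange between $I_0$ and $J_0$ and treat it according to whether they span the same flat. If $\operatorname{rank}(I_0\cup J_0)=r$, then $I_0$ can be augmented by some $y^*\in J_0$ to a basis of $\matroid$, which throws the pair $I_0+y^*,\ J_0+\newitem$ back into the already-solved single-occurrence case, and composing appropriately yields the bijection for $B_1,B_2$. If instead $\operatorname{rank}(I_0\cup J_0)=r-1$, then $I_0$ and $J_0$ have a common span $F$; I would pick any $a\in M\setminus F$, so that $I_0+a$ and $J_0+a$ are both bases of $\matroid$, apply base-orderability of $\matroid$ to them, and transport the resulting bijection to $B_1,B_2$ by relabeling $a$ as $\newitem$ throughout (equivalently, one may restrict to $\matroid|F$, using that deletion preserves base-orderability, and apply base-orderability there). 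Combining the bijection on $I_0,J_0$ with $\newitem\mapsto\newitem$ and re-adding $\newitem$, which is legal since the sets involved have size $r-1$, completes this case.

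In summary, the single-step equivalence is the heart of the matter, and within it the case where both bases contain the new item is the genuine obstacle: everything else is either a direct transfer through $\indSets\subseteq\indSets'$ or a single application of base-orderability of $\matroid$ after one augmentation. Once the single step is in hand, iterating it together with the four routine bullets proves the Observation.
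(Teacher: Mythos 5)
Your proposal is correct, and on most of the statement it coincides with the paper's treatment: the paper likewise derives the first four bullets by iterating the single-element free-extension observation (stating them without proof, as routine), and proves the fifth bullet in Appendix B via the same reduction to one extension step and the same case split on which of the two bases contains $\newitem$; your Cases with zero or exactly one occurrence of $\newitem$ match the paper's Cases 1--3, including the augmentation-plus-reinsertion construction $\mu(\newitem):=\lambda(y^*)$. The genuine divergence is the case where both bases contain $\newitem$. The paper handles it uniformly and without any rank analysis: write $I'=I+\newitem$, complete $I$ to a basis $I+y$ of $\matroid$ with an \emph{arbitrary} $y\in M$, apply the already-solved one-occurrence case to the pair $(I+y,J')$, and then re-insert $\newitem$ by the Case-2 recipe ($\mu'(\newitem):=\mu(y)$, $\mu'=\mu$ on $I$). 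You instead split on $\operatorname{rank}(I_0\cup J_0)$: in the full-rank subcase you perform essentially the paper's move with $y^*\in J_0$, while in the degenerate subcase you route through an external element $a\notin F$ with a relabeling, or alternatively through minor-closure of base-orderability --- a standard fact, but one the paper never proves, so the relabeling variant is the preferable one for self-containedness (and it does check out: every pair type, including $\lambda(a)\neq a$, yields feasible exchanges after relabeling, using downward-closedness and that the relevant sets have size at most $r-1$ before $\newitem$ is added). The rank split, however, is unnecessary: your composition in the full-rank subcase never actually uses $y^*\in J_0$, so running it with any $y$ completing $I_0$ to a basis covers both subcases verbatim --- which is exactly the paper's uniform Case 4. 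In short, both arguments are sound; the paper's buys a shorter, self-contained proof, while yours costs an extra case distinction and, in one branch, an appeal to an external closure property.
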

By the above observation, 
one can assume, without loss of generality, that there are exactly $n\cdot r$ items, and consider only allocations in which each agent receives a basis of $\matroid$.
We can now use the Iterated Swaps  scheme, presented in Algorithm \ref{alg:iterated-swaps}.

\begin{algorithm}[h]
\caption{
Iterated Swaps
\label{alg:iterated-swaps}}
\begin{algorithmic}[1]
\REQUIRE {Constraints based on a base-orderable matroid $\matroid$;
$n$ agents with additive valuations;
A set $M$ of items with $|M|=n\cdot \operatorname{rank}(\matroid)$.
}
\STATE \textbf{Initialize}: 	$\xx\leftarrow$ a complete feasible SWM allocation (by Theorem \ref{thm:swm}).
\WHILE{$\xx$ is not EF1}
\STATE 
\label{step:find-agents}
Find $i,j\in N$ such that agent $i$ envies agent $j$ by more than one item.
\STATE 
Find a feasible exchange bijection $\mu: \Allocation{i}\leftrightarrow \Allocation{j}$.
\STATE
\label{step:find-item}
Find an item $g_i\in\Allocation{i}$ 
such that $\vi(\mu(g_i)) > \vi(g_i)$.
\STATE
Swap items $g_i$ and $\mu(g_i)$.
\ENDWHILE
\end{algorithmic}
\end{algorithm}

The algorithm starts by finding a feasible social welfare maximizing (SWM) allocation $\xx$, using Theorem \ref{thm:swm}.
As long as there exist agents $i,j$ that violate the EF1 condition, 
the algorithm swaps a pair of items between $i$ and $j$, such that the utility of $i$ (the envious agent) increases,
and the utility of $j$ decreases by the same amount, so the allocation remains SWM.
The process terminates with an EF1 and SWM allocation.

In the next subsections we will present two settings in which Iterated Swaps is indeed guaranteed to terminate in polynomial time with an EF1 allocation.

\subsection{Three agents with binary valuations}
\label{sub:matroid-3agents-binary}
Below, we show that Iterated Swaps finds an EF1 allocation for $n=3$ agents with heterogeneous \emph{binary} valuations.

\begin{theorem}
\label{thm:3_agents}
For identical base-orderable matroid constraints, for three agents with heterogeneous binary valuations, the Iterated Swaps algorithm (Algorithm \ref{alg:iterated-swaps}) 
finds an EF1 allocation
in polynomial time.
Moreover, the resulting allocation is also social welfare maximizing, hence Pareto-efficient.
\end{theorem}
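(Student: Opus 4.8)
The plan is to show that the \emph{while}-loop of Algorithm~\ref{alg:iterated-swaps} both preserves social-welfare maximality and is executed only polynomially many times; correctness is then immediate from the loop guard, and Pareto-efficiency follows because a social-welfare-maximizing (SWM) allocation can never be Pareto-dominated. First I would record two simplifications coming from the pre-processing of Section~\ref{sub:free-extension}: every feasible bundle is a basis of the common matroid $\mathcal{M}$, and since all agents share $\mathcal{M}$ and each $X_j$ is independent, the best feasible subset of $X_j$ for agent $i$ is $X_j$ itself, so $\hat v_i(X_j)=v_i(X_j)$. Consequently F-envy coincides with envy and F-EF1 with EF1 (and, as the dummy items are valued $0$, EF1 on the extended instance transfers back to the original one), so it suffices to reason about ordinary binary additive envy throughout.

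Next I would verify that every iteration is well defined and keeps the allocation SWM. Because $\mathcal{M}$ is base-orderable and $X_i,X_j$ are bases, Definition~\ref{def:bo} supplies a feasible exchange bijection $\mu:X_i\leftrightarrow X_j$. Since $v_i(X_j)=\sum_{x\in X_i}v_i(\mu(x))>\sum_{x\in X_i}v_i(x)=v_i(X_i)$, some $x=g_i$ satisfies $v_i(\mu(x))>v_i(x)$, and by binariness $v_i(g_i)=0$, $v_i(\mu(g_i))=1$. The feasible-swap property guarantees that after exchanging $g_i$ and $\mu(g_i)$ both bundles remain bases, so the new allocation is feasible; its welfare differs from the old one by $\big(v_i(\mu(g_i))-v_i(g_i)\big)+\big(v_j(g_i)-v_j(\mu(g_i))\big)=1+v_j(g_i)-v_j(\mu(g_i))$. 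As the current allocation is SWM this quantity cannot be positive, which (again by binariness) forces $v_j(\mu(g_i))=1$, $v_j(g_i)=0$, and a welfare change of exactly $0$. Thus each swap is a \emph{unit transfer}: $u_i$ rises by one, $u_j$ falls by one, the third agent is untouched, and SWM is maintained, so on termination the allocation is EF1 and SWM, hence Pareto-efficient.

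The remaining and main task is to bound the number of swaps by a polynomial. Here I would look for a monovariant on the utility profile $(u_1,u_2,u_3)$ with $u_k=v_k(X_k)$, noting that $u_1+u_2+u_3$ is the fixed welfare and each step is a unit transfer. The obvious candidate is $\Phi=\sum_k u_k^2$, which strictly decreases under a transfer from a richer to a poorer agent. The difficulty I anticipate — and which I expect to be the crux of the whole proof — is that in an SWM allocation the \emph{envious} agent need not be the poorer one: one can exhibit an SWM profile, even for a partition matroid, in which agent $i$ envies $j$ beyond EF1 while $u_i>u_j$, so the triggered transfer \emph{increases} $\Phi$. Hence no naive variance or leximin potential is monotone, and this is exactly the point at which the restriction to three agents must be exploited.

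To finish I would analyze the three-node feasible-envy graph directly, classifying each swap by the two utilities it involves and by the envy edges it can create: only $X_i$ and $X_j$ change, so fresh beyond-EF1 envy can appear only out of $j$ (whose utility fell) or from the third agent $k$ re-evaluating the two altered bundles. The goal is to assemble a lexicographic potential — for instance the sorted utility vector refined by $\sum_k u_k^2$ and by the number of ordered pairs violating EF1 — that strictly improves at every step and takes polynomially many values. The step I expect to be hardest is showing that, with three agents, the coupling through the single third agent cannot reactivate a given ordered pair unboundedly often; this is precisely where having only three agents (rather than four or more) is essential, and I would expect the argument to break once a fourth agent can relay envy around a longer cycle.
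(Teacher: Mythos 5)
Your setup is sound and coincides with the paper's: the reduction of F-EF1 to plain EF1 after the pre-processing (every bundle is a basis of the common matroid, so $\hat v_i(X_j)=v_i(X_j)$), the existence of a feasible exchange bijection between two bases, the extraction of a swap pair with $v_i(g_i)=v_j(g_i)=0$ and $v_i(\mu(g_i))=v_j(\mu(g_i))=1$ forced by binariness and SWM, the resulting unit transfer preserving SWM, and Pareto-efficiency at termination --- all of this is exactly the paper's ``smart swap'' lemma. You are also right that the naive quadratic potential $\sum_k u_k^2$ fails, because the envious agent may be the richer one.

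However, the proof has a genuine gap exactly where you flag it: you never exhibit a monotone, polynomially-bounded potential; you only name candidates (sorted utility vector, refined by $\sum_k u_k^2$ and the count of EF1-violating pairs) and defer the decisive step, and none of these candidates is shown (or likely) to decrease at every swap. The paper closes this with a simpler aggregate potential: the total positive envy $\Phi(\xx)=\sum_i\sum_{j\neq i}\pe{i}{j}$. A smart swap between envious $i$ and envied $j$ decreases $\pe{i}{j}$ by exactly $2$ (a beyond-EF1 violation with binary values forces $\pe{i}{j}\geq 2$), keeps $\pe{j}{i}=0$ (otherwise exchanging the two bundles would raise welfare, since after the swap $v_i(X_j')\geq v_i(X_i')$), and cannot increase $\pe{i}{k}$. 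The crux --- the idea your proposal is missing --- is that among the three remaining terms $\pe{k}{i}$, $\pe{k}{j}$, $\pe{j}{k}$, each rises by at most $1$ and \emph{no two can rise simultaneously}, by three short SWM arguments: $\pe{k}{j}$ and $\pe{j}{k}$ together form a $2$-cycle in the envy graph, contradicting SWM; $\pe{k}{i}$ and $\pe{j}{k}$ together, combined with $v_i(X_j')\geq v_i(X_i')$, yield a cycle $i\to j\to k\to i$ whose rotation strictly increases welfare; and $\pe{k}{i}$ and $\pe{k}{j}$ cannot both rise because $v_k(X_i)+v_k(X_j)$ is invariant under the swap. Hence $\Phi$ drops by at least $1$ per iteration, and since $\Phi=O(m)$ initially, the algorithm terminates after polynomially many swaps. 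Note also that this pinpoints where ``three agents'' enters, and it is not your conjectured ``no unbounded reactivation of a pair'' claim: with a single third agent, at most one third-party term increases, so $-2+1\leq -1$; with $n\geq 4$ agents there are two or more third parties $k$, each potentially contributing $+1$, and the net drop of $\Phi$ is no longer guaranteed.
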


To prove Theorem~\ref{thm:3_agents} we use the following lemma.

\begin{lemma}
\label{combinedSwap}
Consider a setting with binary valuations.
Let $\xx$ be a SWM allocation in which 
agent $i$ envies agent $j$.
Let $\mu: X_i \leftrightarrow X_j$ be a feasible-exchange bijection.
Then 
there is an item $g_i\in \Allocation{i}$ for which
$\vi[g_i]=\vj[g_i]=0$
and
$\vi[g_j]=\vj[g_j]=1$, where $g_j = \mu(g_i)$.
\end{lemma}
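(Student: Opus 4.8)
The plan is to combine two facts: the social-welfare optimality of $\xx$ constrains every individual feasible swap, while the envy of $i$ for $j$ guarantees that at least one such swap strictly helps $i$. First I would record that, after the pre-processing of Section~\ref{sub:free-extension}, both $X_i$ and $X_j$ are bases of $\mathcal{M}$, so base-orderability indeed supplies the feasible-exchange bijection $\mu$ in the lemma's hypothesis. For each $x\in X_i$, writing $g_j=\mu(x)$, the definition of a feasible exchange bijection gives $X_i-x+g_j\in\mathcal{I}$ and $X_j-g_j+x\in\mathcal{I}$; hence swapping $x$ and $g_j$ between $i$ and $j$ (leaving the remaining $n-2$ bundles untouched) produces another complete feasible allocation.

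Next I would exploit SWM optimality locally. Since $\xx$ maximizes social welfare and the swap above yields a feasible allocation, the welfare change cannot be positive:
\[
\bigl(v_i(g_j)-v_i(x)\bigr)+\bigl(v_j(x)-v_j(g_j)\bigr)\le 0,
\]
i.e. $v_i(g_j)+v_j(x)\le v_i(x)+v_j(g_j)$ for every $x\in X_i$. This single inequality will do all the work.

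Then I would use the envy to locate the right $x$. Because $\mu$ is a bijection from $X_i$ onto $X_j$, additivity gives $v_i(X_j)=\sum_{x\in X_i}v_i(\mu(x))$, so the envy hypothesis $v_i(X_i)<v_i(X_j)$ becomes $\sum_{x\in X_i}\bigl(v_i(\mu(x))-v_i(x)\bigr)>0$. Some summand is therefore strictly positive, and since valuations are binary this pins down an item $g_i:=x$ with $v_i(g_i)=0$ and $v_i(g_j)=1$, where $g_j=\mu(g_i)$. Substituting $v_i(g_i)=0,\ v_i(g_j)=1$ into the local SWM inequality gives $1+v_j(g_i)\le 0+v_j(g_j)$, that is $v_j(g_j)-v_j(g_i)\ge 1$; as both values lie in $\{0,1\}$, this leaves only $v_j(g_j)=1$ and $v_j(g_i)=0$. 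Hence $v_i(g_i)=v_j(g_i)=0$ and $v_i(g_j)=v_j(g_j)=1$, exactly as claimed.

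The only genuinely delicate point is the first step: one must be certain that a single $\mu$-swap really produces a feasible allocation, so that the SWM optimality of $\xx$ can be invoked against it; this is precisely what the feasible-exchange-bijection property guarantees, together with the fact that all other bundles are unchanged. Everything after that is forced by binarity, so I do not anticipate any further obstacle.
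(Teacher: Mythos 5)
Your proof is correct and follows essentially the same route as the paper's: additivity and the bijection $\mu$ convert the envy $v_i(X_i)<v_i(X_j)$ into a strictly positive summand $v_i(\mu(g_i))-v_i(g_i)>0$, binarity pins down $v_i(g_i)=0,\ v_i(g_j)=1$, and SWM optimality of $\xx$ against the single feasible swap forces $v_j(g_j)=1,\ v_j(g_i)=0$. The only cosmetic difference is that you state the local SWM inequality for every $x\in X_i$ before locating $g_i$, whereas the paper finds $g_i$ first and then invokes SWM; the content is identical.
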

\begin{proof}
By additivity,
\begin{align*}
v_i(X_j) &= \sum_{g\in X_j} v_i(g)
= \sum_{g\in X_i} v_i(\mu(g));
\\
v_i(X_i) &= \sum_{g\in X_i} v_i(g).
\end{align*}
Since $i$ envies $j$, $v_i(X_j)>v_i(X_i)$, so at least one term in the first sum must be larger than the corresponding term in the second sum. So there is some $g_i \in X_i$ for which:
\begin{align*}
v_i(\mu(g_i)) > v_i(g_i).
\end{align*}
Let $g_j := \mu(g_i)$.
Since valuations are binary, $v_i(g_j)>v_i(g_i)$ implies $v_i(g_j)=1$ and $v_i(g_i)=0$.
Since $\mu$ is a feasible-exchange bijection, swapping $g_i$ and $g_j$ yields a feasible allocation.
Since $\xx$ maximizes the sum of utilities among all feasible allocations, the swap cannot increase the sum of utilities; therefore, we must have $v_j(g_j)=1$ and $v_j(g_i)=0$ too.
\end{proof}

We call the exchange described by Lemma \ref{combinedSwap} a \emph{smart swap}.

\begin{lemma}
\label{envyDrop}
Let $\xx$ be a SWM allocation where  $\pe{i}{j} > 1$, and let $\xx'$ be an allocation obtained from $\xx$ by a smart swap between $i$ and $j$. Then:
\begin{enumerate}
\item $\xx'$ is a SWM allocation.
\item $\pe[\xx']{i}{j} = \pe{i}{j} - 2$;
\item $\vi[{\Allocation{j}'}] \geq \vi[{\Allocation{i}'}]$;
\item $\pe[\xx']{j}{i} = 0$.
\end{enumerate}
\end{lemma}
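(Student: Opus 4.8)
The plan is to exploit the fact that, after the pre-processing of Section \ref{sub:free-extension}, every bundle in every feasible allocation is a basis of $\matroid$. For a basis $B$ and an agent $k$ with binary valuation, downward-closedness implies that every subset of $B$ is independent, so the best feasible subset of $B$ is simply the set of $k$-desired items in $B$; hence $\fvaluation{k}{B}=\valuation{k}{B}$, i.e. feasible value coincides with ordinary value on every basis. Consequently, for any two bundles in $\xx$ or in $\xx'$, positive feasible envy reduces to ordinary envy: $\pe{i}{j}=\max(0,\,\vi[\Allocation{j}]-\vi[\Allocation{i}])$, and analogously for $\xx'$. By Lemma \ref{combinedSwap} the smart swap exchanges items $g_i\in\Allocation{i}$ and $g_j=\mu(g_i)\in\Allocation{j}$ with $\vi[g_i]=\vj[g_i]=0$ and $\vi[g_j]=\vj[g_j]=1$; since $\mu$ is a feasible exchange bijection, $\Allocation{i}'=\Allocation{i}-g_i+g_j$ and $\Allocation{j}'=\Allocation{j}-g_j+g_i$ lie in $\indSets$ and have size $r$, so they are again bases. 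The swap raises $i$'s value by exactly $1$ and lowers $j$'s by exactly $1$, leaving social welfare unchanged; as $\xx'$ is feasible and $\xx$ was SWM, $\xx'$ is SWM, which is claim~1.

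For claims 2 and 3 I would merely track these unit changes. We have $\vi[\Allocation{i}']=\vi[\Allocation{i}]+1$ and $\vi[\Allocation{j}']=\vi[\Allocation{j}]-1$, so $\pe[\xx']{i}{j}=\max(0,\,\vi[\Allocation{j}]-\vi[\Allocation{i}]-2)$. Since $\pe{i}{j}>1$ and all valuations are integer-valued, $\vi[\Allocation{j}]-\vi[\Allocation{i}]\geq 2$, so the maximum is attained at its second argument and equals $\pe{i}{j}-2$, giving claim~2. The same inequality yields $\vi[\Allocation{j}']=\vi[\Allocation{j}]-1\geq\vi[\Allocation{i}]+1=\vi[\Allocation{i}']$, which is claim~3.

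The main obstacle is claim 4, since it compares agent $j$'s valuations rather than agent $i$'s, and this is where the SWM hypothesis does real work. The key step is a per-pair optimality argument along the bijection $\mu$. For each $g\in\Allocation{i}$, the feasible-exchange property guarantees that swapping $g$ with $\mu(g)$ alone yields a feasible allocation, so by SWM the induced welfare change is nonpositive:
\[
\big(\vi[\mu(g)]-\vi[g]\big)+\big(\vj[g]-\vj[\mu(g)]\big)\leq 0,
\]
equivalently $\vi[\mu(g)]-\vi[g]\leq \vj[\mu(g)]-\vj[g]$. Summing over all $g\in\Allocation{i}$ and using that $\mu$ is a bijection onto $\Allocation{j}$ gives $\vi[\Allocation{j}]-\vi[\Allocation{i}]\leq \vj[\Allocation{j}]-\vj[\Allocation{i}]$, hence $\vj[\Allocation{j}]-\vj[\Allocation{i}]\geq\pe{i}{j}\geq 2$. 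Finally, since $\vj[\Allocation{i}']=\vj[\Allocation{i}]+1$ and $\vj[\Allocation{j}']=\vj[\Allocation{j}]-1$, we obtain $\vj[\Allocation{i}']-\vj[\Allocation{j}']=\vj[\Allocation{i}]-\vj[\Allocation{j}]+2\leq 0$, so $\pe[\xx']{j}{i}=\max(0,\,\vj[\Allocation{i}']-\vj[\Allocation{j}'])=0$, which is claim~4. I expect the only delicate point to be justifying that a single-pair swap is feasible, so that SWM can be applied term by term; this follows immediately from the definition of a feasible exchange bijection.
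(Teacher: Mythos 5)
Your proof is correct, and while claims 1--3 follow essentially the same unit-bookkeeping argument as the paper (including the use of integrality of binary values to turn $\pe{i}{j}>1$ into a margin of at least $2$), your proof of claim 4 takes a genuinely different route. The paper argues by contradiction on the \emph{new} allocation: if $\pe[\xx']{j}{i}>0$, then exchanging the two \emph{whole} bundles --- feasible because the agents share a single matroid --- would strictly raise $j$'s utility while, by claim 3, not lowering $i$'s, contradicting the SWM of $\xx'$ established in claim 1. You instead apply SWM of the \emph{original} allocation $\xx$ to each single-pair swap along the feasible-exchange bijection $\mu$, getting $\vi[{\{\mu(g)\}}]-\vi[{\{g\}}]\leq \vj[{\{\mu(g)\}}]-\vj[{\{g\}}]$ for every $g\in\Allocation{i}$, and sum over the bijection to obtain $\vj[{\Allocation{j}}]-\vj[{\Allocation{i}}]\geq \vi[{\Allocation{j}}]-\vi[{\Allocation{i}}]\geq 2$, after which the unit changes of the swap eliminate $j$'s envy directly. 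Both arguments are sound; the paper's is shorter and leans on bundle-level feasibility, which comes free from identical constraints, whereas yours is quantitative (it bounds $j$'s pre-swap margin), makes claim 4 logically independent of claims 1 and 3, and uses only the pairwise-swap feasibility that base-orderability supplies --- so it would survive even in settings where exchanging entire bundles is not feasible. Your preliminary observation that feasible value coincides with plain value on bases (so $\pe{i}{j}$ reduces to ordinary envy) is a correct formality that the paper leaves implicit in this section.
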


\begin{proof}
1. The smart swap decreases the utility of $j$ by $v_j(g_j)-v_j(g_i)=1$, and increases the  utility of $i$ by $v_i(g_j)-v_i(g_i)=1$, and does not change the utilities of other agents. So the total sum of utilities does not change.

2. After the smart swap, 
 $\vi[{\Allocation{j}'}] = \vi[{\Allocation{j}}]-\vi[g_j] = \vi[{\Allocation{j}}]-1$, 
 and
 $\vi[{\Allocation{i}'}] = \vi[{\Allocation{i}}]+\vi[g_j] = \vi[{\Allocation{i}}]+1$.
It follows that
$\pe[\xx']{i}{j}=\vi[{\Allocation{j}'}] -\vi[{\Allocation{i}'}] 
= \vi[{\Allocation{j}}]-1-(\vi[{\Allocation{i}}]+1) = \pe{i}{j}-2$.
 
3. Before the swap, we had $\pe{i}{j}\geq 2$, so
$\vi[{\Allocation{j}'}] \geq \vi[{\Allocation{i}'}]+2$.
The swap decreased the difference in utilities by 2. Therefore, after the swap, we still have $\vi[{\Allocation{j}'}] \geq \vi[{\Allocation{i}'}]$.

4.
If we had $\pe[\xx']{j}{i} > 0$, then giving $\Allocation{i}'$ to $j$
and $\Allocation{j}'$ to $i$
 would increase the utility of $j$ and not decrease the utility of $i$ (by 3), contradicting SWM.
\end{proof}

We are now ready to prove Theorem~\ref{thm:3_agents}. In the proof, when we mention a change in $\pe{i}{j}$ we refer to the change in the positive envy of agent $i$ to agent $j$ between allocations $\xx$ and $\xx'$; i.e., to $\pe[\xx']{i}{j} - \pe{i}{j}$.

\begin{proof}[Proof of Theorem \ref{thm:3_agents}] 
Let $\xx$ be a complete feasible SWM allocation. 
Let $\Phi(\cdot)$ be the following potential function:
	\[
	\Phi(\xx) := \sum_{i}\sum_{j\neq i} \pe{i}{j}.
	\]
If $\xx$ is EF1, we are done. 
Otherwise, 
%as long as $\xx$ is not EF1, we choose some pair of agents $i, j$ such that $\pe{i}{j}>1$. Since $\xx$ is SWM, $j$ does not envy $i$ (otherwise, switching $i$ and $j$'s bundles increases social welfare, a contradiction to SWM). 	
by Lemmas \ref{combinedSwap} and \ref{envyDrop}, there must exist a smart swap between $i,j$ such that the social welfare remains unchanged, $\pe{i}{j}$ drops by 2 and $\pe{j}{i}$ remains 0. Thus, $\pe{i}{j} + \pe{j}{i}$ drops by $2$.
Let us next consider the positive envy that might be added due to terms of $\Phi$ that include the third agent, deonte it by $k$.
\begin{enumerate}
\item $\pe{i}{k}$ cannot increase, as the smart swap increases $i$'s utility, while $\valuation{i}{\Allocation{k}}$ does not change.
\item $\pe{k}{i}$ increases by at most $1$: the largest possible increase in $\valuation{k}{\Allocation{i}'}$ is $1$, while $\valuation{k}{\Allocation{k}}$ does not change.
\item $\pe{k}{j}$ increases by at most $1$: the largest possible increase in $\valuation{k}{\Allocation{j}'}$ is $1$, while $\valuation{k}{\Allocation{k}}$ does not change.
\item $\pe{j}{k}$ increases by at most $1$, as this is the exact decrease in $\vj[{\Allocation{j}'}]$, while $\vj[{\Allocation{k}}]$ does not change.
\end{enumerate}
We next claim that among the terms that may increase by $1$ ($\#2,\#3,\#4$), no two of them can increase simultaneously: 
\begin{itemize}
\item $\pe{k}{j},\pe{j}{k}$ cannot increase simultaneously as this would create an envy-cycle, contradicting SWM.
\item $\pe{k}{i},\pe{j}{k}$ cannot increase simultaneously, as this together with the fact that $\vi[{\Allocation{j}'}] \geq \vi[{\Allocation{i}'}]$ contradicts SWM: shifting bundles along the cycle $i \rightarrow j \rightarrow k \rightarrow i$ strictly increases the sum of utilities.
\item $\pe{k}{i},\pe{k}{j}$ cannot increase simultaneously as the sum of $k$'s values to $i$ and $j$'s bundles is fixed, that is,  $\valuation{k}{\Allocation{i}} + \valuation{k}{\Allocation{j}}=\valuation{k}{\Allocation{i}'} + \valuation{k}{\Allocation{j}'}$. 
\end{itemize}
We conclude that in every iteration the potential function drops by at least $1$. Indeed, $\pe{i}{j}$ drops by $2$, $\pe{j}{i}$ remains 0, $\pe{i}{k}$ does not change, and among $\pe{k}{i}$, $\pe{k}{j}$, $\pe{j}{k}$ only one can increase, by at most $1$. 

As the valuations are binary, the initial value of the potential function $\Phi(\cdot)$ is bounded by $|M|=m$. At every step it drops by at least one, so the algorithm stops after at most $m$ iterations. 
\end{proof}

\subsection{Two agents with additive valuations}
\label{sub:matroid-2agents-additive}
The case of three agents with heterogeneouss additive valuations remains open.
Below, we show that for \emph{two} agents with heterogeneouss additive valuations, an EF1 allocation always exists. Suppose the agents' valuations are $v_1$ and $v_2$. Using the cut-and-choose algorithm, we can reduce the problem to the case of \emph{identical} valuations:
\begin{itemize}
\item Find an allocation that is EF1 for two agents with identical valuation $v_1$.
\item Let agent $2$ pick a favorite bundle (the allocation is envy-free for $2$).
\item Give the other bundle to agent $1$ (the allocation is EF1 for $1$).
\end{itemize}
It remains to show how to find an EF1 allocation for agents with identical valuations.
\citet{biswas2018fair}[Section 7 in the full version] presented an algorithm that finds an EF1 allocation for $n$ agents with identical valuations, with constraints based on a \emph{laminar} matroid --- a special case of a BO matroid.
In fact, their algorithm can be both simplified and extended to BO matroids
using our pre-processing step and the Iterated Swaps scheme.
The main idea is that, in Step \ref{step:find-item}, we have to find a pair of items with a sufficiently large value-difference. The following lemma is proved by \citet{biswas2018fair}[p.18].
\begin{lemma}
Consider a setting with identical additive valuations $v$.
Let $\xx$ be an allocation in which 
agent $i$ envies agent $j$ by more than one item.
Let $\mu: X_i \leftrightarrow X_j$ be a feasible-exchange bijection.
Then 
there is an item $g_i\in \Allocation{i}$ for which
\begin{align*}
v(g_j) - v(g_i) \geq v(\Allocation{j})/m^2,
\end{align*}
where $g_j = \mu(g_i)$.
\end{lemma}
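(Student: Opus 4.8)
The plan is to combine a one-line bijection identity with a quantitative reading of the ``envy by more than one item'' hypothesis, and then finish by pigeonhole. First I would record what the bijection gives us. Since $\mu$ maps $\Allocation{i}$ bijectively onto $\Allocation{j}$, additivity of $v$ yields
\begin{align*}
\sum_{g\in \Allocation{i}}\bigl(v(\mu(g)) - v(g)\bigr) = \sum_{g\in\Allocation{i}} v(\mu(g)) - \sum_{g\in\Allocation{i}} v(g) = v(\Allocation{j}) - v(\Allocation{i}),
\end{align*}
using that $\{\mu(g) : g\in\Allocation{i}\} = \Allocation{j}$. In particular $|\Allocation{i}| = |\Allocation{j}|$.

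Next I would convert the hypothesis into a lower bound on $v(\Allocation{j}) - v(\Allocation{i})$. ``Envy beyond one item'' means EF1 fails for the ordered pair $(i,j)$: removing any single item from $\Allocation{j}$ still leaves agent $i$ envious. Applying this to the most valuable item $g^{*} := \argmax_{g\in\Allocation{j}} v(g)$ gives $v(\Allocation{i}) < v(\Allocation{j}) - v(g^{*})$, and hence
\begin{align*}
v(\Allocation{j}) - v(\Allocation{i}) > v(g^{*}) = \max_{g\in\Allocation{j}} v(g) \geq \frac{v(\Allocation{j})}{|\Allocation{j}|} \geq \frac{v(\Allocation{j})}{m},
\end{align*}
where the penultimate step is the standard max-at-least-average bound and the last uses $|\Allocation{j}| \le m$. (This also shows $v(\Allocation{j})>0$, so the target bound is a genuinely positive quantity.)

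Finally I would apply pigeonhole to the identity from the first step: the $|\Allocation{i}|$ differences $v(\mu(g)) - v(g)$ sum to $v(\Allocation{j}) - v(\Allocation{i})$, so at least one of them is at least the average, giving
\begin{align*}
\max_{g\in\Allocation{i}}\bigl(v(\mu(g))-v(g)\bigr) \geq \frac{v(\Allocation{j})-v(\Allocation{i})}{|\Allocation{i}|} > \frac{v(\Allocation{j})/m}{|\Allocation{i}|} \geq \frac{v(\Allocation{j})}{m^2},
\end{align*}
the last inequality using $|\Allocation{i}| = |\Allocation{j}| \le m$. Taking $g_i$ to be a maximizer and $g_j := \mu(g_i)$ yields the claimed item, and the strict inequality comfortably implies the asserted ``$\geq$''.

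There is no serious obstacle here: the argument is two successive averaging steps, and this is exactly where the two factors of $m$ in the denominator come from — one factor from comparing the maximum item value in $\Allocation{j}$ against its average (this is what the EF1 violation buys us), and one factor from comparing the best swap-difference against the average swap-difference across the bijection. The only point worth stating carefully is that feasibility of the swap plays no role in the bound itself; it is guaranteed automatically by $\mu$ being a feasible-exchange bijection, and is used only later when this item is fed into the swap step of Iterated Swaps.
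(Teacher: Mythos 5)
Your proof is correct and follows essentially the same route as the argument the paper cites from \citet{biswas2018fair}: violating EF1 forces $v(\Allocation{j})-v(\Allocation{i})$ to exceed the maximum item value in $\Allocation{j}$, which is at least $v(\Allocation{j})/m$ by averaging, and a second averaging over the bijection's $|\Allocation{i}|\leq m$ swap-differences yields the $m^2$ bound. Your closing remark is also on point: feasibility of the swap comes for free from $\mu$ being a feasible-exchange bijection and plays no role in the quantitative estimate.
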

Choosing such an item $g_i$ in Step \ref{step:find-item} ensures that, with each swap, the value of agent $j$ (the envied agent) drops by a multiplicative factor of at least $(1-1/m^2)$. 
Additionally, in Step \ref{step:find-agents}, the envious agent $i$ is chosen such that $v(\Allocation{i})$ is smallest, and the envied agent $j$ is chosen such that $v(\Allocation{j})$ is largest among the agents that agent $i$ envies by more than one item.
This guarantees that each agent can be envied a polynomial number of times, and the algorithm terminates after polynomially many iterations.

\subsection{Non base-orderable matroids}
The Iterated Swaps technique may fail for matroids that are not base-orderable, even for two agents with identical binary valuations. The ``weak link'' is Lemma \ref{combinedSwap}, as the following example shows.

\begin{example}
Consider the \emph{graphical matroid on $K_4$} --- the clique on four vertices.
Denote the vertices of $K_4$ by $1, 2, 3, 4$ and its edges by $E = \{12, 13, 14, 23, 24, 34\}$. 
The $K_4$ graphical matroid is a matroid over the ground-set $E$, whose independent sets are the trees in $K_4$.
Consider the two bases $\{12,23,34\}$ (thick) and $\{24,41,13\}$ (thin):
\begin{center}
\begin{tikzpicture}[scale=0.8]
\draw (0,0) -- (4,0) -- (4,4) -- (0,4) -- (0,0) -- (4,4) -- (4,0) -- (0,4);
\draw (0,0)[fill] circle [radius = 0.1];
\draw (0,4)[fill] circle [radius = 0.1];
\draw (4,0)[fill] circle [radius = 0.1];
\draw (4,4)[fill] circle [radius = 0.1];
\node at (-.3,-.3) {1};
\node at (-.3,4.3) {2};
\node at (4.3,-.3) {4};
\node at (4.3,4.3) {3};
\draw (0,0) -- (4,0) -- (4,4) -- (0,4) -- (0,0) -- (4,4) -- (4,0) -- (0,4);
\draw[line width=2mm] (0,0) -- (0,4) -- (4,4) -- (4,0);
\end{tikzpicture}
\end{center}
The only feasible swap
for $12$ is with $14$, and similarly the only feasible swap
for $34$ is with $14$, so there is no feasible-exchange bijection.

Suppose now that the agents have identical valuations, as in the following table:

\begin{table}[h]
\centering
\begin{tabular}{|c||c|c|}
\hline
Element (edge of $K_4$) & Value \\
\hline\hline
12 & 0 \\
23 & 1 \\
34 & 0 \\
\hline
13 & 1 \\
14 & 0 \\
24 & 1 \\
\hline
\end{tabular}
%\caption{
%\small
%\label{tab:non-bo-matroids}
%An example where the constraints are based on a non-BO matroid --- the $K_4$ graphical matroid.
%}
\end{table}
Note that, with identical valuations, all allocations are SWM.
Consider the allocation in which Alice holds the first three elements and Bob holds the last three elements. Then Alice envies Bob, but there is no swap that increases Alice's utility while keeping the bundles of both agents feasible. 
Moreover, suppose there are two copies of $K_4$, and in both copies the allocation is the same as above. Then, Alice envies Bob by two items, but there is no feasible single-item swap that can reduce her envy.
Thus, although an EF1 allocation exists, it might not be attainable by single-item swaps from an arbitrary SWM allocation. \qed
\end{example}

\section{Future Directions}
Our analysis and results suggest the following open problems.
%We conclude with a list of open questions that lie immediately beyond the ones settled here.
\begin{enumerate}
\item Consider a setting with $n$ agents with additive heterogeneous valuations, partition matroids with heterogeneous capacities, and \emph{three} or more categories. Does an EF1 allocation exist? 
(\S \ref{partition_warmup} handles at most two categories).
\item Consider a setting with  $n$ agents with additive identical valuations. Is there a class of matroids, besides partition matroids with the same categories (Section  \ref{sec:identical-valuations}), for which an EF1 allocation exists even when the constraints are heterogeneous?
\item Consider a setting with  $n$ agents with 
binary valuations, partition matroids with heterogeneous capacities, and the capacities may be \emph{two} or more. Does an EF1 and Pareto-efficient allocation exist?
(\S \ref{partition-binary-pe} handles capacities in $\{0,1\}$).
\item Consider a setting with  \emph{three} or more  agents with heterogeneous additive valuations and partition matroids with heterogeneous capacities. Does an EF1 allocation exist?
(\S \ref{2_agents_sec} handles two agents).
%\item There are \emph{two agents} (see Section \ref{general_matroids}).
\item Consider a setting with  \emph{four} or more agents with binary valuations and BO matroid constraints, 
or 
even 
\emph{three} agents with binary valuations and general matroid constraints,
or 
three agents with \emph{additive} valuations and BO matroid constraints. 
Does an EF1 allocation exist?
(\S\ref{general_matroids} requires three agents, binary valuations, and BO matroids).
\end{enumerate}
Another interesting direction is  extending our results to allocation of \emph{chores} (items with negative utilities) in addition to goods.

\section*{Acknowledgments}
Amitay Dror and Michal Feldman received funding from the European Research Council (ERC) under the European Union’s
Horizon 2020 research and innovation program (grant agreement No. 866132), and the Israel Science Foundation
(grant number 317/17).
Erel Segal-Halevi received funding from the Israel Science Foundation (grant number 712/20).

We are grateful to 
Jonathan Turner, 
Jan Vondrak, 
Chandra Chekuri, % https://cstheory.stackexchange.com/q/47539
Tony Huynh, % https://mathoverflow.net/q/374631
Yuval Filmus, % https://cs.stackexchange.com/q/62313
Kevin Long, % https://math.stackexchange.com/q/4300433
Siddharth Barman,
Arpita Biswas,
and anonymous reviewers of AAAI 2021
for their invaluable comments.

\newpage
\appendix
\section*{APPENDIX}

\section{Appendix for Section \ref{model_and_preliminaries}}
\label{app:model}
\label{sub:fefone}
Recall the definition of our main fairness notion from \S\ref{Fairness_Notions}:
\begin{quote}
An allocation 
$\xx{}$ is \emph{\fefone} iff for every $i,j \in N$,
there exists a subset $Y \subseteq  \Allocation{j}$ with $|Y|\leq 1$, such that $\valuation{i}{\Allocation{i}} \geq \fvaluation{i}{\Allocation{j}\setminus Y} =
\valuation{i}{\best_i(\Allocation{j}\setminus Y)}  $.
\end{quote}

This definition compares agent $i$'s bundle to $\Allocation{j}$ after first removing the most valuable item $g$ from $\Allocation{j}$, and then considering the most valuable feasible subset within $\Allocation{j} \setminus \{g\}$.

Alternatively, we could first consider the most valuable subset of $\Allocation{j}$, and then remove the most valuable item from this subset, yielding the following definition:

\begin{definition}[weakly \fefone]
An allocation 
$\xx{}$ is \emph{weakly \fefone} iff for every $i,j \in N$,
there exists a subset $Y \subseteq  \best_i(\Allocation{j})$ with $|Y|\leq 1$, such that $\fvaluation{i}{\Allocation{i}} \geq \valuation{i}{\best_i(\Allocation{j})\setminus Y}$.
\end{definition}

It is easy to see that every F-EF1 allocation is weakly F-EF1:
 if $\xx$ is \fefone, there exist $|Y|\leq 1$, $Y \subseteq  \Allocation{j}$ such that $\fvaluation{i}{\Allocation{i}} \geq \fvaluation{i}{\Allocation{j}\setminus Y}$. 
Thus,
$$
\fvaluation{i}{\Allocation{i}} \geq \fvaluation{i}{\Allocation{j}\setminus Y} = \vi[{\feasible{i}{\Allocation{j}\setminus Y}}]
=\vi[{\argmax_{T\subseteq \Allocation{j} \setminus Y, \ T\in \indSets[i]} \valuation{i}{T}}] \geq \vi[\feasible{i}{\Allocation{j}}\setminus Y],
$$
where the last inequality follows from the fact that $\feasible{i}{\Allocation{j}}\setminus Y \subseteq \Allocation{j} \setminus Y$.

%\erel{The inequality signs are confusing --- first $\geq$ and then $\leq$.}.

However, the converse is not true; i.e., weakly \fefone{} does not imply \fefone.
Consider a uniform matroid and two agents Alice and Bob, with two items worth 1 to both agents, and let $\capacity{A}{}=1, \capacity{B}{}=2$. The allocation $\xx$ that gives both items to Bob is Weakly \fefone{} but not \fefone{}. 
Indeed, for every good $g \in \Allocation{B}$, 
$$\fvaluation{A}{\feasible{A}{\Allocation{B}}\setminus\{g\}} = \fvaluation{A}{\emptyset} \leq \fvaluation{A}{\Allocation{A}}.
$$
Therefore, $\xx$ is weakly \fefone.
On the other hand, for every $g \in \Allocation{B}$, 
$$\fvaluation{A}{\feasible{A}{\Allocation{B}\setminus\{g\}}} = 1 > \fvaluation{A}{\Allocation{A}}.
$$
Therefore, $\xx$ is not \fefone.
\begin{table}[h]
\centering
\begin{tabular}{|c||c|c|}
\hline
Capacities & Alice & Bob\\
\hline\hline
$\capacity{A}{}=1$ &  & 1,1\\
$\capacity{B}{}=2$&  & 1,1\\
 \hline
\end{tabular}
\caption{
\small
\label{tab:weak_FEF1}
An allocation that is weakly \fefone{} but not \fefone{}.
}
\end{table}

All algorithms in this paper return \fefone{} allocations, thus weakly \fefone{} as well.
Conversely, all impossibility results in this paper (see Section \ref{sec:impossibility}) continue to hold also with respect to weakly \fefone{} too: this is obvious in \S\ref{sub:partition-mnw}, 
\S\ref{sub:matching-ef1}
and 
\S\ref{sub:uniform-efx}
since they use identical constraints, for which all three fairness notions coincide.
In \S\ref{sub:different_categories} (which uses different partition-matroid constraints), it is easy to verify that the unique feasible allocation is not weakly \fefone{}.

\section{Appendix for Section \ref{general_matroids}}
\label{app:general}

\begin{lemma}
Let $\matroid = (M,\indSets)$ be a matroid, 
$\matroid' = (M',\indSets')$ be its free extension with new item $\newitem$.
Then  $\matroid'$ is base-orderable if and only if $\matroid$ is base-orderable.
\end{lemma}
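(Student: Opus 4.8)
The plan is to prove both implications by analyzing feasible exchange bijections (Definition~\ref{def:bo}) between pairs of bases of $\matroid'$. The first step is to pin down the bases of $\matroid'$: since $\operatorname{rank}(\matroid')=r$ and the only independent sets of $\matroid'$ outside $\indSets$ have the form $I+\newitem$ with $I\in\indSets$ and $|I|\le r-1$, the bases of $\matroid'$ are exactly the bases of $\matroid$ together with the sets $I+\newitem$ where $I\in\indSets$ and $|I|=r-1$. The easy direction ($\matroid'$ base-orderable $\Rightarrow$ $\matroid$ base-orderable) is then immediate: two bases $B_1,B_2$ of $\matroid$ are also bases of $\matroid'$, so base-orderability of $\matroid'$ gives a feasible exchange bijection $\mu$; since every set $B_1-x+\mu(x)$ and $B_2-\mu(x)+x$ avoids $\newitem$, and a subset of $M$ lying in $\indSets'$ must already lie in $\indSets$, the same $\mu$ witnesses base-orderability of $\matroid$.

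For the converse I would take two bases $B_1,B_2$ of $\matroid'$ and split into three cases by how many contain $\newitem$. If neither does, they are bases of $\matroid$ and a bijection from base-orderability of $\matroid$ works verbatim since $\indSets\subseteq\indSets'$. The interesting case is when exactly one, say $B_1=I+\newitem$ with $|I|=r-1$, contains $\newitem$ while $B_2=J$ is a base of $\matroid$. Here I would augment: as $|I|<|J|$ there is $y\in J\setminus I$ with $K:=I+y\in\indSets$, a base of $\matroid$. Base-orderability of $\matroid$ gives a feasible exchange bijection $\nu:K\leftrightarrow J$, and the key trick is to define $\mu:I+\newitem\to J$ by $\mu(\newitem):=\nu(y)$ and $\mu(x):=\nu(x)$ for $x\in I$, i.e. to relabel the $\nu$-preimage $y$ as the new item. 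This is a bijection, and its two defining conditions follow from $\nu$ together with downward-closure and the free-extension clause (adding $\newitem$ to a non-maximal independent set stays feasible): e.g. $B_1-\newitem+\mu(\newitem)=I+\nu(y)=K-y+\nu(y)\in\indSets$, while $B_1-x+\mu(x)=\big((I-x)+\nu(x)\big)+\newitem$ is feasible because $(I-x)+\nu(x)\subseteq K-x+\nu(x)\in\indSets$ has size $r-1$.

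The remaining case, where both $B_1=I+\newitem$ and $B_2=J+\newitem$ contain $\newitem$ (so $|I|=|J|=r-1$), is where I expect the main obstacle. Mapping $\newitem\mapsto\newitem$ reduces the task to producing a feasible exchange bijection $\rho:I\to J$ between two \emph{non-maximal} independent sets of equal size $r-1$, which Definition~\ref{def:bo} only guarantees for bases. The clean route is to pass to the rank-$(r-1)$ truncation of $\matroid$, in which $I$ and $J$ \emph{are} bases: base-orderability is closed under truncation \citep{bonin2016infinite}, so the truncation supplies $\rho$, and since independence in the truncation is exactly independence in $\matroid$ at sizes $\le r-1$, the sets $(I-x)+\rho(x)$ and $(J-\rho(x))+x$ genuinely lie in $\indSets$; appending $\newitem$ keeps them in $\indSets'$ by the free-extension clause. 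Thus the only nontrivial external ingredient is closure of base-orderability under truncation; everything else is bookkeeping with the free-extension definition, and once $\rho$ (and, in the mixed case, $\nu$) is in hand the verifications are routine.
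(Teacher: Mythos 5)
Your proof is correct, and it coincides with the paper's own argument in overall structure: the same characterization of the bases of $\matroid'$, the same easy direction, the same case split by which bases contain $\newitem$, and the same relabeling trick in the mixed case (augment $I$ to a base $I+y$ of $\matroid$, take a feasible-exchange bijection $\nu:I+y\leftrightarrow J$, and set $\mu(\newitem):=\nu(y)$), with verification via downward-closedness plus the free-extension clause, exactly as in the paper's Case 2. Where you genuinely diverge is the case in which \emph{both} bases contain $\newitem$: the paper stays self-contained by bootstrapping its mixed case --- it applies Case 3 to the base $I+y$ of $\matroid$ and the base $J'=J+\newitem$ of $\matroid'$ to get a feasible-exchange bijection in $\matroid'$, and then performs the relabeling $y\mapsto\newitem$ a second time --- whereas you map $\newitem\mapsto\newitem$ and reduce to producing a feasible-exchange bijection between the two rank-$(r-1)$ independent sets $I$ and $J$, which you obtain from the rank-$(r-1)$ truncation of $\matroid$ together with the known fact that base-orderability is closed under truncation. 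Your route is more modular and the resulting Case 4 verification is essentially trivial (every set produced has size at most $r-1$, so appending $\newitem$ is always legal), but it imports an external theorem about truncations; the paper's route needs no outside ingredient beyond base-orderability of $\matroid$ itself, at the cost of a slightly more delicate second relabeling whose feasibility checks must be redone inside $\matroid'$. Both are sound; your appeal to augmentation to choose $y\in J\setminus I$ is a harmless strengthening of the paper's ``any base containing $I$,'' and the conditions you label routine (the $J-\mu(x)+x$ side, and $J-\nu(y)+\newitem$ via downward-closedness) do go through by the same pattern you exhibit.
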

\begin{proof}
If $\matroid'$ is base-orderable, then every two bases of $\matroid'$ have a feasible-exchange bijection.
Since all bases of $\matroid$ are bases of $\matroid'$, the same holds for $\matroid$ too.

Conversely, suppose $\matroid$ is base-orderable,
and let $I', J'\in \indSets'$ be two bases of $\matroid'$. We consider several cases.

\emph{Case 1:} Both $I'$ and $J'$ do not contain $\newitem$. Then both are bases of $\matroid$, so they have a feasible-exchange bijection.

\emph{Case 2:} $I'$ contains $\newitem$ while $J'$ does not. So $I' = I +  \newitem$, where $I\in \indSets$,
and $J'$ is a basis of $\matroid$. 
Let $I + y$ be any basis of $\matroid$ that contains $I$, where $y\in M$.
Since $\matroid$ is BO, there is a feasible-exchange bijection $\mu: I + y \leftrightarrow J'$.
Define a bijection $\mu': I + \newitem \leftrightarrow J'$ by: 
\begin{align*}
\mu'(x) &= \mu(x) ~~ \text{~for~} x \in I;
\\
\mu'(\newitem) &= \mu(y).
\end{align*}
We now show that $\mu'$ is a feasible-exchange bijection.
\begin{itemize}
\item For all $x\in I$, we have
\begin{align*}
(I + \newitem) - x + \mu'(x)
=
(I - x + \mu(x)) + \newitem.
\end{align*}
Since $\mu$ is a feasible-exchange bijection, 
$(I + y) - x + \mu(x) \in \indSets$.
By downward-closedness,
$I - x + \mu(x) \in \indSets$.
By definition of the free extension,
$(I - x + \mu(x))  + \newitem \in \indSets'$.
Additionally, 
\begin{align*}
J' - \mu'(x) + x 
=
J' - \mu(x) + x.
\end{align*}
Since $\mu$ is a feasible-exchange bijection,  the latter set is in $\indSets$, which is contained in $\indSets'$.
\item For $x=\newitem$, 
we have
\begin{align*}
(I + \newitem) - \newitem + \mu'(\newitem)
=
I + \mu(y)
=
(I + y) - y + \mu(y).
\end{align*}
Since $\mu$ is a feasible-exchange bijection, the latter set is in $\indSets$, which is contained in $\indSets'$.
Additionally, 
\begin{align*}
J' - \mu'(\newitem) + \newitem
=
J' - \mu(y) + \newitem.
\end{align*}
Since $\mu$ is a feasible-exchange bijection, 
$J' - \mu(y) + y \in \indSets$.
By downward-closedness,
$J' - \mu(y) \in \indSets$.
By definition of the free extension,
$(J' - \mu(y)) + \newitem \in \indSets'$.
\end{itemize}
Therefore, a feasible-exchange bijection exists for $I'$ and $J'$.

\emph{Case 3:} $J'$ contains $\newitem$ while $I'$ is a basis of $\matroid$. This case is analogous to Case 2.

\emph{Case 4:} both $I'$ and $J'$ contain $\newitem$. Similarly to Case 2, we write $I'=I+\newitem$ and find a basis $I + y$ of $\matroid$.
Let $\mu: I + y \leftrightarrow J'$ be a feasible-exchange bijection guaranteed by Case 3 above, between the bases $I + y$  and $J'$ of $\matroid'$.
Now, a bijection $\mu': I + \newitem \leftrightarrow J'$ can be defined exactly as in Case 2.
\end{proof}

\newpage

\vskip 0.2in
\bibliographystyle{theapa}
\bibliography{ref}

\end{document}